\let\IEEEproof\proof
\let\IEEEendproof\endproof
\let\proof\@undefined
\let\endproof\@undefined
\let\proof\IEEEproof
\let\endproof\IEEEendproof
\newcommand{\asexp}[1]{\gamma^{\alpha} \bigl[ 1 - i \beta \sgn(#1) \Phi(#1) \bigr] |#1|^{\alpha}}
\begin{document}

\title{\Large\bfseries On Properties of the Support of
  Capacity-Achieving Distributions for Additive Noise Channel Models
  with Input Cost Constraints\thanks{This work was supported by AUB's
    University Research Board, and the Lebanese National Council for
    Scientific Research (CNRS-L)}}
 
\author{
  \authorblockN{Jihad Fahs, Ibrahim Abou-Faycal} \\
  \authorblockA{Dept.\ of Elec.\ and Comp.\ Eng.,
    American University of Beirut \\
    Beirut 1107 2020, Lebanon \\
    {\tt \{jjf03, Ibrahim.Abou-Faycal\}@aub.edu.lb}}
}

\maketitle


\begin{abstract}

  We study the classical problem of characterizing the channel
  capacity and its achieving distribution in a generic fashion.  We
  derive a simple relation between three parameters: the input-output
  function, the input cost function and the noise probability density
  function, one which dictates the type of the optimal input. In
  Layman terms we prove that the support of the optimal input is
  bounded whenever the cost grows faster than a ``cut-off'' rate equal
  to the logarithm of the noise PDF evaluated at the input-output
  function. Furthermore, we prove a converse statement that says
  whenever the cost grows slower than the ``cut-off'' rate, the optimal
  input has necessarily an unbounded support.  In addition, we show
  how the discreteness of the optimal input is guaranteed whenever the
  triplet satisfy some analyticity properties. We argue that a
  suitable cost function to be imposed on the channel input is one
  that grows similarly to the ``cut-off'' rate.

  Our results are valid for any cost function that is
  super-logarithmic. They summarize a large number of previous channel
  capacity results and give new ones for a wide range of communication
  channel models, such as Gaussian mixtures, generalized-Gaussians and
  heavy-tailed noise models, that we state along with numerical
  computations.
  \begin{flushleft}
    {\bf Keywords: Channel capacity, memoryless channels, non-linear
      channels, input cost function, logarithmic cost, heavy-tailed
      noise, alpha-stable, Middleton class B, Gaussian mixtures, convex optimization,
      Karush-Kuhn-Tucker conditions, discrete inputs.}
  \end{flushleft}
\end{abstract}


\section{Introduction}
\label{sc:Introduction}

In communication systems design, a key engineering objective is to
build systems that operate close to channel capacity.  Needless to say
that this quantity, as defined by Shannon~\cite{Sha48_1,Sha48_2} in
his pioneering work, is the cutoff value which delimits the achievable
region for ``reliable'' communications. Clearly, the channel capacity
and how it can be achieved are intimately related to the channel
model. Despite the well-known capacity results for discrete memoryless
channels, closed-form capacity expressions are rarely found in the
literature for continuous ones. The most well-understood --and perhaps
important-- continuous channel is the linear Additive White Gaussian
Channel (AWGN) subjected to an average power constraint. This AWGN
model was studied by Shannon and may be seen as an instance of the
generic real, deterministic and memoryless discrete-time additive
noise model of the form:
\begin{equation*}
  Y = f(X) + N,
\end{equation*}
where $Y \in \Reals$ is the channel output, and the channel input $X
\in \set{X} \in \Reals$ satisfies an average cost constraint of the
form $\E{\mathcal{C}\left(\left|X\right|\right)} \leq A$, for some $A
> 0$. Naturally $X$ is assumed to be independent of the additive noise
$N$.

In the literature, multiple instances of such channel models were
investigated by making variations to the Shannon setup in the
following aspects:
\begin{itemize}
\item \underline {\em The input-output relationship $f(\cdot)$}: While
  Shannon considered a deterministic linear input-output relationship,
  many studies assumed a non-deterministic relationship~\cite{IA01,
    MK04, Nuriyev2005, Chung06} or generally a non-linear
  deterministic one~\cite{fahsj}.
 
\item \underline {\em The input constraint or cost function
  $\mathcal{C}(\cdot)$}: One of the main reasons of the popularity of
  the second moment constraint $\E{X^2}$ --which corresponds to a cost
  function $\mathcal{C}(x) = x^2$, is that it represents the average
  power of the discrete time transmitted signal which is equal to the
  average power of the corresponding white continuous process assuming
  that the transmitted signals are square integrable. Nevertheless,
  other input constraints were studied starting with
  Smith~\cite{SMITH71} who considered peak power constraints and a
  combination of peak and average power constraints.
  More recently, the capacity of Gaussian Channels with duty cycle and
  average power constraints was studied in~\cite{Zhang2011} .

\item \underline {\em The noise distribution}: Though Gaussian
  statistics of the noise can be motivated by the Central Limit
  Theorem (CLT), it also has an appealing property of being the worst
  case noise from an entropy perspective among finite second moment
  Random Variables (RV)s. Nevertheless, Non-Gaussian average power
  constrained communication channels have some applications and their
  channel capacities were investigated under a general setup in the
  work of Das~\cite{Das} where the noise is assumed to have a finite
  second moment, a condition that was not imposed on the non-Gaussian
  noise distributions in~\cite{Fahs2}.

\item \underline {\em Combinations of more than one aspect} were also
  considered in the literature.  Smith~\cite{SMITH71} extended his
  capacity results for the peak power constrained Gaussian channel to
  non-Gaussian ones where the noise statistics are Gaussian
  like. Later, Tchamkerten~\cite{Aslan} considered a scalar additive
  channel whose input is amplitude constrained and for which the
  additive noise is assumed to satisfy some general properties however
  not necessarily having a finite second moment. Lately, Fahs and
  Abou-Faycal~\cite{IA10,fahsj} investigated non-linear Gaussian
  channels under a general setup of input constraints such as even
  moments, compact support constraints and a combination of both
  types. Finally, channel capacity under Fractional Order Moments
  (FOM) of the form $\E{ |X|^{r} } \leq A$, for some $A > 0$, $r > 0$
  was characterized under a symmetric alpha-stable additive
  noise~\cite{Fahs12} or when the noise has two components, an
  alpha-stable component and a Gaussian one~\cite{Fahs14}.
\end{itemize} 

Nearly, for all the cited models above, and whenever the noise
Probability Density Function (PDF) is assumed to have an analytical
extendability property, the optimal input is proven to be of a
discrete nature and in most cases with a finite number of mass
points. Additionally, channel capacity could not be written in
closed-form. In this sense, the linear AWGN channel and some
``equivalent'' channels~\cite{fahsj} seem to be an exception, along
with a few channel models such as the additive exponential noise
channel under a mean constraint with non-negative inputs~\cite{Anan96}
and recently the Cauchy channel under a logarithmic
constraint~\cite{Fahs14-1}. For these channel models, the optimal
input distribution is found to be of the same nature of the noise and
capacity is described in closed-form.

One is tempted to study whether there is a general relation between
the input-output function $f(\cdot)$, the input cost function
$\mathcal{C}(\cdot)$, and the noise PDF $p_N(\cdot)$ that governs the
type of the capacity-achieving input. In this work, we conduct this
study for general types of the considered channel whereby ``general''
we mean the input-output relationship may not be linear but required
nevertheless to satisfy some rather mild conditions. Additionally,
instead of formulating the problem in terms of the average power
constraint or other moments constraints, we use generic input cost
functions that are also required to satisfy some technical
conditions. We emphasize that our results cover all cost functions
which are ``super-logarithmic'' which is a rather very large set. When
it comes to the noise statistics, the noise is assumed to be
absolutely continuous with respect to the Lebesgue measure with
positive and continuous PDFs that are with or without monotonic tails
and have a finite logarithmic-type of moments. Two conditions are
however imposed on the noise PDF and are subsequently presented. The
first guarantees the finiteness of the noise differential entropy. The
second concerns the tail behavior of a lower envelope to the noise
PDF. These two conditions are ``easily satisfied'' such as whenever
the PDF has a dominant exponential or a dominant polynomial component.
Despite the apparent long list of requirements, we emphasize that the
considered functions $f(\cdot)$, input costs $\mathcal{C}(\cdot)$ and
noise PDFs cover the vast majority of the known models found in the
literature.

Our main contributions are fourfold:
\begin{itemize}
\item[1-] Our study provides new capacity results for a multitude of
  communication channels. We showcase some of them:
  \begin{itemize}
  \item {\em Gaussian mixtures\/} and {\em generalized Gaussian\/}
    noise distributions are commonly used in the
    literature~\cite{blu,amir,fiorina2006,BSF08}, however no previous
    channel capacity studies were conducted for these types of noise
    models. The application of our results to such channels is
    presented in Sections~\ref{gaumix} and~\ref{gengau} respectively.
  \item Many communication channels are suitably modeled as {\em
      impulsive channels\/} where the statistics of the noise have,
    for example, an {\em alpha-stable\/} distribution or a composite
    {\em alpha-stable\/} plus {\em Gaussian\/} such as telephone
    noise~\cite{stuck}, audio noise signals~\cite{geo} and Multiple
    Access Interference (MAI)~\cite{sousa92,jacek,Win}. More
    recently~\cite{Beaulieu,marcel,ElGhannoudi,raj}, the performance
    of new receivers, mitigation and diversity techniques were
    investigated when such impulsive statistics were used as models of
    additive noise in MAI networks. In Section~\ref{asm}, we
    characterize and compute for a generic cost constraint
    $\mathcal{C}(\cdot)$ the channel capacity for the alpha-stable and
    the composite noise channels: {\em alpha-stable\/} plus {\em
      Gaussian\/}, which is commonly referred to as the Middleton class
    B when the stable component is symmetric~\cite{Middle86,Kim98}.
  \end{itemize}
\item[2-] The results stated in Theorems~\ref{th0} and~\ref{th00}
  generalize those of Das~\cite{Das}, who made similar statements for
  a linear channel whenever $\mathcal{C}(x) = x^2$ and whenever the
  noise is restricted, among other things, to have a finite second
  moment. The generalization is one to generic possibly non-linear
  channels, generic cost functions and noise distributions. They are
  in line with all the previous channel-capacity results presented
  earlier. In addition, when having an analyticity property of the
  noise PDF, they recover literally most of the known discreteness
  results for cost constrained deterministic channels. These results
  are stated in Theorem~\ref{fmass}.
\item[3-] Our methodology also provides capacity results even when the
  input is subjected only to a support (such as a peak power
  constraint), or to a combination of support and cost constraints. In
  fact, the results stated in Theorem~\ref{nfmass} corroborate those
  found in~\cite{Aslan} for channels whose input is amplitude
  constrained and where the noise has a finite $r$-th moment
  constraint for some $r >0$. However, Theorem~\ref{nfmass} is in some
  sense more general as it holds whenever the input has a compact
  support for {\em all\/} noise distributions that have a finite
  ``super-logarithmic'' moment.
\item[4-] When applied to monotonically-tailed noise PDFs for example,
  our main results - stated in Theorems~\ref{th0} and~\ref{th00},
  imply that whenever $\mathcal{C}(x) = \omega \left( \ln \left[
      \frac{1}{p_N \left( f(x) \right)} \right] \right)$\footnote{In
    this work, we say that $f(x) = \omega \left(g(x)\right)$ if and
    only if $\forall \, \kappa > 0, \exists \, c > 0$ such that
    $|f(x)| \geq \kappa |g(x)|, \forall |x| \geq c$. Equivalently, we
    say that $g(x) = o\left(f(x)\right)$. We say that $f(x) = \Omega
    \left(g(x)\right)$ if and only if $\exists \, \kappa > 0, c > 0$
    such that $|f(x)| \geq \kappa |g(x)|, \forall |x| \geq c$.
    Equivalently, we say that $g(x) = O\left(f(x)\right)$. We say that
    $f(x) = \Theta \left(g(x)\right)$ if and only if $f(x) = O
    \left(g(x)\right)$ and $f(x) = \Omega \left(g(x)\right)$.}, the
  support\footnote{We define the support of a RV as being the set of
    its points of increase i.e. $\{x \in \Reals: \Pr(x-\eta < X <
    x+\eta)>0 \text{ for all } \eta>0 \}$.}  of the capacity-achieving
  input is necessarily bounded. In addition, we state and prove a
  converse statement that states that whenever $\mathcal{C}(x) = o
  \left( \ln \left[ \frac{1}{p_N \left( f(x) \right)} \right]
  \right)$, the optimal input is necessarily unbounded. These results
  state that --for monotonically noise PDFs, there exists a threshold
  growth rate for the cost function which constitutes the transition
  between bounded and unbounded optimal inputs. Indeed, for an optimal
  input to be unbounded, a ``necessary condition'' for the cost
  function is to be at most $\Theta \left( \ln \left[ \frac{1}{p_N
        \left( f(x) \right)} \right]\right)$. This condition is
  satisfied by the Gaussian channel under an average power constraint,
  the exponential channel under a mean moment constraint and the
  Cauchy channel under a logarithmic constraint for which a Gaussian,
  exponentially tailed and a Cauchy input are respectively
  optimal~\cite{fahsj,Anan96,Fahs14-1}.
\end{itemize} 

On a related note, we argue that this study does provide insights on
what is a suitable measure of signal strength.
Though this question is not crucial when the additive noise has a
finite second moment due to the natural power measure provided by the
second moment, it seems of great importance when dealing with heavy
tailed noise distributions having infinite second moments. For these
types of channels, since the noise has an infinite second moment,
using the second moment as a measure of signal strength is absurd for
evaluating the Signal-to-Noise Ratio (SNR) for example. In fact, for
heavy-tailed noise models, and more specifically for the alpha-stable
class, a general theory of stable signal processing based on
Fractional Lower Order Moments (FLOM) ($\cost{x} = |x|^r$, $0< r <2$)
was presented in~\cite{Shao}; The ``stable theory'' was in accordance
with the fact that second order methods and linear estimation theory
were no longer suitable for infinite variance additive noise channels
and new criteria based on the dispersion of alpha-stable RVs and FLOM
were investigated. The stable theory was also used in the treatment of
various detection and estimation problems~\cite{tsa,man}, and the
performance of optimum receivers were investigated in~\cite{tsi}.

However, one can argue that moments of the form $\E{|X|^r}$, $r >0$ do
not provide a suitable strength measure for variables having
heavy tailed distributions, simply because no single value of $r$ can
be found appropriate~\cite{gonza2}. This conclusion is further
supported when making the following reasoning on the additive noise
channels considered in this work:
\begin{itemize}
\item Let $\E{\mathcal{C}_0(|X|)}$ be a measure of the average signal
  strength where $\mathcal{C}_0(|x|)$ is some positive, lower
  semi-continuous, non-decreasing function of $|x|$ and let $p_N(x)$
  be the noise PDF which is assumed to have a monotonic tail.
\item Whenever $\mathcal{C}_0(|x|) = \omega \left( \ln \left[
      \frac{1}{p_N(f(x))} \right] \right)$, one can always find a cost
  function $\mathcal{C}(|x|)$ such that $\mathcal{C}(|x|)$ is both
  $\omega \left( \ln \left[ \frac{1}{p_N(f(x))} \right] \right)$ and
  $o \left( \mathcal{C}_0(|x|) \right)$.
\item Now, since $\mathcal{C}(|x|) = \omega \left( \ln \left[
      \frac{1}{p_N(f(x))} \right] \right)$, the channel capacity under
  an input constraint of the form $\E{\mathcal{C}(|X|)} \leq A$, $A
  >0$ is achieved by a bounded input by virtue of
  Theorem~\ref{th0}. On the other hand, since $\mathcal{C}(|x|) = o
  \left( \mathcal{C}_0(|x|) \right)$, then there exists a distribution
  function satisfying the cost constraint with ``signal strength''
  $\E{ \mathcal{C}_0(|X|) }$ equal to $\infty$.
\item Hence, in the input space of distribution functions, there exist
  distributions having possibly infinite strength while the capacity
  is achieved by a distribution which has a finite one since
  its support is bounded.
\item This non-intuitive conclusion is only possible under the choice
  of a strength measure that is $\omega \left( \ln \left[
      \frac{1}{p_N(f(x))} \right] \right)$.
\end{itemize}

By this reasoning, suitable signal strength measures should be at most
$\Theta \left( \ln \left[ \frac{1}{p_N(f(x))} \right] \right)$. Said
differently, depending on the noise, measures of the form $\Theta
\left( \ln \left[ \frac{1}{p_N(f(x))} \right] \right)$ are more
appropriate. This boils down to $\Theta(f^2(x))$ under the Gaussian
noise
and to $\Theta \left( \ln \left[ f(x) \right] \right)$ for
polynomially-tailed additive noise.  The latter condition comes in
accordance with the work of Gonzalez et al.~\cite{gonza2} who
presented a new approach for dealing with heavy-tailed noise
environments. After presenting the shortcomings of the FLOM approach,
they presented a ``general'' unit of strength-measure based on
logarithmic moments where they motivated its usage within the
framework of estimation and filtering under impulsive noise.

The remainder of this paper is organized as
follows. Section~\ref{sc:generic_ch} presents the generic channel
model along with all the assumptions made in our study. Preliminary
lemmas concerning lower and upper bounds on some quantities of
interest are stated and proven in Section~\ref{sc:Preliminaries}. In
Section~\ref{KKT}, we discuss the Karush-Kuhn-Tucker (KKT) theorem and
in Section~\ref{MR} the main results of this paper are presented as
Theorems 1, 2, 3 and 4. Examples and numerical evaluations of channel
capacity and its achieving input distributions are presented in
Section~\ref{exnum} where the application of the four theorems is
explored for Gaussian mixtures, Generalized Gaussians and impulsive
noise. Finally, Section~\ref{conc} concludes the paper.


\section{A generic channel model}
\label{sc:generic_ch}

We consider a generic memoryless real discrete-time noisy
communication channel where the noise is additive and where the input
and output are possibly non-linearly related as follows:
\begin{equation}
  Y_{i} = f(X_i) + N_{i},
  \label{eq:generic_ch}
\end{equation}
where $i$ is the time index. We denote by $Y_{i} \in \Reals$ the
channel output at time $i$. The input at time $i$ is denoted $X_{i}$
and is assumed to have an alphabet $\set{X} \subseteq \Reals$. The
channel's input is distorted according to the deterministic and
possibly non-linear function $f(x)$.
Additionally, the communication channel is subjected to an additive
noise process that is independent of the input. The variables
$\{N_{i}\}_i$ are also assumed to be Independent and Identically
Distributed (IID) RVs.

Finally, we subject the input to an average cost constraint of the
form: $\E{\cost{X_{i}}} \leq A$, for some $A \in \Reals^{+*}$ where
$\mathcal{C}(\cdot)$ is some cost function:
\begin{equation*}
  \mathcal{C}: \Reals^+ \longrightarrow \Reals.
\end{equation*}

Accordingly, we define for $A > 0$
\begin{equation}
  \set{P}_A = \Big\{ \text{ Probability distributions } F \text{ of } X :\int \cost{x}\,dF(x) \leq A \Big\},
  \label{consdef100}
\end{equation}
the set of all distribution functions satisfying the average cost
constraint.

Given that the channel model is stationary and memoryless, the
capacity-achieving statistics of ${X_i}$ are also memoryless (IID),
therefore we suppress the time index and write
\begin{equation}
  Y = f(X) + N,
  \label{eq:generic_ch1}
\end{equation}
where the noise is absolutely continuous with respect to the Lebesgue
measure and is assumed to have a PDF $p_{N}(\cdot)$. This implies that
the channel transition probability density function is given by
\begin{equation}
  p_{Y|X}(y|x) = p_N (y-f(x)), \quad y \in \Reals, \,\,x \in \set{X}.
  \label{eq:py_x}
\end{equation}

We characterize the tail behavior of $p_N (\cdot)$ by considering the
following positive functions which are non-increasing for $x \geq 0$
and non-decreasing for $x < 0$:
\begin{equation*}
  \Tl{x} = \left\{ \begin{array}{ll}
      \displaystyle \inf_{0 \leq t \leq x}p_N(t) & \quad x \geq 0 \\ 
      \displaystyle \inf_{x \leq t \leq 0}p_N(t) & \quad x < 0,
    \end{array} \right.
  \qquad 
  \Tu{x} = \left\{ \begin{array}{ll}
      \displaystyle \sup_{ t \geq x} \, \, p_N(t) & \quad x \geq 0\\ 
      \displaystyle  \sup_{t \leq x} \, \, p_N(t) & \quad x < 0.
    \end{array} \right.
\end{equation*}

Considering the tail behavior of $\Tl{x}$ and $\Tu{x}$ instead of $p_N
(x)$ allows us to include in our analysis PDFs which do not possess a
monotonic tail. For those that do, $p_N(x)$, $\Tl{x}$ and $\Tu{x}$
will be identical for large values of $|x|$.

The main results of this work are based on relating the tail behavior
of $\mathcal{C}(\cdot)$ to that of $\Tl{\cdot}$ and $\Tu{\cdot}$ in
order to characterize the capacity-achieving input distributions of
channel~(\ref{eq:generic_ch1}). More explicitly we prove that,
whenever $\cost{x} = \omega\left( \ln \left[ \frac{1}{\Tl{f(x)}}
  \right] \right)$, the optimal input has necessarily a bounded
support. Furthermore, we prove a converse statement: whenever
$\displaystyle \cost{x} = o \left(\ln \left[ \frac{1}{\Tu{f(x)}}
  \right] \right)$, the capacity-achieving input is not bounded.

When the noise PDF has a monotonic tail, our results infer that cost
functions which are $\Theta \left(\ln \left[ \frac{1}{p_N(f(x))}
  \right] \right)$ form somehow a ``transition'' between bounded and
unbounded optimal inputs. For example, whenever the noise is Gaussian,
the ``transitional'' cost is of the form
$\Theta\left(f^2(x)\right)$. The discreteness --and hence the
finiteness of the number of mass points of the optimal input in the
bounded case-- is a direct consequence of the analyticity properties
of $p_N(\cdot)$ and $\mathcal{C}(\cdot)$ whenever these properties
exist.

\subsection{Assumptions} 
\label{conditions}

In this work, we make the following assumptions:
\begin{itemize}
\item[$\bullet$] \underline{The function $f(\cdot)$:}
  \begin{itemize}
  \item[C1-] The function is continuous.
  \item[C2-] The absolute value of the function $\left| f(\cdot)
    \right|$ is a non-decreasing function of $|x|$ and $\left| f(x)
    \right| \rightarrow +\infty$ as $|x| \rightarrow +\infty$.
  \end{itemize}
  
\item[$\bullet$] \underline{The cost function $\mathcal{C}(\cdot)$:}
  \begin{itemize}
  \item[C3-] The cost function is lower semi-continuous and
    non-decreasing. Without Loss of Generality (WLOG) we assume that
    $\mathcal{C}(0) = 0$: if it were not, define $\mathcal{C}_{0}(|x|)
    = \mathcal{C}(|x|) - \mathcal{C}(0)$ and adjust the input space
    under the cost $\mathcal{C}_0(|x|)$ to
    $\mathcal{P}_{A-\mathcal{C}(0)}$. Note that necessarily $A -
    \mathcal{C}(0) \geq 0$.
  \item[C4-] $\cost{x} = \omega\left(\ln\left|f(x)\right|\right)$.
  \end{itemize}

\item[$\bullet$] \underline{The noise PDF $p_{N}(\cdot)$:}
  \begin{itemize}
  \item[C5-] The PDF is positive and continuous on $\Reals$. Note that
    this automatically implies that $p_N(\cdot)$ is upper bounded.
  \item[C6-] There exits a non-decreasing function
    \begin{equation*}
      \mathcal{C}_N: \Reals^+ \longrightarrow \Reals
    \end{equation*}
    such that $ \mathcal{C}_N\left(|x|\right) = \omega \left( \ln |x|
    \right)$, and
    \begin{equation*}
      \Ep{N}{ \mathcal{C}_N\left(\left|N\right|\right) } = L_N < \infty.
    \end{equation*}
    This necessarily implies that $ \Ep{N}{\ln \left( 1 + |N| \right)}
    < \infty$. Note that, for example, the above condition holds true
    for any noise PDF whose tail is faster than $\frac{1}{x\left(\ln
        x\right)^3}$.
  \end{itemize}

  Since from an information theoretic perspective, the general channel
  model~(\ref{eq:generic_ch}) is invariant with respect to output
  scaling, we consider WLOG that the noise PDF is less than ``1'' for
  technical reasons. Furthermore, the boundedness of $p_N(\cdot)$
  along with the fact that it has a finite logarithmic moment insure
  that its differential entropy exists and is finite $h(N) < \infty$
  (see~\cite[Proposition 1]{rioul2011}).

  Restrictions C1 to C6 are ``technical'' in the sense that they
  represent sufficient conditions for the existence of a solution to
  the capacity problem as defined in~\cite{SMITH71} and enables the
  formulation of the Karush Kuhn Tucker (KKT) conditions as being
  necessary and sufficient for optimality of the input probability
  distribution.

\item[$\bullet$] \underline{The lower and upper bounds $\Tl{\cdot}$
    and $\Tu{\cdot}$:}

  Note that by definition, $0 < \Tl{x} \leq p_N(x) \leq \Tu{x} \leq 1$
  for all $x \in \Reals$. We assume that $\Tl{\cdot}$ and $\Tu{\cdot}$
  satisfy the following properties:
  \begin{itemize}
  \item[C7-] The function $L(x) = \ln \left[ \frac{1}{\Tl{x}}
    \right]$ which is positive, non-decreasing for $x \geq 0$ and
    non-increasing in $x < 0$, satisfies the following inequality:
    \begin{equation}
      L(x+y) \leq \kappa_{\text{l}} \left(L(x) + L(y)\right),
      \label{ident}
    \end{equation} 
    for some positive constant $\kappa_{\text{l}}$, whenever $|x|$,
    $|y|$ are sufficiently large.
  \end{itemize}

  We note that functions that satisfy condition C7 define a convex
  set. In fact, let $f(x)$, $g(x)$ be two positive, non-decreasing
  functions on $\Reals^+$ non-increasing on $\Reals^{-*}$. Let $\alpha
  \in [0,1]$ and define $h = \alpha f + (1-\alpha) g$. The function
  $h(x)$ is positive, having the same monotonic properties. Then,
  whenever there exists $\kappa_{\text{f}}$ and $\kappa_{\text{g}} >
  0$ for which $f$ and $g$ satisfy condition C7, we have
  \begin{equation*}
    h(x+y) = \alpha f(x+y) + (1-\alpha)g(x+y) \leq \kappa_{\text{h}}(h(x) + h(y)),
  \end{equation*}
  where $ \kappa_{\text{h}} = \max\{ \kappa_{\text{f}};
  \kappa_{\text{g}}\} > 0$.

  We clarify that condition C7 is for example satisfied by all noise
  distribution functions where $\Tl{x}$ is any linear combinations of:
  \begin{equation*}
    \Tl{x} = \Theta\left(s(x)e^{r(x)}\right) \qquad 
    \Tl{x} = \Theta\left(\frac{s(x)}{r(x)}\right),
  \end{equation*}
  where 
  \begin{equation*}
    r(x) = |x|^{a}\underbrace{\log \dots \log(|x|)}_{\beta \text{ times}}, \qquad s(x) = |x|^{a{'}}
    \underbrace{\log...\log(|x|)}_{\beta^{'} \text{ times}},
  \end{equation*}
  and where the parameters $a$, $a{'}$ $\in \Reals^+$, and $\beta$,
  $\beta^{'} \in \Naturals$, chosen so that $\Tl{x}$ is positive, its
  total integral is no greater than one, and conserves its monotonic
  behavior\footnote{The values $\beta= 0$ and $\beta^{'} = 0$ imply
    that respectively $r(x)$ and $s(x)$ have no logarithmic
    component.}. The fact that these two general types satisfy
  condition C7 is based on the following basic
  identities~\cite{vonb}:
  \begin{itemize}
  \item[$\bullet$] For all $x$, $y$ and $r \in \Reals$,
    \begin{equation*}
      |x+y|^r \leq \max\{1;2^{r-1}\}\left(|x|^r + |y|^r\right).
    \end{equation*}
  \item[$\bullet$] For any $x_0 > 0$, there exist $y_0 >0$ such that
    \begin{equation*}
      |x|+|y| \leq |xy|^p, \text{ for some $p > 1$ whenever $|x| > x_0$, $|y| > y_0$.}
    \end{equation*}
  \end{itemize}
  

  Finally, we also assume that
  \begin{itemize}
  \item[C8-] The integral $\displaystyle - \int_{-\infty}^{+\infty}
    \Tu{x} \ln \Tl{x}\,dx$ exists and is finite.
  \end{itemize}
  
  Note that whenever the tail of $p_N(\cdot)$ is monotone, condition
  C8 is not necessary and boils down to saying that noise
  differential entropy is finite which is a byproduct of properties C5
  and C6 of the noise PDF.
  
  When it comes to conditions C5 through C8 --and specifically C7
  and C8--, they are satisfied by a rather large class of noise
  probability functions that includes most of the known probability
  models such as Gaussian, generalized Gaussian, generalized t,
  alpha-stable distributions and all of their possible mixtures.
\end{itemize}


\section{Preliminaries}
\label{sc:Preliminaries}

In this section we establish some preliminary results that are needed
in subsequent sections: we derive lower and upper bounds on the output
probability and a quantity of interest presented hereafter.

We start by noting that for channel~(\ref{eq:generic_ch1}), the
existence of a positive, continuous transition PDF such as
in~(\ref{eq:py_x}), implies the existence for any input distribution
$F$ of an induced output probability density function $p_Y(y)= p(y;F)$
which is also continuous (hence upper-bounded)~\cite{fahsj} and is
given by:
\begin{equation}
  p_Y(y;F) = p(y;F) = \int p_N(y-f(x))\,dF_X(x) \leq 1.
  \label{outnorm}
\end{equation} 

Furthermore, equation~(\ref{outnorm}) along with the fact that
$f(\cdot)$ is continuous insures that the property that $p_{N}(\cdot)$
is bounded away from zero on compact subsets of $\Reals$ is conserved
as well for $p_Y(y;F)$. This in turns implies that $p_Y(\cdot)$ is
also positive on $\Reals$.

\subsection{Bounds on $p(y;F)$}

In what follows, we derive upper and lower bounds on the output
probability distribution induced by an input distribution $F$.

\begin{lemma}
  \label{lemlow}
  Let $y_0 > 0$ be sufficiently large. For an input distribution $F$,
  the PDF $p(y;F)$ of the output of channel~(\ref{eq:generic_ch1}) is
  lower bounded by
  \begin{equation*}
    p(y;F)  \geq \left\{ \begin{array}{ll}
        \displaystyle \frac{\Tl{y-y_0}}{2} & \quad y \leq -y_0\\ 
        \displaystyle \frac{\Tl{y+y_0}}{2} & \quad y \geq \,\,\,\, y_0,
      \end{array} \right.
  \end{equation*}
\end{lemma}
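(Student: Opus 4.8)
The plan is to exploit the monotone lower envelope $\Tl{\cdot}$ together with the observation that the bulk of \emph{any} input distribution sits on a bounded set: for $x$ in that set the shift $f(x)$ cannot move the argument $y-f(x)$ out of the region where $p_N$ is controlled from below by $\Tl{y+y_0}$. In particular, I would read ``$y_0$ sufficiently large'' as being allowed to depend on where the mass of $F$ lies, which is what makes the factor $\frac{1}{2}$ and the shift by $y_0$ appear naturally.

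First I would fix $F$ and, since $F([-r,r]) \to 1$ as $r \to \infty$, choose $x_A$ large enough that $F([-x_A,x_A]) \geq \frac{1}{2}$. Setting $M = \max_{|x|\leq x_A}|f(x)|$, which is finite by the continuity C1 on the compact interval and is controlled toward the endpoints by the monotonicity C2, I then impose $y_0 \geq M$; this is exactly the content of ``$y_0$ sufficiently large.'' The point of this step is to secure a fixed set $S=[-x_A,x_A]$ carrying at least half the probability on which $|f|$ is uniformly bounded by $M$.

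Next I would restrict the integral defining $p(y;F)$ to $S$ and estimate the integrand there. For $y \geq y_0$ and $x \in S$ one has $f(x)\in[-M,M]$, hence $y-f(x)\in[y-M,\,y+M]$; since $y\geq y_0\geq M$ the lower end obeys $y-M\geq 0$ and the upper end obeys $y+M\leq y+y_0$, so $y-f(x)\in[0,\,y+y_0]$. By the very definition of $\Tl{\cdot}$ as the infimum of $p_N$ over $[0,y+y_0]$ (and its monotonicity on $\Reals^+$), this gives $p_N(y-f(x))\geq\Tl{y+y_0}$ for every $x\in S$, and therefore
\[
 p(y;F) \;\geq\; \int_{S} p_N\bigl(y-f(x)\bigr)\,dF(x) \;\geq\; \Tl{y+y_0}\,F(S) \;\geq\; \frac{\Tl{y+y_0}}{2}.
\]
The case $y\leq -y_0$ is identical after reflection: for $x\in S$ the argument $y-f(x)$ now lands in $[y-y_0,\,0]$, on which $\Tl{\cdot}$ is defined as the infimum of $p_N$, yielding $p_N(y-f(x))\geq\Tl{y-y_0}$ and hence the stated bound.

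The only real obstacle is the uniform control of the shift: one must guarantee that $f(x)$ never pushes $y-f(x)$ below $0$ (which would land the argument on the ``wrong'' monotonicity branch of $\Tl{\cdot}$) nor beyond $y+y_0$ (where $\Tl{\cdot}$ would undershoot the claimed value). Both are secured precisely by the choice $y_0\geq M$ together with discarding the at-most-half of the mass lying outside $S$; the factor $\frac{1}{2}$ is exactly the price of that discarded tail. Everything else reduces to the monotonicity of $\Tl{\cdot}$ and the positivity of the integrand, so no further estimates are needed.
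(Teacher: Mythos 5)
Your proposal is correct and follows essentially the same route as the paper's proof: you pick a compact set carrying at least half of the mass of $F$, bound $|f|$ on it by continuity, take $y_0$ at least as large as that bound, restrict the defining integral to that set, and invoke the definition and monotonicity of $\Tl{\cdot}$ to lower-bound the integrand uniformly by $\Tl{y\pm y_0}$. Your reading of ``$y_0$ sufficiently large'' as depending on $F$ matches the paper's choice $y_0 > f_{\max} = \sup_{|x|\le d_F}|f(x)|$, and the factor $\tfrac12$ arises in both arguments from discarding the mass outside the chosen set.
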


\begin{proof}
  Given an input probability distribution $F$, we define the
  following:
  \begin{itemize}
  \item[-] We denote by $d_F$ a positive constant such that $\Pr(|X|
    \leq d_F) \geq \frac{1}{2}$.
  \item[-] We denote by $f_{\max} = \sup_{|x| \leq d_F}|f(x)|$, the
    existence of which is guaranteed by the assumption that $f(\cdot)$
    is continuous on $\Reals$.
  \end{itemize}

  Let $y_0 > f_{\text{max}}$. In what follows, we only present in
  detail the case $y \geq y_0$ as the proof in the other range follows
  similar steps.
  \begin{align}
    p_Y(y;F) \geq& \int_{x : |x| \leq d_F} \, p_N \bigl( y-f(x) \bigr) \, dF(x) \nonumber\\
    \geq&  \int_{x : |x| \leq d_F} \,  \Tl{y-f(x)} \, dF(x) \label{tailbeh}\\
    \geq& \,\frac{1}{2}\Tl{y+f_{\max}} \geq \,\frac{1}{2}\Tl{y+y_0},  \label{0dec}
  \end{align}
  where equation~(\ref{tailbeh}) is due to the fact that $\Tl{\cdot}$
  is a lower bound on $p_{N}(\cdot)$ by definition and
  inequalities~(\ref{0dec}) are justified since $\Tl{\cdot}$ is
  non-increasing on the considered interval.
\end{proof}

We also derive an upper bound on the output law whenever the input is
bounded within $[-B,B]$ for some $B > 0$:
\begin{lemma}
  \label{lemhigh}
  For an input distribution $F$ that has a bounded support within
  $[-B,B]$ for some $B > 0$, the PDF $p(y;F)$ of the output of
  channel~(\ref{eq:generic_ch1}) is upper bounded by
  \begin{equation*}
    p(y;F) \leq \left\{ \begin{array}{ll}
        \displaystyle \Tu{y+y_0^{B}} & \quad y \leq -y^{B}_0\\ 
        \displaystyle \Tu{y-y_0^{B}} & \quad y \geq \,\,\,\, y^{B}_0,
      \end{array} \right.
  \end{equation*}
  for any large-enough $y^{B}_0$.
\end{lemma}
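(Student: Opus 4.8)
The plan is to mirror the structure of the lower-bound proof in Lemma~\ref{lemlow}, but now exploiting the fact that the input is confined to $[-B,B]$ so that $f(x)$ stays within a compact range. First I would set $f_B = \sup_{|x| \leq B} |f(x)|$, whose finiteness is guaranteed by the continuity of $f(\cdot)$ (condition C1) together with the compactness of $[-B,B]$. This $f_B$ plays the role that $f_{\max}$ played in the previous lemma, except that here it controls \emph{all} of the mass of $F$ rather than merely half of it, precisely because the support is entirely contained in $[-B,B]$.

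Next I would choose $y_0^B > f_B$ and treat the case $y \geq y_0^B$ in detail, the other range being symmetric. Starting from the representation~(\ref{outnorm}),
\begin{align*}
  p(y;F) = \int_{|x| \leq B} p_N\bigl(y - f(x)\bigr)\,dF(x)
  \leq \int_{|x| \leq B} \Tu{y - f(x)}\,dF(x),
\end{align*}
where the inequality uses that $\Tu{\cdot}$ is an upper envelope of $p_N(\cdot)$ by definition. For $y \geq y_0^B > f_B$ the argument $y - f(x)$ is positive and bounded below by $y - f_B$ for every $x$ in the support; since $\Tu{\cdot}$ is non-increasing on $[0,\infty)$, each integrand is at most $\Tu{y - f_B}$. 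Finally, monotonicity once more gives $\Tu{y - f_B} \leq \Tu{y - y_0^B}$, and because $F$ is a probability measure the integral collapses to this bound, yielding $p(y;F) \leq \Tu{y - y_0^B}$.

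I do not expect a genuine obstacle here; the argument is essentially dual to Lemma~\ref{lemlow}, with suprema replacing infima and with the full mass of $F$ available. The one point that deserves care is the direction of the monotonicity bookkeeping: I must make sure that enlarging the displacement from $f_B$ to $y_0^B$ \emph{decreases} the argument of $\Tu{\cdot}$ (since $y - y_0^B \leq y - f_B$) and that this in turn \emph{increases} $\Tu{\cdot}$, so that the chain of inequalities points the correct way for an upper bound. The symmetric case $y \leq -y_0^B$ requires replacing $f_B$ by its negative and using that $\Tu{\cdot}$ is non-decreasing on the negative half-line, which flips the signs consistently. With these sign conventions settled the proof is a short computation and the phrase ``for any large-enough $y_0^B$'' is justified by the single requirement $y_0^B > f_B$.
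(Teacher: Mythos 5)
Your proposal is correct and follows essentially the same argument as the paper's proof: define $f_{\max}^{B} = \sup_{|x|\leq B}|f(x)|$, take $y_0^B \geq f_{\max}^B$, bound $p_N$ by $\Tu{\cdot}$ under the integral restricted to $[-B,B]$, and use the monotonicity of $\Tu{\cdot}$ on each semi-axis twice. The monotonicity bookkeeping you flag is handled exactly as you describe.
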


\begin{proof}
  Let $f_{\max}^{B} = \sup_{[-B;B]} \left| f(x) \right|$, the
  existence of which is guaranteed by the fact that $f(\cdot)$ is
  continuous on $\Reals$. Also let $y^{B}_0 \geq f_{\max}^{B}$.  For
  $y \geq y^{B}_0$, since $\Tu{\cdot}$ is an upper bound on
  $p_N(\cdot)$, we have,
  \begin{eqnarray}
    p(y;F) &=& \int p_N (y-f(x))\,dF(x) \nonumber\\
    &=& \int_{-B}^{B} p_N (y-f(x))\,dF(x) \label{onlyposs}\\
    &\leq&\int_{-B}^{B} \Tu{y-f(x)}\,dF(x) \nonumber \\
    &\leq& \Tu{y-f_{\max}^{B}} \leq \Tu{y-y_0^{B}}\label{uselat}
  \end{eqnarray}
  where equations~(\ref{uselat}) are due to the fact that $\Tu{x}$ is
  non-increasing on the positive semi-axis. A similar derivation
  yields the result for $y \leq -y_0^{B}$.
\end{proof}

We emphasize that this upper bound on $p(y;F)$ is only possible under
the assumption that the support of $F$ is bounded (as seen in
equation~(\ref{onlyposs})).

\subsection{Bounds on $i(x;F)$}

In this section we analyze the function of interest
\begin{align}
  i(x;F) & = - \int_{-\infty}^{+\infty} p_N(y-x) \ln p_Y(y;F)\,dy \label{eq:i} \\
  & = - \int_{-\infty}^{+\infty} p_N(y) \ln p_Y(y+x;F)\,dy. \nonumber
\end{align}

\begin{lemma}
  \label{le:iUB}
  For any probability distribution $F$, 
  \begin{equation*}
    i(x;F) = O \left( \ln \left[ \frac{1}{\Tl{x}} \right] \right).
  \end{equation*}
\end{lemma}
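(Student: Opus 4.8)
The plan is to bound $i(x;F)$ from above by a constant multiple of $L(x)=\ln\!\left[\tfrac{1}{\Tl{x}}\right]$ as $|x|\to\infty$, treating $F$ (and hence the constants $d_F$, $f_{\max}$, $y_0$ of Lemma~\ref{lemlow}) as fixed. Since $p_Y(\cdot\,;F)\le 1$ by~(\ref{outnorm}), the integrand $-p_N(y-x)\ln p_Y(y;F)$ is non-negative, so any region-wise upper bound on $-\ln p_Y$ yields an upper bound on $i(x;F)$. First I would split the defining integral~(\ref{eq:i}) into the central band $\{|y|\le y_0\}$ and the two tails $\{y\ge y_0\}$ and $\{y\le -y_0\}$. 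On the central band, $p_Y(\cdot\,;F)$ is continuous and strictly positive (as established just after~(\ref{outnorm})), hence bounded below by some $m>0$ on the compact $[-y_0,y_0]$; this band contributes at most $(-\ln m)\int_{|y|\le y_0}p_N(y-x)\,dy\le -\ln m$, an $O(1)$ term.

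For the tails I would invoke Lemma~\ref{lemlow}: for $y\ge y_0$ we have $p_Y(y;F)\ge \tfrac12\Tl{y+y_0}$, whence $-\ln p_Y(y;F)\le \ln 2 + L(y+y_0)$, and symmetrically on $\{y\le -y_0\}$. After the change of variable $n=y-x$ and discarding the harmless $\ln 2$ contributions, the whole problem reduces to showing that $\int p_N(n)\,L(n+x+y_0)\,dn = O(L(x))$ (the second tail producing the symmetric $L(n+x-y_0)$).

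The core of the argument is condition C7, i.e.\ $L(u+v)\le \kappa_{\text l}\bigl(L(u)+L(v)\bigr)$ for $|u|,|v|$ large, as in~(\ref{ident}), together with the facts that $L$ is non-negative and non-decreasing in $|t|$ and that $\int p_N(n)L(n)\,dn<\infty$ (the latter from $p_N\le \Tu$ and condition C8). I would first record the \emph{doubling} consequence $L(t+c)\le 2\kappa_{\text l}L(t)$ for every fixed $c$ and all large $t$, obtained by monotonicity ($L(t+c)\le L(2t)$ once $t\ge c$) followed by a single application of~(\ref{ident}) with both arguments equal to $t$. I would then split the $n$-integral into $\{|n|\le N_0\}$ and $\{|n|>N_0\}$ for a large fixed $N_0$. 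On $\{|n|\le N_0\}$ the argument $n+x+y_0$ stays within a fixed distance of $x$, so the doubling bound gives $L(n+x+y_0)\le 2\kappa_{\text l}L(x)$, and integrating against $p_N$ costs at most a factor $1$. On $\{|n|>N_0\}$ both $|n+y_0|$ and $|x|$ are large, so two applications of~(\ref{ident}) combined with the doubling estimate yield $L(n+x+y_0)\le C\bigl(L(n)+L(x)\bigr)$; integrating term by term, the $L(n)$ piece gives the finite constant $\int p_N(n)L(n)\,dn$ and the $L(x)$ piece is again bounded by $C\,L(x)$. Collecting the central band and both tails gives $i(x;F)\le C_1 L(x)+C_2$ for all large $|x|$; since integrability of $p_N$ forces $\Tl{x}\to 0$ and hence $L(x)\to +\infty$, the additive constant is absorbed and $i(x;F)=O\!\left(\ln\!\left[\tfrac{1}{\Tl{x}}\right]\right)$.

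The step I expect to be the main obstacle is the bookkeeping around condition C7: because~(\ref{ident}) is guaranteed only when both of its arguments are ``sufficiently large,'' I must isolate the small-$n$ region, where one argument of $L$ fails to be large, and dispose of it separately through monotonicity and the doubling estimate, while checking that the thresholds beyond which~(\ref{ident}) applies can be met simultaneously by $n+y_0$ and by $x$ as $|x|\to\infty$. The two sign regimes $x\to+\infty$ and $x\to-\infty$, and the matching of the two tail integrals, require parallel but symmetric treatment.
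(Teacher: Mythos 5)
Your proposal is correct and follows essentially the same route as the paper's proof: both split the integral into a compact central region (handled by positivity and continuity of $p_Y$), plus tail regions where Lemma~\ref{lemlow} converts $-\ln p_Y$ into $L(\cdot)$, and then use the subadditivity condition C7 together with the finiteness of $\int p_N(n)L(n)\,dn$ from C8 to extract the $O(L(x))$ bound. The only cosmetic difference is that you make the ``doubling'' estimate $L(t+c)\le 2\kappa_{\text{l}}L(t)$ explicit to absorb the fixed shift $y_0$, whereas the paper treats $y_0$ directly as a large argument in C7 and absorbs $L(\pm y_0)$ as a constant.
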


\begin{proof}
  Consider a large-enough $y_0$ so that Lemma~\ref{lemlow} holds, and
  let $x$ be such that $x>y_0$. For a probability distribution $F$ on
  the input we compute,
  \begin{equation*}
    i(x;F) = - \int_{-\infty}^{+\infty} p_N(y) \ln p_Y(y+x;F)\,dy = I_1 + I_2 + I_3,
  \end{equation*}
  where the interval of integration is divided into three
  sub-intervals: $(-\infty,-x-y_0)$, $[-x-y_0, y_0]$, $(y_0,+\infty)$.
 
  We study the growth rate in $x$ of the integral terms $I_1$, $I_2$
  and $I_3$ function of the rate of decay of $\Tl{\cdot}$.

  Using Lemma~\ref{lemlow},
  \begin{align}
    I_1 =& -\int_{-\infty}^{-x-y_0} p_N(y) \ln p_Y(y+x;F)\,dy \nonumber \\
    \leq & - \int_{-\infty}^{-x-y_0} p_N(y) \ln \left[ \frac{\Tl{y + x - y_0}}{2} \right] \, dy
    = \int_{-\infty}^{-x-y_0} p_N(y) \ln \left[ \frac{2}{\Tl{y +x - y_0}}\right] \, dy \nonumber \\
    \leq&  \, \ln 2 + \kappa \int_{-\infty}^{-x-y_0} p_N(y) \left(\ln \left[\frac{1}{\Tl{y}} \right]+ 
      \ln \left[\frac{1}{\Tl{x}} \right] + \ln \left[\frac{1}{\Tl{-y_0}} \right]\right)\,dy \label{justn1}\\
    \leq&  \, \ln 2 + \kappa \, \ln \left[\frac{1}{\Tl{-y_0}} \right]+ \kappa \,\ln \left[\frac{1}{\Tl{x}}  \right]
    + \kappa \,\int_{-\infty}^{+\infty} p_N(y)\ln \left[\frac{1}{\Tl{y}} \right] dy\label{justn2}\\
    \leq & \, 2 \kappa \, \ln \left[ \frac{1}{\Tl{x}}\right] \nonumber,
  \end{align}
  for some positive $\kappa$ and for $x>y_0$
  large-enough. Equation~(\ref{justn1}) is due to property C7 since
  both $x$ and $y_0$ are large enough and so is $|y|$. The integral
  term in~(\ref{justn2}) is finite by property C8 and the last
  equation is valid since $\ln \left[ \frac{1}{\Tl{x}} \right]$, which
  is positive, is increasing to +$\infty$.

  Similarly,
  \begin{align}
    I_3 =& -\int_{y_0}^{\infty} p_N(y) \ln p_Y(y+x;F)\,dy \nonumber \\
    \leq & - \int_{y_0}^{\infty} p_N(y) \ln \left[ \frac{\Tl{y + x + y_0}}{2} \right] \, dy
    = \int_{y_0}^{\infty} p_N(y) \ln \left[ \frac{2}{\Tl{y +x + y_0}}\right] \, dy \nonumber \\
    \leq&  \, \ln 2 + \kappa \int_{y_0}^{\infty} p_N(y) \left(\ln \left[\frac{1}{\Tl{y}} \right]+ 
      \ln \left[\frac{1}{\Tl{x}} \right] + \ln \left[\frac{1}{\Tl{y_0}} \right]\right)\,dy \nonumber\\
    \leq & \, 2 \kappa \, \ln \left[ \frac{1}{\Tl{x}}\right] \nonumber,
  \end{align}

  As for $I_2$,
  \begin{eqnarray}
    I_2 &=& -\int_{-x-y_0}^{y_0} p_N(y) \ln p_Y(y+x;F)\,dy \nonumber\\
    &=& -\int_{-x-y_0}^{-x+y_0} p_N(y) \ln p_Y(y+x;F)\,dy -\int_{-x+y_0}^{y_0} p_N(y) \ln p_Y(y+x;F)\,dy \nonumber\\
    &\leq& \sup_{|y| \leq y_0} \ln \left[ \frac{1}{p_Y(y;F)} \right] + \int_{-x+y_0}^{y_0} p_N(y) \ln \left[\frac{2}{\Tl{y + x + y_0}}\right] dy \nonumber\\
    &\leq& \sup_{|y| \leq y_0} \ln \left[ \frac{1}{p_Y(y;F)} \right]+ \ln 2 + \ln \left[ \frac{1}{\Tl{x + 2 y_0}}\right] \label{proptl}\\
    &\leq&  \sup_{|y| \leq y_0} \ln \left[ \frac{1}{p_Y(y;F)} \right] + \ln 2 + \kappa \ln \left[ \frac{1}{\Tl{x}} \right] + \kappa
    \ln \left[ \frac{1}{\Tl{2 y_0}} \right]\label{proptl1}\\
    &\leq& 2 \, \kappa  \, \ln \left[ \frac{1}{\Tl{x}}  \right]. \nonumber
  \end{eqnarray} 
  The supremum is finite since it is taken over a compact set where
  $p_Y(y)$ (which is less than one) is positive, continuous and hence
  positively lower bounded. Equation~(\ref{proptl}) is due to the fact
  that $\ln \left[ \frac{1}{\Tl{\cdot}} \right]$ is non-decreasing on
  the positive axis, equation~(\ref{proptl1}) is given by property
  C7 since both $x$ and $y_0$ are large enough and the last equation
  is justified since $\ln \left[ \frac{1}{\Tl{x}} \right]$ is
  increasing to $+\infty$ as $|x| \rightarrow +\infty$.

  A similar procedure can be adopted to prove this result when $x
  \rightarrow -\infty$ by adjusting the intervals of integration to
  the following: $(-\infty, -y_0)$, $[-y_0, -x+y_0]$, $(-x+y_0,
  +\infty)$ where $x < -y_0$ such that $|x|$ is large enough. This
  would imply that for any probability distribution $F$, $i(x;F) = I_1
  + I_2 + I_3 = O \left(\ln \left[ \frac{1}{\Tl{x}} \right] \right)$.
\end{proof}

We also derive a lower bound whenever the input is bounded within
$[-B,B]$ for some $B > 0$:
\begin{lemma}
  \label{le:iLB}
  For an input distribution $F$ that has a bounded support within
  $[-B,B]$ for some $B > 0$,
  \begin{equation*}
    i(x;F) = \Omega \left( \ln \left[ \frac{1}{\Tu{x}} \right] \right).
  \end{equation*}
\end{lemma}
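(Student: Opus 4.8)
The plan is to exploit the upper bound on the output density furnished by Lemma~\ref{lemhigh}, which is available precisely because $F$ is supported on $[-B,B]$, and to convert it into the desired lower bound on $i(x;F)$. The starting observation is that by~(\ref{outnorm}) the output density satisfies $p_Y(\,\cdot\,;F)\leq 1$, so in the representation
\[
  i(x;F) = -\int_{-\infty}^{+\infty} p_N(y)\,\ln p_Y(y+x;F)\,dy
\]
the integrand $-p_N(y)\ln p_Y(y+x;F)$ is non-negative on all of $\Reals$. Consequently I may discard any portion of the domain of integration and retain a valid lower bound; the whole game is to keep exactly the region on which Lemma~\ref{lemhigh} yields a clean estimate with the shift working in our favor.

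First I would treat $x\to+\infty$. I restrict the integral to $y\geq y_0^B$, where $y_0^B$ is the constant from Lemma~\ref{lemhigh}. For such $y$ and for $x\geq 0$ the output argument obeys $y+x\geq y_0^B$, so Lemma~\ref{lemhigh} gives $p_Y(y+x;F)\leq \Tu{y+x-y_0^B}$, whence $-\ln p_Y(y+x;F)\geq \ln[1/\Tu{y+x-y_0^B}]$. The key point is the choice of window: for $y\geq y_0^B$ the shifted argument satisfies $y+x-y_0^B\geq x>0$, and since $\Tu{\cdot}$ is non-increasing on $\Reals^+$ this forces $\Tu{y+x-y_0^B}\leq \Tu{x}$, i.e. $\ln[1/\Tu{y+x-y_0^B}]\geq \ln[1/\Tu{x}]\geq 0$. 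Pulling this $y$-independent bound out of the integral gives
\[
  i(x;F) \;\geq\; \int_{y_0^B}^{\infty} p_N(y)\,\ln\!\left[\frac{1}{\Tu{y+x-y_0^B}}\right] dy \;\geq\; \ln\!\left[\frac{1}{\Tu{x}}\right]\int_{y_0^B}^{\infty} p_N(y)\,dy,
\]
where the trailing integral is a strictly positive constant $\kappa_+ = \Pr(N\geq y_0^B)>0$ by positivity of $p_N$ (condition C5). This is already the asserted $\Omega\big(\ln[1/\Tu{x}]\big)$ bound for large positive $x$.

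The case $x\to-\infty$ is symmetric: I restrict to $y\leq -y_0^B$, invoke the second branch of Lemma~\ref{lemhigh} to obtain $p_Y(y+x;F)\leq \Tu{y+x+y_0^B}$, and use that on the negative semi-axis $\Tu{\cdot}$ is non-decreasing, so that for $y\leq -y_0^B$ the argument $y+x+y_0^B\leq x<0$ again yields $\Tu{y+x+y_0^B}\leq \Tu{x}$. The same manoeuvre produces $i(x;F)\geq \kappa_-\ln[1/\Tu{x}]$ with $\kappa_-=\Pr(N\leq -y_0^B)>0$. Taking $\kappa=\min\{\kappa_+,\kappa_-\}$ establishes $i(x;F)=\Omega\big(\ln[1/\Tu{x}]\big)$.

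I expect no serious obstacle here, in contrast with the matching upper bound of Lemma~\ref{le:iUB}, which required the subadditivity hypothesis C7 to absorb the $y_0$-shift. The single point demanding care is orienting the monotonicity of $\Tu{\cdot}$ correctly: a naive symmetric window $|y|\leq y_0$ would leave the argument of $\Tu$ \emph{smaller} than $x$ and send the inequality the wrong way. Choosing the window so that the shift pushes the argument beyond $x$ (respectively below $x$ on the negative side) is exactly what lets the bound go through with no regularity assumption on $\Tu{\cdot}$ at all.
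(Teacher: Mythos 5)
Your proof is correct and follows essentially the same route as the paper's: restrict the integral to $y \geq y_0^B$ (valid since $p_Y \leq 1$ makes the integrand non-negative), apply Lemma~\ref{lemhigh}, and use the monotonicity of $\Tu{\cdot}$ to pull out the constant $1 - F_N(y_0^B) > 0$. The paper's proof is identical in substance, including the symmetric treatment of $x \to -\infty$.
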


\begin{proof}
  We proceed in a manner akin to the proof of Lemma~\ref{le:iUB}: For
  an input distribution $F$ that has a bounded support within $[-B,B]$
  for some $B > 0$, we consider a large-enough $y^B_0$ so that
  Lemma~\ref{lemhigh} holds, and let $x$ be such that $x>y^B_0$.
  \begin{eqnarray}
    i(x;F) & = & - \int_{-\infty}^{\infty} p_N(y) \ln p(y+x;F)\,dy \nonumber\\
    & \geq & -\int_{y^B_0}^{+\infty} p_N(y) \ln p(y+x;F)\,dy\nonumber\\
    & \geq & \int_{y^B_0}^{+\infty} p_N(y) \ln \left[ \frac{1}{\Tu{y + x - y^{B}_0}} \right] dy \label{newlow}\\
    & \geq & \left(1 - F_N(y_0^B) \right) \ln \left[ \frac{1}{\Tu{x}}  \right] > 0. \label{boundw}
  \end{eqnarray}
  
  In order to write equation~(\ref{newlow}) we use the upper bound in
  Lemma~\ref{lemhigh}. Equation~(\ref{boundw}) is justified since $\ln
  \left[ \frac{1}{\Tu{\cdot}} \right]$ is non-decreasing on the
  non-negative semi-axis and the end result is positive since the
  support of $N$ is $\Reals$. A similar analysis may be conducted for
  the case when $x < -y_0^B < 0$.
\end{proof}


\section{The Karush-Kuhn-Tucker (KKT) theorem}
\label{KKT}

The capacity of channel~(\ref{eq:generic_ch}) is the supremum of the
mutual information $I(\cdot)$ between the input $X$ and output $Y$
over all input probability distributions $F$ that meet the constraint
$\mathcal{P}_A$:
\begin{equation}
  C = \sup_{F\in \mathcal{P}_A}\,I(F) = \sup_{F\in \mathcal{P}_A} \iint p_N\left(y-f(x)\right) \, 
  \ln \left[ \frac{p_N\left(y-f(x)\right)}{p(y;F)} \right] \,dy\,dF(x).
  \label{capprobl}
\end{equation}
 
Conditions C1 to C6 guarantee that this optimization problem is
well-defined and that its solution --the capacity-- is finite and is
achievable~\cite[Theorem 2]{FAF15}. Indeed, the conditions are
sufficient for $\mathcal{P}_A$ to be convex and compact~\cite[Theorem
3]{FAF15} and for $I(\cdot)$ to be concave and continuous (in
the weak sense~\cite[Sec.III.7]{ShiryBook})~\cite[Theorems
4,5]{FAF15}.

When dealing with constrained optimization problems, the Lagrangian
theorem~\cite{Luenb} is a useful tool as it transforms the problem to
an unconstrained one when some convexity conditions are satisfied by
the objective function and the constraints. In our problem these
conditions are satisfied as the mutual information is concave and the
cost is linear - and hence convex.  The theorem states that there
exists a non-negative parameter $\nu_A$ such that the optimization
problem~(\ref{capprobl}) can be written as:
\begin{align}
  C = \sup_{F\in \mathcal{P}_A}\,I(F) & = \sup_{F}\,\left\{ I(F) - \nu_A \bigl( \Ep{F}{\cost{X}} 
    - A \bigr) \right\} \label{eq:dual} \\
  & = I(F^*) - \nu_A \, \Ep{F^*}{\cost{X}} + \nu_A A, \nonumber
\end{align}
where the last equality is true since the solution is finite and
achievable by an optimal $F^*$. Furthermore,
\begin{equation*}
  \nu_A \, (\Ep{F^*}{\cost{X}} - A) = 0.
\end{equation*}

For every positive $A$, denote by $C(A)$ the capacity of the channel
under the constraint $F \in \mathcal{P}_A$, and consider the function
$C(A)$ for $A > 0$. The significance of the Lagrange parameter $\nu_A$
is addressed in the following Lemma.

\begin{lemma}
  \label{tang}
  Whenever for some positive $A$ the parameter $\nu(A) = 0$, then
  $C(A') = C(A)$ for all $A' \geq A$.
\end{lemma}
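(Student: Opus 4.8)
The plan is to sandwich $C(A')$ between $C(A)$ and $C(A)$ via two opposite inequalities. The first, $C(A') \ge C(A)$ for every $A' \ge A$, is immediate and does not even use the hypothesis $\nu(A)=0$: since $\Ep{F}{\cost{X}} \le A \le A'$ forces $F \in \mathcal{P}_{A'}$, we have the nesting $\mathcal{P}_A \subseteq \mathcal{P}_{A'}$, whence $C(A) = \sup_{F \in \mathcal{P}_A} I(F) \le \sup_{F \in \mathcal{P}_{A'}} I(F) = C(A')$. The entire content of the lemma is therefore the reverse inequality $C(A') \le C(A)$, and this is exactly where the vanishing of the Lagrange multiplier enters.

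For the reverse direction I would feed $\nu_A = 0$ into the dual representation~(\ref{eq:dual}). With the multiplier equal to zero the penalty term disappears and the dual supremum is taken over all input distributions with no effective constraint, so that $C(A) = \sup_F \{ I(F) - 0 \} = \sup_F I(F)$, the unconstrained supremum of the mutual information over the entire space of probability distributions on $\set{X}$. The point is that this quantity does not depend on $A$ and upper bounds every constrained capacity: for an arbitrary $A' \ge A$ one has $\mathcal{P}_{A'} \subseteq \{\text{all distributions}\}$, whence $C(A') = \sup_{F \in \mathcal{P}_{A'}} I(F) \le \sup_F I(F) = C(A)$. Combining this with the monotonicity bound of the previous paragraph yields $C(A') = C(A)$ for all $A' \ge A$.

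The step that deserves the most care — and which I regard as the crux — is the identification $C(A) = \sup_F I(F)$ when $\nu_A = 0$. This rests on the strong-duality statement already recorded for problem~(\ref{capprobl}) (its representation in the form~(\ref{eq:dual}) with a nonnegative multiplier), together with the existence results guaranteeing that $C(A)$ is finite and attained by some $F^\ast \in \mathcal{P}_A$. I would make explicit that the supremum in the relaxed problem ranges over all distributions rather than over $\mathcal{P}_A$, and that complementary slackness forces $F^\ast$ to be optimal for this unconstrained problem as well; finiteness of $C(A)$ then transfers to $\sup_F I(F)$, so the bound $C(A') \le \sup_F I(F)$ is a genuine inequality rather than a vacuous one. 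As a consistency check, the conclusion matches the concavity picture: $C(A)$ is concave and nondecreasing in $A$, the multiplier $\nu(A)$ plays the role of a supergradient of $C$ at $A$, and a concave nondecreasing function whose slope reaches $0$ at $A$ must stay flat on $[A,\infty)$.
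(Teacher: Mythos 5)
Your proposal is correct and follows essentially the same route as the paper: monotonicity of $C(\cdot)$ from the nesting $\mathcal{P}_A \subseteq \mathcal{P}_{A'}$ gives $C(A') \geq C(A)$, and setting $\nu_A = 0$ in the dual representation~(\ref{eq:dual}) identifies $C(A)$ with the unconstrained supremum $\sup_F I(F)$, which dominates every $C(A')$. Your write-up merely makes explicit the sandwich that the paper leaves implicit in the sentence ``this observation along with the fact that $C(A)$ is non-decreasing establish the result.''
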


\begin{proof}
  We start by noting that the channel capacity $C(A)$ is a
  non-decreasing function of $A$, due to the fact $\mathcal{P}_A
  \subseteq \mathcal{P}_{A'}$, for $0 < A \leq A'$.
  %
  %
  %
  %
  Now assume that $\nu(A) = 0$ for some $A > 0$. For this value of
  $A$, equation~(\ref{eq:dual}) becomes
  \begin{equation*}
    C = \sup_{F\in \mathcal{P}_A}\,I(F) = \sup_{F}\,\left\{ I(F) - \nu_A \bigl( \Ep{F}{\cost{X}} 
      - A \bigr) \right\} = \sup_{F}\,I(F),
  \end{equation*}
  which is a maximal value over all probability distributions
  irrespective of the constraint. This observation along with the fact
  that $C(A)$ is non-decreasing establish the result.
\end{proof}

In our setup, a value of $\nu(A) = 0$ can be ruled out. Said
differently, the cost constraint in equation~(\ref{capprobl}) is
binding. The argument we make is similar to the one used
in~\cite{IA01}: we consider a family of input signals composed of $N$
discrete levels with equal probabilities at locations $\{1, L, L^2,
\cdots, L^{2^{N-2}}\}$. When $L$ increases, the probability of error
of a minimum probability of error receiver goes to zero, which implies
by Fano's inequality that the mutual information approaches $\ln
(N)$. Therefore, as $A \to \infty$, the achievable rates in our setup
are arbitrarily large and $C(A)$ increases to infinity; a fact that is
not possible if $\nu(A)$ were equal to zero for some $A$ by
Lemma~\ref{tang}. This conclusion is corroborated by the fact that the
capacity for general memoryless continuous-input, continuous-output
channels is achieved by a boundary input for unbounded input cost
functions~\cite{Arg12}.

Whenever weak (Gateaux) differentiability is guaranteed, one can
further write necessary and sufficient conditions on the maximum
achieving distribution; conditions that are commonly referred to as
the KKT conditions~\cite{Luenb}. More formal statements on the theory
of convex optimization are summarized in Appendix C in~\cite{fahsj}.
The KKT approach was used previously in many studies~\cite{Sha48_1,
  Gal68, Hirt88, IA01, Chung06, MK04,
  SMITH71,Aslan,Fahs,Zhang2011,Tsai2011,Fahs2} in order to solve the
capacity problem and for the purpose of this work, we follow similar
steps. We indeed prove in Appendix~\ref{ap:Prop_I} the weak
differentiability of $I(\cdot)$ at any optimal input $F^*$ and
proceeding as in~\cite{IA01}, we write the KKT conditions as being
necessary and sufficient conditions for the optimal input to
satisfy. These conditions state that an input RV $X^*$ with
probability distribution $F^*$ achieves the capacity $C$ of an average
cost constrained channel if and only if there exists $\nu \geq 0$ such
that,
\begin{equation}
  \nu (\cost{x} - A) + C + H + \int p_N \left(y-f(x)\right) \ln p(y;F^{*})\,dy 
  = \nu (\cost{x} - A) + C + H - i(f(x);F^*) \geq 0,
  \label{eq:KKT_1}
\end{equation}
for all $x$ in $\Reals$, with equality if $x$ is a point of increase
of $F^*$, and where $H$ is the entropy of the noise.

\section{Main results}
\label{MR}

\begin{theorem}
  \label{th0}
  Whenever $\cost{x} = \omega \left( \ln \left[ \frac{1}{\Tl{f(x)}}
    \right] \right)$, the support of the capacity-achieving input of
  channel~(\ref{eq:generic_ch1}) is compact.
\end{theorem}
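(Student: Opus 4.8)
The plan is to read off the conclusion from the KKT optimality condition~(\ref{eq:KKT_1}) combined with the growth estimate of Lemma~\ref{le:iUB}. Recall that, having already ruled out $\nu = 0$ (so that $\nu > 0$), the optimal input $F^*$ satisfies
\begin{equation*}
  g(x) := \nu \left( \cost{x} - A \right) + C + H - i(f(x);F^*) \geq 0 \quad \text{for all } x \in \Reals,
\end{equation*}
with equality whenever $x$ is a point of increase of $F^*$. The strategy is to show that $g(x)$ is \emph{strictly} positive for all sufficiently large $|x|$; no point of increase can then lie outside a bounded interval, and since the set of points of increase is closed, the support of $F^*$ is compact.

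First I would control the term $i(f(x);F^*)$. By Lemma~\ref{le:iUB} we have $i(\cdot;F^*) = O\left( \ln\left[ \frac{1}{\Tl{\cdot}} \right] \right)$; composing with $f$ and invoking condition C2 (which gives $|f(x)| \to \infty$ as $|x| \to \infty$) produces a constant $\kappa > 0$ and a threshold beyond which
\begin{equation*}
  0 \leq i(f(x);F^*) \leq \kappa \, \ln\left[ \frac{1}{\Tl{f(x)}} \right],
\end{equation*}
the non-negativity being immediate from $p_Y \leq 1$ in the definition~(\ref{eq:i}) of $i$.

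Next I would invoke the hypothesis $\cost{x} = \omega\left( \ln\left[ \frac{1}{\Tl{f(x)}} \right] \right)$. Applying the definition of $\omega$ with the constant $(\kappa+1)/\nu$ — here $\nu > 0$ is essential, as it is the coefficient multiplying the cost — yields, for all large enough $|x|$,
\begin{equation*}
  \nu \, \cost{x} \geq (\kappa + 1) \, \ln\left[ \frac{1}{\Tl{f(x)}} \right].
\end{equation*}
Combining the two displays gives, for $|x|$ large,
\begin{equation*}
  g(x) \geq \ln\left[ \frac{1}{\Tl{f(x)}} \right] + \left( C + H - \nu A \right).
\end{equation*}
Since $p_N$ is an integrable, continuous and positive density, its lower envelope $\Tl{\cdot}$ vanishes at infinity, so with $|f(x)| \to \infty$ the leading term $\ln\left[ \frac{1}{\Tl{f(x)}} \right]$ diverges to $+\infty$. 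Consequently $g(x) > 0$ for all sufficiently large $|x|$, which confines the support of $F^*$ to a bounded — hence compact — set.

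The argument is conceptually short precisely because the analytic heavy lifting has already been carried out in Lemma~\ref{le:iUB}; the step I expect to require the most care is not an estimate but the bookkeeping of asymptotic orders. Both the $O$ estimate for $i$ and the $\omega$ hypothesis on $\cost{\cdot}$ are phrased with $f(x)$ as the argument, so I would rely on C2 to transfer them to statements in $x$, and I would keep $\nu$ strictly positive throughout, since if the cost coefficient vanished the whole comparison would collapse. The remaining verification — that $C$, $H$ are finite constants (guaranteed by C5--C6) and that $\ln[1/\Tl{f(x)}]\to+\infty$ — is routine.
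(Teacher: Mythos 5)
Your proposal is correct and follows essentially the same route as the paper: it combines the KKT condition~(\ref{eq:KKT_1}) with the growth estimate of Lemma~\ref{le:iUB} and the fact that $\nu>0$, the only difference being that you argue directly that the KKT expression is strictly positive for large $|x|$ whereas the paper phrases the same comparison as a contradiction at hypothetical unbounded points of increase. The extra bookkeeping you supply (non-negativity of $i$, divergence of $\ln\left[1/\Tl{f(x)}\right]$, transfer from $f(x)$ to $x$ via C2) is sound and consistent with the paper's argument.
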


\begin{proof}
  We consider the necessary and sufficient conditions of
  optimality~(\ref{eq:KKT_1}), and we study the behavior of the
  expression function of the variable $x$ as its magnitude goes to
  infinity.

  These conditions state that for the optimal input $X^*$,
  condition~(\ref{eq:KKT_1}) is satisfied with equality for any point
  of increase $x_0$ of the capacity-achieving distribution $F^* \in
  \mathcal{P}_A$. For such an $x_0$ we obtain,
  \begin{equation*}
    \nu (\cost{x_0} - A) + C + H = i(f(x_0);F^*).
  \end{equation*}
  If these points of increase  of $X^*$ take arbitrarily large values,
  $|f(x_o)|    \rightarrow   +\infty$    since   $|x_o|    \rightarrow
  +\infty$. Using Lemma~\ref{le:iUB}, $i(f(x_o);F) = O \left( \ln \left[
      \frac{1}{\Tl{f(x_o)}} \right] \right)$, and therefore
 \begin{equation*}
    \nu (\cost{x_0} - A) + C + H =  O \left( \ln \left[
        \frac{1}{\Tl{f(x_0)}} \right] \right),
  \end{equation*}
  which is a contradiction whenever $\cost{x}= \omega \left( \ln
    \left[ \frac{1}{\Tl{f(x)}} \right] \right)$ unless $\nu = 0$.
  This has been ruled out in Section~\ref{KKT}, which implies that the
  support of $X^*$ is bounded.  Finally, we note that the support is
  always closed, as its complement is open.  Therefore, $X^*$ is
  compactly supported.
\end{proof}

\subsection{A converse theorem}
\label{aconv}

Now we make use of the upper bound on the noise PDF. In this section,
we state and prove a converse formulation of Theorem~\ref{th0}. Indeed
we prove that whenever $\displaystyle \cost{x} = o\left( \ln \left[
    \frac{1}{\Tu{f(x)}} \right] \right)$, the capacity-achieving input
is not bounded.
\begin{theorem}
  \label{th00}
  Whenever $\displaystyle \cost{x} = o \left( \ln \left[
      \frac{1}{\Tu{f(x)}} \right] \right)$, the support of the
  capacity-achieving input of channel~(\ref{eq:generic_ch1}) is
  unbounded.
\end{theorem}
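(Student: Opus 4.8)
The plan is to establish the unboundedness by contradiction, leveraging the lower bound on $i(x;F)$ that was specifically engineered for compactly-supported inputs in Lemma~\ref{le:iLB}. Suppose, for the sake of contradiction, that the capacity-achieving input $X^*$ with distribution $F^*$ has bounded support, contained in some interval $[-B,B]$. Since $F^*$ is optimal, the KKT conditions~(\ref{eq:KKT_1}) must hold with equality at every point of increase of $F^*$. First I would exploit the fact that a compactly supported $F^*$ has at least one point of increase at the boundary of its support, and more importantly, that the KKT expression must be non-negative for \emph{all} $x \in \Reals$, in particular as $|x| \to +\infty$.

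The key idea is to evaluate the left-hand side of~(\ref{eq:KKT_1}) asymptotically as $|x| \to +\infty$ and derive a sign contradiction. Because the support of $F^*$ lies in $[-B,B]$, Lemma~\ref{le:iLB} applies and gives $i(f(x);F^*) = \Omega\left(\ln\left[\frac{1}{\Tu{f(x)}}\right]\right)$; note that since $|f(x)| \to +\infty$ by condition C2, this lower bound grows without bound. Rewriting the KKT inequality as
\begin{equation*}
  \nu\,\cost{x} + \bigl(C + H - \nu A\bigr) \geq i(f(x);F^*) = \Omega\left( \ln \left[ \frac{1}{\Tu{f(x)}} \right] \right),
\end{equation*}
I would then invoke the hypothesis $\cost{x} = o\left(\ln\left[\frac{1}{\Tu{f(x)}}\right]\right)$. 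This forces the entire left-hand side to be $o\left(\ln\left[\frac{1}{\Tu{f(x)}}\right]\right)$ (the constant term is trivially dominated, and $\nu\,\cost{x}$ is $o$ of the target rate for any finite $\nu$), while the right-hand side is $\Omega$ of that same rate and diverges to $+\infty$. This is the contradiction: a quantity that grows strictly slower than $\ln[1/\Tu{f(x)}]$ cannot dominate a quantity that grows at least as fast, once $|x|$ is large enough.

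The main obstacle will be handling the Lagrange multiplier $\nu$ cleanly. The little-$o$ comparison must hold for the specific finite value of $\nu$ furnished by the KKT conditions, so I must argue that $\nu < \infty$ (guaranteed by the well-posedness established in Section~\ref{KKT}, where $\nu = \nu(A)$ is a fixed non-negative real), and that multiplying an $o(\cdot)$ quantity by this finite constant preserves the $o(\cdot)$ relation—which it does by definition. A secondary subtlety is that the KKT condition as an \emph{inequality} for non-support points is exactly what I need: I do not require equality at large $|x|$, only that the inequality fails there, which immediately contradicts the assumed optimality of $F^*$. Finally, I would note that the argument is symmetric in the sign of $x$, so treating $x \to +\infty$ suffices with the $x \to -\infty$ case following by the analogous adjustment of integration ranges already indicated in Lemma~\ref{le:iLB}. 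Since the assumption of bounded support leads to a contradiction, the capacity-achieving input must be unbounded.
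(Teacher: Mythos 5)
Your proposal is correct and follows essentially the same route as the paper: assume bounded support, invoke the lower bound $i(f(x);F^*) = \Omega\left(\ln\left[\frac{1}{\Tu{f(x)}}\right]\right)$ for compactly supported inputs, and derive a growth-rate contradiction with the KKT inequality under the hypothesis $\cost{x} = o\left(\ln\left[\frac{1}{\Tu{f(x)}}\right]\right)$. You even cite the correct lemma (Lemma~\ref{le:iLB}) where the paper's text mistakenly refers to Lemma~\ref{le:iUB}, and your explicit handling of the finiteness of $\nu$ is a welcome, if minor, elaboration of the paper's terser argument.
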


\begin{proof}
  Suppose that the optimal input $X^*$ with distribution function
  $F^*$ has a bounded support within $[-B,B]$ for some $B > 0$. The
  KKT conditions imply that there exists $\nu \geq 0$ such that,
  \begin{equation*}
    \nu (\cost{x} - A) + C + H + \int p_N\left(y-f(x)\right) \ln p(y;F^{*})\,dy \geq 0,
  \end{equation*}
  for all $x$ in $\Reals$, with equality if $x$ is a point of increase
  of $F^*$. Using Lemma~\ref{le:iUB}, the integral term $\displaystyle
  i(f(x);F^*) = \Omega \left( \ln \left[ \frac{1}{\Tu{f(x)}} \right]
  \right)$ and hence, equation~(\ref{eq:KKT_1}) necessarily implies
  that,
  \begin{equation*}
    \nu (\cost{x} - A) + C + H  = \Omega \left( \ln \left[ \frac{1}{\Tu{f(x)}}  \right] \right),
  \end{equation*}
  which is impossible whenever $\displaystyle \cost{x} = o \left( \ln
    \left[ \frac{1}{\Tu{f(x)}} \right] \right)$.
\end{proof}


\subsection{Discreteness}

In what follows, we further characterize the capacity-achieving input
statistics when the cost function, the noise PDF and the channel
distortion function have an additional analyticity property. This
property guarantees the type of the optimal bounded input to be a
discrete one, and hence with a finite number of mass points by virtue
of compactness. This characterization permits to proceed to numerical
computations in order to compute channel capacity and find the
achieving input.

In this section, let $\eta >0$ denote a positive scalar and let
$\mathcal{S}_{\eta} = \{z \in \Complex : |\Im(z)| < \eta\}$ be a
horizontal strip in the complex domain. We adopt in this section an
alternative definition of $\Tu{x}$:
\begin{equation}
  \label{strength}
  \Tu{x} = \left\{ \begin{array}{ll}
      \displaystyle \sup_{\zeta \in  \mathcal{S}_{\eta}: \Re(\zeta) \geq x}\left|p_N(\zeta)\right| & \quad x \geq 0\\ 
      \displaystyle  \sup_{\zeta \in  \mathcal{S}_{\eta}: \Re(\zeta) \leq x}\left|p_N(\zeta)\right| & \quad x < 0,
    \end{array} \right.
\end{equation} 
and we assume that the following condition holds: The integral
$\displaystyle - \int_{-\infty}^{+\infty} \Tu{x} \ln \Tl{x}\,dx$
exists and is finite. Note that this condition is similar to C8 but
it is function of a redefined $\Tu{}$. One may think of the condition
as more restrictive. However, this strengthened condition is needed
only to establish discreteness. In the remainder of this document we
will refer to this condition as ``the strengthened-C8''.
We present hereafter, a lemma that guarantees the analyticity of
$i(\cdot;F)$ on $\mathcal{S}_{\eta} $:
\begin{lemma}
  \label{lemanalytpro}
  Whenever there exists an $\eta > 0$ such that $p_N(\cdot)$ admits an
  analytic extension on $\mathcal{S}_{\eta} $, the function
  $i(\cdot;F) : \mathcal{S}_{\eta} \rightarrow \Complex$ defined by:
  \begin{equation}
    z \rightarrow i(z;F) = - \int_{-\infty}^{\infty} \, p_N(y-z)\,\ln\,p(y;F)\,dy,
    \label{eq:def_hstable}
  \end{equation}
  is analytic.
  \label{analintstable}
\end{lemma}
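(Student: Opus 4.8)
The plan is to establish analyticity by the classical combination of Morera's theorem with an interchange of the order of integration, the latter justified by a uniform domination bound; this is the standard route for an integral depending analytically on a complex parameter. I first observe that for each fixed real $y$ the integrand $z \mapsto -p_N(y-z)\ln p(y;F)$ is analytic on $\mathcal{S}_{\eta}$: the factor $\ln p(y;F)$ does not depend on $z$, while the affine map $z \mapsto y-z$ sends $\mathcal{S}_{\eta}$ into itself (since $y$ is real, $\Im(y-z) = -\Im(z)$, so $|\Im(y-z)| = |\Im(z)| < \eta$), and $p_N$ is analytic on $\mathcal{S}_{\eta}$ by hypothesis; hence $z \mapsto p_N(y-z)$ is analytic as a composition. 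It remains to transfer this pointwise analyticity to the integral.

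Fix a compact rectangle $R = \{z : |\Re(z)| \leq M, \, |\Im(z)| \leq \eta'\} \subset \mathcal{S}_{\eta}$ with $\eta' < \eta$, on which I will verify the two ingredients Morera's theorem needs: continuity of $i(\cdot;F)$ and the vanishing of $\oint i\,dz$ over triangles. The crux is a dominating function that is integrable in $y$ and uniform over $z \in R$. For the first factor I use the redefined upper envelope~(\ref{strength}): writing $z = a+ib \in R$, for $y$ large and positive one has $\Re(y-z) = y-a \geq y-M$ with $y-z \in \mathcal{S}_{\eta}$, so by~(\ref{strength}) $|p_N(y-z)| \leq \Tu{y-M}$; symmetrically $|p_N(y-z)| \leq \Tu{y+M}$ for $y$ large and negative. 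For the second factor, since $p(y;F) \leq 1$ we have $|\ln p(y;F)| = -\ln p(y;F)$, and Lemma~\ref{lemlow} bounds this from above by $\ln 2 + \ln\bigl[1/\Tl{y \pm y_0}\bigr]$ for $|y|$ large. Multiplying and invoking property C7 to absorb the fixed shifts $M$ and $y_0$, the tail contribution is dominated by a constant multiple of $\Tu{y}\bigl(\ln 2 + \ln[1/\Tl{y}]\bigr)$, which is integrable by the strengthened-C8 condition (the $\ln 2$ term is integrable because $-\ln\Tl{x}$ is bounded below by a positive constant in the tails, so strengthened-C8 forces $\int \Tu{x}\,dx < \infty$ there). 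On the complementary compact range of $y$, $p(y;F)$ is continuous and bounded away from zero, so $|\ln p(y;F)|$ is bounded, and $|p_N(y-z)|$ is bounded by continuity on a compact set; the product is integrable there as well. This yields a single $g \in L^1(\Reals)$ with $|p_N(y-z)\ln p(y;F)| \leq g(y)$ for all $z \in R$.

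With this bound in hand the rest is routine. Continuity of $i(\cdot;F)$ on $R$ follows from the dominated convergence theorem applied to any sequence $z_n \to z$ in $R$. For Morera, let $\Delta \subset R$ be any closed triangle; the domination bound makes $(y,z) \mapsto p_N(y-z)\ln p(y;F)$ absolutely integrable on $\Reals \times \partial\Delta$, so Fubini's theorem gives $\oint_{\partial\Delta} i(z;F)\,dz = -\int_{-\infty}^{\infty} \ln p(y;F)\bigl(\oint_{\partial\Delta} p_N(y-z)\,dz\bigr)\,dy$, and the inner contour integral vanishes by Cauchy's theorem since $z \mapsto p_N(y-z)$ is analytic on $\mathcal{S}_{\eta} \supset \Delta$. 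Hence $\oint_{\partial\Delta} i(z;F)\,dz = 0$ for every triangle, and Morera's theorem yields analyticity of $i(\cdot;F)$ on the interior of $R$; since $M$ and $\eta' < \eta$ were arbitrary, $i(\cdot;F)$ is analytic on all of $\mathcal{S}_{\eta}$. The main obstacle is the uniform domination step: one must control $|p_N(y-z)|$ off the real axis, which is precisely the role of the strip-based redefinition~(\ref{strength}) of $\Tu{\cdot}$ and the reason the strengthened-C8 hypothesis is imposed here in place of the weaker C8.
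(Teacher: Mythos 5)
Your proof is correct and follows essentially the same route as the paper's: Morera's theorem, with continuity and the vanishing of triangle integrals both secured by a uniform dominating function built from the strip-based envelope~(\ref{strength}), the lower bound of Lemma~\ref{lemlow} on $p(y;F)$, property C7 to absorb the real shifts, and the strengthened-C8 for integrability of the tail. The only cosmetic difference is that you dominate uniformly over compact rectangles exhausting $\mathcal{S}_{\eta}$ whereas the paper works on small disks $|z-z_0|\leq\rho$ around each point; the substance is identical.
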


\begin{proof}
  To prove this lemma, we will make use of Morera's theorem:

  {\bf a)} We start first by proving the {\em continuity\/} of
  $i(\cdot;F)$. In fact, let $\rho>0$, $z_0$ and $z \in
  \mathcal{S}_{\eta}$ such that $|z-z_0| \leq \rho$,
  \begin{align}
    \lim_{z \rightarrow z_0} i(z;F) = & - \lim_{z \rightarrow z_0}\int\,p_N(y-z)\,\ln\,p(y;F)\,dy \nonumber\\
    = & - \int \lim_{z \rightarrow z_0} \,p_N(y-z)\,\ln p(y;F)\,dy \label{stableneweq4}\\
    = & - \hspace{-3pt} \int p_N(y-z_0) \ln p(y;F) \, dy = i(z_0;F).\label{stableneweq5}
  \end{align}

  Equation~(\ref{stableneweq5}) is justified by $p_N(y-z)$ being a
  continuous function of $z$ on $\mathcal{S}_{\eta}$ by virtue of its
  analyticity and equation~(\ref{stableneweq4}) by Lebesgue's
  Dominated Convergence Theorem (DCT). Indeed, in what follows we find
  an integrable function $r(y)$ such that,
  \begin{equation*}
    \big| p_N(y-z) \, \ln\,p(y;F) \big| = - \big|p_N(y-z)\big| \ln\,p(y;F) \leq r(y),
  \end{equation*}
  for all $z \in \mathcal{S}_{\eta}$ such that $|z-z_0| \leq \rho$ and
  for all $y \in \Reals$. We upper bound first $|p_N(y-z)|$: let $y_0$
  be large enough so that Lemma~\ref{lemlow} holds
  \begin{itemize}
  \item[$\bullet$] If $y \leq -(y_0+|\Re(z_0)| + \rho)$, then $y \leq
    -y_0+\Re(z_0) - \rho$ (where $y_0$ has been defined in
    Lemma~\ref{lemlow}) and
    \begin{align*}
      \left| p_N(y-z) \right| \, \leq \, & \, \Tu{y-\Re(z)}
      \leq  \max_{\zeta \in  \mathcal{S}_{\eta}: |\zeta-z_0| \leq \rho} \Tu{y-\Re(\zeta)} 
      = \,  \Tu{y-\Re(z_0)+\rho},
    \end{align*}
    where the last equality is due to the fact that for $x \leq 0$,
    $\Tu{x}$ is non-decreasing, and for $\zeta \in \mathcal{S}_{\eta};
    |\zeta-z_0| \leq \rho$, $(y-\Re(\zeta)) \leq (y - \Re(z_0) + \rho)
    < 0$.
  \item[$\bullet$] Similarly, for $y \geq (y_0 + |\Re(z_0)| + \rho)
    \geq (y_0 + \Re(z_0) + \rho)$,
    \begin{equation*}
      \big|p_N(y-z)\big| \leq \Tu{y-\Re(z_0)-\rho}.
    \end{equation*}
  \end{itemize}  
  
  Next, using Lemma~\ref{lemlow} we also upper bound $-\ln\,p(y;F)$ to
  obtain:
  \begin{equation*}
    r(y) = \left\{ \begin{array}{ll}
        \displaystyle \Tu{y-\Re(z_0)+\rho}\,\ln \left[ \frac{2}{\Tl{y-y_0}} \right] 
        & y \leq -( y_0 + |\Re(z_0)| + \rho)\\ 
        \displaystyle -M\ln M^{'} &  |y| <  y_0 + |\Re(z_0)| + \rho\\
        \displaystyle  \Tu{y-\Re(z_0)-\rho} \, \ln \left[ \frac{2}{\Tl{y+y_0}} \right]
        & y \geq  y_0 + |\Re(z_0)| + \rho,
      \end{array} \right.
  \end{equation*} 
  where
  \begin{equation*}
    M = \max_{ \left\{ |y| \leq (y_0 + |\Re(z_0)| + \rho) \right\} }\, \max_{\{\zeta \in
      \mathcal{S}_{\eta}: |\zeta -z_0| \leq \rho\}} |p_N(y-\zeta)|
    \quad \& \quad  M^{'}  = \min_{ \left\{ |y| \leq (y_0 + |\Re(z_0)| + \rho) \right\} } \, p_Y(y;F).
  \end{equation*}
  Note that $M$ is finite since $p_N(\cdot)$ is analytic and the
  maximization is taken over a compact set, and $0 < M^{'} < 1$, since
  $p_Y(\cdot;F)$ is positive, continuous and less than $1$. Properties
  C7 and strengthened-C8 insure the integrability of $r(y)$ which
  concludes the proof of continuity of $i(z;F)$.

  {\bf b)} To continue the proof of analyticity, we need to integrate
  $i(\cdot;F)$ on the boundary $\partial \Delta$ of a compact triangle
  $\Delta \subset \mathcal{S}_{\eta}$.  We denote by $|\Delta|$ its
  perimeter, $\eta_0 = \min_{z \in \partial \Delta} \Re(z)$, $\eta_1 =
  \max_{z \in \partial \Delta} \Re(z)$ and $\phi = y_0 + \max\{
  |\eta_0|, |\eta_1| \}$. By similar arguments as above, we have
  \begin{multline*}
    \int_{\Reals} \int_{\partial \Delta} \, |p_N(y-z)| |\ln p(y;F)| dz \, dy\\
    \leq \; |\Delta| \, M^{''} \hspace{-0.3cm} \int\limits_{|y| \leq \phi} \hspace{-0.2cm} \bigl| \ln p(y;F) \bigr| dy 
    + |\Delta| \hspace{-0.3cm} \int\limits_{y \leq - \phi} \hspace{-0.2cm} \Tu{y-\eta_0} \ln 
    \left[ \frac{2}{\Tl{y-y_0}}\right] dy + |\Delta| \hspace{-0.2cm} \int\limits_{y \geq \phi} \hspace{-0.2cm} 
    \Tu{y-\eta_1} \ln \left[\frac{2}{\Tl{y-y_0}}\right] dy < \infty,
  \end{multline*}
  where
  \begin{equation*}
    M^{''} = \max_{y : |y| \leq \phi}\, \max_{\xi \in \partial \Delta}\,|p_N(y-\xi)| < \infty.
  \end{equation*}
  
  Using Fubini's theorem to interchange the order of integration,
  \begin{align}
    \int_{\partial \Delta} \hspace{-0.2cm}i(z;F)dz & = - \int_{\partial \Delta} \int_{\Reals} p_N(y-z) 
    \ln p(y;F)\,dy\,dz 
    = - \int_{\Reals}\int_{\partial \Delta} \hspace{-0.2cm}p_N(y-z)\ln p(y;F)\,dz\,dy \nonumber \\
    &= - \int_{\Reals} \ln p(y;F)\,\int_{\partial \Delta} \hspace{-0.2cm}p_N(y-z)\,dz\,dy = 0, \label{moreranoise1}
  \end{align}
  where~(\ref{moreranoise1}) is justified by the fact that $p_N(y-z)$
  is analytic for all $z \in \mathcal{S}_{\eta}$ and $y \in
  \Reals$. Equation~(\ref{moreranoise1}) in addition to the continuity
  of $i(\cdot;F)$ insure its analyticity on $\mathcal{S}_{\eta}$.
\end{proof}  
 
\begin{theorem}
  \label{fmass}
  Assume there exists an $\eta > 0$ such that $p_N(x)$ is analytically
  extendable on $\mathcal{S}_{\eta}$, and let $\mathcal{I}$ be an
  unbounded closed interval of $\Reals$\footnote{We consider that
    $\Reals$ is both closed and open.}. The capacity-achieving input
  of channel~(\ref{eq:generic_ch1}) is compactly supported and
  discrete with finite number of mass points on $\mathcal{I}$,
  whenever the following conditions hold:
  \begin{itemize}
  \item $\cost{x} = \omega \left( \ln \left[ \frac{1}{\Tl{f(x)}}
      \right] \right)$.
  \item The restrictions of $f(x)$ and $\cost{x}$ on $\mathcal{I}$
    admit analytic extensions to $\mathcal{I} \times \Reals$, denoted
    $f_{\mathcal{I}}(\cdot)$ and $\mathcal{C}_{\mathcal{I}}(\cdot)$
    respectively.
  \item The inverse map $f^{-1}_{\mathcal{I}} (\cdot)$ of
    $f_{\mathcal{I}}(\cdot)$ conserves connectedness.
  \end{itemize} 
  \label{specases}
\end{theorem}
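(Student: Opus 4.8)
The plan is to establish discreteness by combining the compactness already guaranteed by Theorem~\ref{th0} with an analyticity argument on the KKT equality condition, following the classical Smith--Shamai methodology adapted to the non-linear channel. First I would invoke Theorem~\ref{th0}: since $\cost{x} = \omega\left(\ln\left[\frac{1}{\Tl{f(x)}}\right]\right)$, the capacity-achieving input $X^*$ is compactly supported, say within some $[-B,B]$. The points of increase of $F^*$ therefore form a closed, bounded subset of $\Reals$, and what remains is to show that this set, intersected with $\mathcal{I}$, cannot have an accumulation point. The standard route is to suppose for contradiction that the set of mass points has an accumulation point in $\mathcal{I}$, and then to promote the KKT equality from the real line to an identity on a complex domain, where the identity theorem for analytic functions forces it to hold everywhere, ultimately contradicting the inequality form of the KKT condition or the integrability of the output entropy.

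The key technical ingredient is Lemma~\ref{lemanalytpro}: under the hypothesis that $p_N(\cdot)$ extends analytically to the strip $\mathcal{S}_{\eta}$, the map $z \mapsto i(z;F^*)$ is analytic on $\mathcal{S}_{\eta}$. The KKT condition states that on the support, $\nu(\cost{x}-A) + C + H - i(f(x);F^*) = 0$, while off the support the left-hand side is $\geq 0$. The plan is to compose the analytic function $i(\cdot;F^*)$ with the analytic extension $f_{\mathcal{I}}(\cdot)$ of $f$ to obtain an analytic function $z \mapsto i(f_{\mathcal{I}}(z);F^*)$ on a neighborhood of $\mathcal{I}$ in $\mathcal{I}\times\Reals$, where the third hypothesis --- that $f^{-1}_{\mathcal{I}}$ conserves connectedness --- is precisely what guarantees that the image of a connected complex neighborhood lands in $\mathcal{S}_{\eta}$ so that the composition is well-defined and analytic. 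Define the analytic function $g(z) = \nu(\mathcal{C}_{\mathcal{I}}(z)-A) + C + H - i(f_{\mathcal{I}}(z);F^*)$ on this complex neighborhood of $\mathcal{I}$. By the KKT equality, $g(x)=0$ for every point of increase $x\in\mathcal{I}$.

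If the mass points accumulate at some interior point of $\mathcal{I}$, then the zeros of the analytic function $g$ accumulate, so by the identity theorem $g\equiv 0$ on the entire connected component, hence in particular $g(x)=0$ for all real $x\in\mathcal{I}$. I would then derive a contradiction with Theorem~\ref{th0}'s compactness: since $\mathcal{I}$ is unbounded, $g(x)=0$ for arbitrarily large $|x|$ means the KKT condition is met with equality at points where $|f(x)|\to\infty$, forcing $\cost{x}=\nu^{-1}\,i(f(x);F^*)+\text{const}$ along $\mathcal{I}$; but by Lemma~\ref{le:iUB} this is $O\left(\ln\left[\frac{1}{\Tl{f(x)}}\right]\right)$, which directly contradicts the growth hypothesis $\cost{x}=\omega\left(\ln\left[\frac{1}{\Tl{f(x)}}\right]\right)$ (with $\nu>0$ already secured in Section~\ref{KKT}). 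Hence no accumulation point exists inside $\mathcal{I}$, so the mass points in $\mathcal{I}$ are isolated; combined with compactness of the support, there are only finitely many of them on $\mathcal{I}$.

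The main obstacle I anticipate is the bookkeeping around the composition $i(f_{\mathcal{I}}(\cdot);F^*)$ and verifying it is genuinely analytic on a complex neighborhood of $\mathcal{I}$. Lemma~\ref{lemanalytpro} gives analyticity of $i$ only on the horizontal strip $\mathcal{S}_{\eta}$, so one must check that $f_{\mathcal{I}}$ maps a suitable complex neighborhood of the real interval $\mathcal{I}$ \emph{into} $\mathcal{S}_{\eta}$; this is exactly the role of the connectedness-preservation hypothesis on $f^{-1}_{\mathcal{I}}$ and of restricting to the domain $\mathcal{I}\times\Reals$, and getting these domain-matching conditions to line up cleanly is the delicate part. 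A secondary subtlety is that the accumulation point must be shown to lie in the interior (or handled via one-sided limits at an endpoint of $\mathcal{I}$), so that the identity theorem applies to a genuine open connected set; the convention, stated in the footnote, that $\Reals$ is treated as both open and closed is what makes the unbounded interval $\mathcal{I}=\Reals$ case go through smoothly.
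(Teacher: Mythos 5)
Your proposal is correct and follows essentially the same route as the paper's proof: invoke Theorem~\ref{th0} for compactness, assume an accumulation point, use Lemma~\ref{lemanalytpro} and the connectedness-preservation of $f^{-1}_{\mathcal{I}}$ to make $s(z) = \nu(\mathcal{C}_{\mathcal{I}}(z)-A)+C+H-i(f_{\mathcal{I}}(z);F^*)$ analytic on an open connected neighborhood of $\mathcal{I}$, apply the identity theorem, and derive the contradiction from Lemma~\ref{le:iUB} and the growth hypothesis on $\cost{x}$ over the unbounded interval $\mathcal{I}$. The only cosmetic difference is that the paper constructs the domain explicitly as $\mathcal{I}_{\eta}=f^{-1}_{\mathcal{I}}(\mathcal{J}_{\eta})$ and closes with a Lindel\"of argument for discreteness, but the substance is identical.
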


Before we prove the theorem, we note that a necessary condition for
analytical extendability is to have $\cost{x}$ an explicit function of
the variable $x$ on $\mathcal{I}$ which can be possibly realized when
$\mathcal{I}$ is for example a subset of either $\Reals^{+}$ or
$\Reals^{-}$.

\begin{proof}
  We start by setting some notation and making a few remarks:
  \begin{itemize}
  \item Define $\mathcal{J}$ to be the image of interval $\mathcal{I}$
    by $f_{\mathcal{I}}(\cdot)$.

    Since by analyticity $f_{\mathcal{I}}(\cdot)$ is continuous, then
    $\mathcal{J}$ is an interval of $\Reals$ because $f_{\mathcal{I}}
    (\cdot)$ is identical to $f(\cdot)$ on $\mathcal{I}$, and is real
    valued.

  \item Let $\mathcal{J}_{\eta} = \{z \in \mathcal{S}_{\eta} : \Re(z)
    \in \mathcal{J}\}$ and define $\mathcal{I}_{\eta}
    =f^{-1}_{\mathcal{I}}(\mathcal{J}_{\eta})$, the inverse image of
    $\mathcal{J}_{\eta}$ by $f_{\mathcal{I}}(\cdot)$.
  
    Note that since $\mathcal{J}$ is an interval, $\mathcal{J}_{\eta}$
    is connected and so is $\mathcal{I}_{\eta}$ by virtue of the
    properties of $f^{-1}_{\mathcal{I}}(\cdot)$. Additionally, since
    $f_{\mathcal{I}}(\mathcal{I}) = \mathcal{J}$ then $\mathcal{I}
    \subset \mathcal{I}_{\eta}$.

    In what follows, we work using the induced topology on
    $\mathcal{I}_{\eta}$. Under this topology, $\mathcal{I}_{\eta}$ is
    both open and closed.
  \end{itemize}

  We proceed with the proof and assume that the optimal input $X^*$
  with distribution function $F^*$ has at least one point of increase
  in $\mathcal{I}$ for otherwise the result becomes trivial. {\em
    Assume that the points of increase of $F^*$ in $\mathcal{I}$ are
    accumulating}, and let
  \begin{equation*}
    s(z) = \nu \, (\mathcal{C}_{\mathcal{I}}(z) - A) + C + H - i(f_{\mathcal{I}}(z);F^*).
  \end{equation*}

  By the result of Lemma~\ref{lemanalytpro}, $i(f_{\mathcal{I}}(z);
  F^*)$ is analytic on $\mathcal{I}_{\eta}$ since it is the
  composition of two analytic functions: $f_{\mathcal{I}}(\cdot)$ on
  $\mathcal{I}_{\eta}$ and $i(\cdot;F^*)$ on $\mathcal{J}_{\eta} =
  f_{\mathcal{I}}(\mathcal{I}_{\eta}) \subset \mathcal{S}_{\eta}$.
  This implies that the function $s(z)$ is analytic on
  $\mathcal{I}_{\eta}$. Since by assumption the points of increase of
  $F^{*}$ have an accumulation point on $\mathcal{I}$ then by the KKT
  conditions, $s(z)$ has accumulating zeros on $\mathcal{I} \subset
  \mathcal{I}_{\eta}$, which necessarily implies by the identity
  Theorem~\cite[sec. 66]{church} that $s(\cdot)$ is identically null
  on $\mathcal{I}_{\eta}$, since $\mathcal{I}_{\eta}$ is open and
  connected. Therefore,
  \begin{equation*}
    \nu (\cost{x} - A) + C + H = - \int p_N(y) \ln p(y-f(x);F^{*})\,dy, \qquad \forall x \in \mathcal{I}.
  \end{equation*}
  
  Since $\mathcal{I}$ is unbounded, this equality is impossible for
  large values of $x$ by the result of Theorem~\ref{th0} unless $\nu =
  0$ which is non sensible. This leads to a contradiction and rules
  out the assumption of having an accumulation point on $\mathcal{I}$.
  Since $\Reals$ is Lindelof, $X^*$ is necessarily discrete on
  $\mathcal{I}$. Additionally, since the support of $X^*$ is compact
  and $\mathcal{I}$ is closed in $\Reals$, $X^*$ has necessarily a
  finite number of mass points on $\mathcal{I}$.
\end{proof}

\begin{theorem}
  \label{genas}
  Assume there exists an $\eta > 0$, such that $p_N(x)$ is
  analytically extendable on $\mathcal{S}_{\eta}$, and let
  $\mathcal{I}$ be an unbounded closed interval of $\Reals$. Whenever
  the input is constrained to have a compact support $\set{X}$, the
  capacity-achieving input is discrete with a finite number of mass
  points on $\set{X} \cap \mathcal{I}$ if the following holds:
  \begin{itemize}
  \item The restriction of $f(x)$ on $\set{X} \cap \mathcal{I}$
    admits an analytic extension to $\mathcal{I} \times \Reals$,
    denoted $f_{\mathcal{I}}(\cdot)$.
  \item The inverse map $f^{-1}_{\mathcal{I}} (\cdot)$ of
    $f_{\mathcal{I}} (\cdot)$ conserves connectedness.
  \end{itemize} 
  \label{nfmass}
\end{theorem}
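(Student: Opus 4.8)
The plan is to follow the skeleton of the proof of Theorem~\ref{fmass}, reusing its topological preparation verbatim and replacing only the final contradiction. First I would set $\mathcal{J} = f_{\mathcal{I}}(\mathcal{I})$, which is a real interval since $f_{\mathcal{I}}$ is continuous and agrees with $f$ on $\mathcal{I}$; then put $\mathcal{J}_{\eta} = \{z \in \mathcal{S}_{\eta} : \Re(z) \in \mathcal{J}\}$ and $\mathcal{I}_{\eta} = f^{-1}_{\mathcal{I}}(\mathcal{J}_{\eta})$. As in Theorem~\ref{fmass}, $\mathcal{J}_{\eta}$ is connected, hence so is $\mathcal{I}_{\eta}$ because $f^{-1}_{\mathcal{I}}$ conserves connectedness, and $\mathcal{I} \subset \mathcal{I}_{\eta}$ with $\mathcal{I}_{\eta}$ open and connected in the induced topology.

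The only structural change is the optimality condition: under a pure support constraint there is no cost term, so the relevant KKT condition follows from~(\ref{eq:KKT_1}) by setting $\nu = 0$ and restricting $x$ to $\set{X}$, namely $C + H - i(f(x);F^*) \geq 0$ for all $x \in \set{X}$, with equality at every point of increase of $F^*$ (here $C$ is the support-constrained capacity). Assuming toward a contradiction that the points of increase of $F^*$ in $\set{X} \cap \mathcal{I}$ accumulate---this set being compact as a closed subset of $\set{X}$, the accumulation point lies in $\set{X}\cap\mathcal{I} \subset \mathcal{I}_{\eta}$---I would introduce
\[
  s(z) = C + H - i(f_{\mathcal{I}}(z);F^*),
\]
which is analytic on $\mathcal{I}_{\eta}$: Lemma~\ref{lemanalytpro} yields analyticity of $i(\cdot;F^*)$ on $\mathcal{S}_{\eta}$, and $s$ is then a composition with the analytic $f_{\mathcal{I}}$. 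The KKT equalities make $s$ vanish on an accumulating set, so by the identity theorem~\cite{church}, $s \equiv 0$ on the open connected set $\mathcal{I}_{\eta}$, whence $i(f(x);F^*) = C + H$ for every $x \in \mathcal{I}$.

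The crux, and the point where this argument departs from that of Theorem~\ref{fmass}, is the final contradiction: here compactness of the support is a hypothesis rather than a conclusion, so I cannot invoke Theorem~\ref{th0}, and the contradiction must instead come from below. Since $F^*$ is supported in the compact $\set{X}$, Lemma~\ref{le:iLB} applies and gives $i(f(x);F^*) = \Omega\!\left(\ln\left[\frac{1}{\Tu{f(x)}}\right]\right)$. Because $\mathcal{I}$ is unbounded, $|f(x)| \to +\infty$ along $\mathcal{I}$ by condition~C2, so $\Tu{f(x)} \to 0$ and $\ln\left[\frac{1}{\Tu{f(x)}}\right] \to +\infty$; thus $i(f(x);F^*)$ is unbounded on $\mathcal{I}$, contradicting its being identically the constant $C + H$. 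This rules out accumulation, so the points of increase in $\set{X}\cap\mathcal{I}$ are isolated, and being contained in the compact set $\set{X}\cap\mathcal{I}$ they are finite in number. I expect the only delicate checks to be that Lemma~\ref{lemanalytpro} genuinely applies under the present compact-support hypotheses (the required integrability being easier here, since the tighter upper bound of Lemma~\ref{lemhigh} is available) and that the openness and connectedness of $\mathcal{I}_{\eta}$ are correctly maintained in the induced topology so that the identity theorem is legitimate.
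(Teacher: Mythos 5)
Your proposal follows essentially the same route as the paper's proof: the same analytic-continuation and identity-theorem argument on $\mathcal{I}_{\eta}$, and the same final contradiction obtained from the lower bound $i(f(x);F^*) = \Omega\left( \ln \left[ \frac{1}{\Tu{f(x)}} \right] \right)$ of Lemma~\ref{le:iLB}, which is exactly what the paper invokes by pointing back to the proof of Theorem~\ref{th00}. The only step you gloss over is the justification that the KKT conditions are available at all under a pure support constraint --- the paper devotes its opening paragraph to this, embedding $\mathcal{P}_{\mathcal{X}}$ into a suitable $\mathcal{P}_A$ via an auxiliary cost function so as to inherit weak differentiability, rather than merely ``setting $\nu = 0$'' in~(\ref{eq:KKT_1}) --- but this is a matter of justification, not of approach.
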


Before proving the theorem, we note that the condition that the
support of $\set{X}$ is compact is a generalization of the peak power
constraint. Also it makes sense to consider sets $\set{X}$ that are
not discrete, for otherwise the problem is ill defined.

\begin{proof}
  We first note that the KKT conditions are valid under the setup of
  the compactly-supported input constraint: Indeed, the input space is
  compact in the weak topology and convex
  (see~\cite{SMITH71,fahsj}). Also note that there exists a cost
  function $\cost{x}$ the tail of which is $\omega \left( \ln
    \left|f(x) \right| \right)$ and such that $\sup_{\set{X}}\cost{x}
  = A$, for some $A > 0$.

  Now, for any $F \in \mathcal{P}_{\mathcal{X}}$ --the set of all
  input distributions having a compact support $\set{X}$, we have
  $\int \cost{x}\,dF(x) \leq A$ which implies that
  $\mathcal{P}_{\mathcal{X}} \subset \mathcal{P}_A$. Since the mutual
  information is finite, continuous and weakly differentiable on
  $\mathcal{P}_A$ whenever $\cost{x} = \omega \left( \ln \left| f(x)
    \right| \right)$ (see Appendix~\ref{ap:Prop_I}) then it is as such
  on $\mathcal{P}_{\mathcal{X}}$. Under this setup, the KKT conditions
  state that an input RV $X^*$ with CDF $F^*$ achieves the capacity
  $C$ of a compact-support constrained channel if and only if,
  \begin{equation*}
    C + H + \int p_N\left(y-f(x)\right) \ln p(y;F^{*})\,dy \geq 0, \qquad \forall x \in \set{X},
  \end{equation*}
  with equality if $x$ is a point of increase of $F^*$, and where $H$
  is the entropy of the noise. By virtue of the analyticity
  conditions, the function $s(z) = C + H + \int
  p_N\left(y-f_{\mathcal{I}}(z)\right) \ln p(y;F^{*})\,dy$ would also
  be analytic on $\set{I}$. The assumption that the points of increase
  of $X^*$ on $\set{X} \cap \set{I}$ have an accumulation point is
  impossible since it will lead by the identity theorem to $s(x) = 0$
  on $\set{I}$ which is impossible since $i(x;F^*) = - \int
  p_N\left(y-f(x)\right) \ln p(y;F^{*}) = \Omega \left( \ln \left[
      \frac{1}{\Tu{f(x)}} \right] \right)$ (see the proof of
  Theorem~\ref{th00}), which increases to $\infty$. Therefore, $X^*$
  is necessarily discrete on $\set{X} \cap \set{I}$. The finiteness of
  the number of mass points is a direct consequence of the compactness
  of $\set{X} \cap \set{I}$.
\end{proof}

\begin{note*}
  A similar statement to that of Theorem~\ref{nfmass} may be made
  whenever, in addition to a compact support constraint, there is also
  a cost constraint satisfying the conditions of Theorem~\ref{fmass}
  with tail behavior either $\omega \left( \ln \left[
      \frac{1}{\Tl{f(x)}} \right] \right)$ or $o \left( \ln \left[
      \frac{1}{\Tu{f(x)}} \right] \right)$.
\end{note*}

Before moving to giving some concrete examples to our general
theorems, we would like to state that some conditions were only
considered for either the sake of the clarity of the proofs, or for
conserving the general aspect of the results. Many such conditions
could be relaxed while conserving some or all of the found
conclusions. For example,
\begin{itemize}
\item The notions of $\omega$, $\Omega$, $o$ and $O$ used in this
  document are defined as $|x| \rightarrow +\infty$, i.e., in such a
  way to capture a symmetric rate of decay for both tails. However,
  one can only consider left or right tail behaviors separately. The
  results of boundedness and discreteness could be given in terms of
  each tail where for example for non-symmetric noise PDFs or
  non-symmetric cost functions, the optimal input could only be
  bounded on one of the semi-axis.
  
\item For Theorems~\ref{th0} and~\ref{th00}, the assumption that
  $p_N(\cdot)$ is positive could be relaxed to one sided noise
  PDFs. These theorems are still valid on one side of the axis.


\item The proven theorems --stated in terms of $\Tl{x}$ and $\Tu{x}$--
  could be stated in terms of any two functions having the same
  properties and providing lower and upper bounds on $p_N(x)$ for
  large values of $|x|$.

\end{itemize}


\section{Applications of the Theorems and Numerical Results}
\label{exnum}

In this section we apply our results to a variety of specific channels
of interest that fit under the general framework presented
previously. For those channels that have been previously studied in
the literature, we verify our results --in the form of
Theorems~\ref{th0},~\ref{th00},~\ref{fmass} and~\ref{nfmass}, and for
the other models we state some new results. We note that in all the
examples presented subsequently the considered functions $f(\cdot)$
and the cost constraints satisfy the general conditions C1 through C4
in Section~\ref{conditions}. The noise distributions are absolutely
continuous with positive, continuous PDFs with tails that have ``at
least'' a polynomial decay and hence satisfying the assumptions C5 and
C6. Finally, in all the provided examples the noise PDFs possess a
monotonic tail and a finite differential entropy and therefore,
condition C8 is satisfied. It remains to check for each example
condition C7 and possibly the strengthened-C8. 

For the purpose of verifying condition C7, we note that one can use
$p_N (x)$ instead of $\Tl{x}$ since they are identical at large values
of $|x|$. When it comes to discreteness, whenever $|x|$ is large
enough the function $\Tu{x}$ defined in~(\ref{strength}) becomes
\begin{equation*}
  \Tu{x} = \sup_{ \left\{z: \quad \Re(z) = x \, \& \, |\Im(z)| < \eta \right\}} \left| p_N(z) \right|,
\end{equation*}
because $\left| p_N(z) \right|$ is decreasing with $|\Re(z)|$ at large
values for all the given examples.

For each model we consider in what follows, we will check whether the
appropriate conditions are satisfied, state the results --specialized
to the channel at hand, and compare with the known results in the
literature.

\subsection{The Gaussian Model}

For a Gaussian noise distribution with mean zero and variance
$\sigma^2$, the PDF is $p_N(x) = \frac{1}{\sqrt{2 \pi \sigma^2}}
e^{-\frac{x^2}{2\sigma^2}}$ and we write $N \sim
\Normal{0}{\sigma^2}$.

\underline{Checking the conditions:} Condition C7 is validated as
follows: for large values of $|x|$ and $|y|$,
\begin{eqnarray*}
  L(x+y) & = & \ln \left[ \frac{1}{p_N(x + y)} \right] 
    = \ln \sqrt{2 \pi \sigma^2} +  \frac{(x+y)^2}{2\sigma^2} \\
  & \leq & 2 \left(\ln \sqrt{2 \pi \sigma^2}  + \frac{x^2}{2\sigma^2} 
    + \ln \sqrt{2 \pi \sigma^2} + \frac{y^2}{2\sigma^2}\right) - 3 \ln \sqrt{2 \pi \sigma^2} 
  = \kappa_{\text{l}} \left( L(x) + L(y) \right),
\end{eqnarray*}
where $\kappa_{\text{l}} > 2$. When it comes to discreteness, let
$p_N(z) = \frac{1}{\sqrt{2 \pi \sigma^2}} e^{-\frac{z^2}{2\sigma^2}}$,
be an analytic extension of $p_N(x)$ to the complex plane, where $z =
x + j y$. The magnitude of $p_N(z)$ is
\begin{equation*}
  \left|p_N(z)\right| = \frac{1}{\sqrt{2 \pi \sigma^2}} \left|e^{-\frac{z^2}{2\sigma^2}}\right| 
  = \frac{1}{\sqrt{2 \pi \sigma^2}} e^{-\frac{x^2-y^2}{2\sigma^2}}, 
\end{equation*} 
and is decreasing in $x=\Re(z)$. Therefore, $\Tu{x} = \frac{1}{\sqrt{2
    \pi \sigma^2}} e^{-\frac{x^2-\eta^2}{2\sigma^2}} =
e^{\frac{\eta^2}{2\sigma^2}} \, p_N(x)$.

Checking for the strengthened-C8, the integral $ -
\int_{-\infty}^{+\infty} \Tu{x} \ln \Tl{x}\,dx = e^{\frac{\eta^2}{2
    \sigma^2}} h(N)$ which is finite because the noise differential
entropy $h(N)$ is finite.

The following theorem is a specialization of Theorems~\ref{th0} and
\ref{th00} for this specific Gaussian case:
\begin{theorem}
  \label{th0g}
  Whenever $\cost{x} = o\left(f(x)^2\right)$, the support of the
  capacity-achieving input of channel~(\ref{eq:generic_ch1}) when
  $N\sim \Normal{0}{\sigma^2}$ is unbounded. 

  Whenever $\cost{x} = \omega\left(f(x)^2\right)$, the support of the
  capacity-achieving input of channel~(\ref{eq:generic_ch1}) when $N
  \sim \Normal{0}{\sigma^2}$ is compact. Furthermore, the optimal
  input is discrete with finite number of mass points whenever
  $\mathcal{C}(\cdot)$ and $f(\cdot)$ satisfy the analyticity and
  connectedness conditions of Theorem~\ref{fmass}. 

  Under a compact support constraint, the optimal input is also
  discrete with finite number of mass points whenever $f(\cdot)$
  satisfies the conditions of Theorem~\ref{nfmass}.
\end{theorem}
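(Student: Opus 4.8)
The plan is to recognize Theorem~\ref{th0g} as a direct corollary of the four general results already established, specialized to the Gaussian case, once the conditions C5 through the strengthened-C8 have been verified for $N \sim \Normal{0}{\sigma^2}$. The preceding calculations in this subsection have already checked C7 (via the quadratic exponent bound) and the strengthened-C8 (which reduces to $e^{\eta^2/2\sigma^2} h(N) < \infty$), while C5 and C6 hold since the Gaussian PDF is positive, continuous, bounded, and has all moments finite. The crucial simplification is that for the Gaussian the tail envelopes coincide with the PDF at large $|x|$, so that both $\ln[1/\Tl{f(x)}]$ and $\ln[1/\Tu{f(x)}]$ are $\Theta(f^2(x))$. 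I would make this explicit first.

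First I would compute the ``cut-off'' rate for the Gaussian. Since $p_N(x) = \frac{1}{\sqrt{2\pi\sigma^2}} e^{-x^2/2\sigma^2}$ has a monotone tail, for large $|x|$ we have $\Tl{x} = \Tu{x} = p_N(x)$, whence
\begin{equation*}
  \ln \left[ \frac{1}{\Tl{f(x)}} \right] = \ln \left[ \frac{1}{\Tu{f(x)}} \right] = \ln\sqrt{2\pi\sigma^2} + \frac{f^2(x)}{2\sigma^2} = \Theta\!\left(f^2(x)\right),
\end{equation*}
using condition C2 which guarantees $|f(x)| \to \infty$. The first additive constant is negligible against the quadratic term, so the logarithmic cut-off rate is exactly of order $f^2(x)$.

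Next I would invoke the equivalences between the asymptotic orders. For the unboundedness statement, $\cost{x} = o(f^2(x))$ is equivalent to $\cost{x} = o\big(\ln[1/\Tu{f(x)}]\big)$, so Theorem~\ref{th00} applies directly and the optimal input is unbounded. For the compactness statement, $\cost{x} = \omega(f^2(x))$ is equivalent to $\cost{x} = \omega\big(\ln[1/\Tl{f(x)}]\big)$, so Theorem~\ref{th0} gives a compact support. The discreteness claim then follows from Theorem~\ref{fmass}: the Gaussian admits the entire-function analytic extension $p_N(z)$ exhibited above on every strip $\mathcal{S}_\eta$, so the only remaining hypotheses are the analyticity and connectedness conditions on $\cost{x}$ and $f(\cdot)$, which are assumed. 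The compact-support variant follows identically from Theorem~\ref{nfmass}.

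The main obstacle is essentially bookkeeping rather than mathematical difficulty: one must be careful that the $\Theta(f^2(x))$ identification is legitimate, i.e.\ that the additive constant $\ln\sqrt{2\pi\sigma^2}$ does not spoil the order relation when it is negative (for $\sigma^2 < 1/2\pi$). This is harmless because $f^2(x) \to \infty$ dominates any constant, so for large $|x|$ the expression is positive and grows quadratically; hence $o$ and $\omega$ comparisons against $\ln[1/\Tl{f(x)}]$ and against $f^2(x)$ genuinely coincide. Once this is noted, the theorem is an immediate instantiation of the four general theorems, and no further computation is required.
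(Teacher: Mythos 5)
Your proposal is correct and follows essentially the same route as the paper: the paper verifies C7 and the strengthened-C8 for the Gaussian and then states Theorem~\ref{th0g} as a direct specialization of Theorems~\ref{th0}, \ref{th00}, \ref{fmass} and~\ref{nfmass}, exactly as you do. Your explicit remark that $\ln\bigl[1/\Tl{f(x)}\bigr]$ and $\ln\bigl[1/\Tu{f(x)}\bigr]$ are both $\Theta\left(f^2(x)\right)$ (so that the $o$ and $\omega$ comparisons transfer, the additive constant being harmless) is the only step the paper leaves implicit, and it is handled correctly.
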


\underline{Previous work:} A possibly non-linear ($f(x) = x^n$, $n \in
\Naturals^*$) Gaussian channel under an even moment constraint
($\cost{x} = x^{2k}$) was considered in~\cite{fahsj} as a core channel
model from which results on multiple non-linear channel models were
derived. The authors applied a standard Hilbert space decomposition
using Hermite polynomials as bases and proved that, for $n < 2k$, the
capacity-achieving distribution has the following behavior:
\begin{itemize}
\item Whenever $n = k$ 
  \begin{itemize}
  \item if $n$ is odd, the optimal input $F^*$ is absolutely continuous.
  \item if $n$ is even, $F^*$ is discrete with no accumulation points.
  \end{itemize}
\item Whenever $n < k$, $F^*$ is discrete with finite number of mass
  points.
\item Whenever $k < n < 2k$, $F^*$ is discrete with no accumulation
  points.
\end{itemize}

We point out that while the results stated in Theorem~\ref{th0g} do
not cover the limiting case $n = k$ --which corresponds to the case
$\cost{x} = \theta\left(f^2(x)\right)$, the result for the case ``$n <
k$'' is identical. Whenever $k < n$, Theorem~\ref{th0g} states that
the support of $F^*$ is not bounded; a conclusion that could not be
reached in~\cite{fahsj}.

\subsection{Gaussian Mixtures}
\label{gaumix}

Gaussian mixtures are widely used as more tractable models to some
non-Gaussian noise statistics~\cite{blu,amir}.  One approach in
dealing with such distributions is based on the observation that in
limiting cases Gaussian mixtures are nearly Gaussian and they are
simplified accordingly. The PDF of a Gaussian mixture RV is:
\begin{equation*}
  p_N(x) = \sum_{i=1}^{n}\alpha_i \, p_{N_{i}}(x),
\end{equation*} 
where $n \in \Naturals^*$ and for $1 \leq i \leq n$,
\begin{itemize}
\item[$\bullet$] $N_i \sim \Normal{\mu_i}{\sigma^2_i}$ are Gaussian
  RVs with mean $\mu_i$ and variances $\sigma^2_i \neq 0$. We assume
  WLOG that $\sigma_1 \geq \cdots \geq \sigma_n$.
\item[$\bullet$] $0 \leq \alpha_{i} \leq 1$, and
  $\sum_{i=1}^{n}\alpha_i = 1$.
\end{itemize}

Before proceeding, we note that the rate of decay of this noise PDF is
\begin{eqnarray*}
  \ln \left[\frac{1}{p_N(x)}\right]  &=& \ln \left[ \frac{\frac{\sqrt{2 \pi \sigma^2_1}}{\alpha_{1}}e^{\frac{\left(x-\mu_1\right)^2}{2 \sigma^2_1}}}
    {1+\sum_{i=2}^{n}\frac{\alpha_i \sigma_1}{\alpha_1\sigma_i}e^{-\frac{\left(x-\mu_i\right)^2}{2 \sigma^2_i}+\frac{\left(x-\mu_1\right)^2}{2 \sigma^2_1}}} \right]\\
  &=& \ln \left[ \frac{\sqrt{2 \pi \sigma^2_1}}{\alpha_1} \right] + \frac{\left(x-\mu_1\right)^2}{2 \sigma^2_1} - \ln \left[1+\sum_{i=2}^{n}
    \frac{\alpha_i \sigma_1}{\alpha_1\sigma_i}e^{-x^2\left(\frac{1}{2 \sigma^2_i}-\frac{1}{2 \sigma^2_1}\right)+x \left(\frac{\mu_i}{\sigma_i^2}-\frac{\mu_1}{\sigma_1^2}\right) - \left(\frac{\mu_i^2}{2\sigma_i^2}-\frac{\mu_1^2}{2\sigma_1^2}\right)} \right]\\
  &=& \ln \left[ \frac{\sqrt{2 \pi \sigma^2_1}}{\alpha_1} \right] + \frac{\left(x-\mu_1\right)^2}{2 \sigma^2_1} - \Theta\left(\sum_{i=2}^{n}
    \frac{\alpha_i \sigma_1}{\alpha_1\sigma_i}e^{-x^2\left(\frac{1}{2 \sigma^2_i}-\frac{1}{2 \sigma^2_1}\right)+x \left(\frac{\mu_i}{\sigma_i^2}-\frac{\mu_1}{\sigma_1^2}\right) - \left(\frac{\mu_i^2}{2\sigma_i^2}-\frac{\mu_1^2}{2\sigma_1^2}\right)} \right)\\
  &=& \Theta(x^2).
\end{eqnarray*}

\underline{Checking the conditions:} Since in
Section~\ref{conditions}, we proved that condition C7 defines a
convex set of functions, then by the results of the Gaussian model,
each $p_{N_{i}}(\cdot)$ satisfies condition C7 and so does
$p_{N}(\cdot)$. To study discreteness, we let $p_N(z) =
\sum_{i=1}^{n}\alpha_i \, p_{N_{i}}(z)$, be an analytic extension of
$p_N(x)$ on the complex plane. Since
\begin{equation*}
  |p_N(z)| \leq \sum_{i=1}^{n}\alpha_i\,|p_{N_{i}}(z)| = \sum_{i=1}^{n}\frac{\alpha_i}{\sqrt{2 \pi \sigma^2_i}}e^{-\frac{(x-\mu_i)^2-y^2}{2\sigma^2_i}},
\end{equation*}
then, for a large-enough $|x|$,
\begin{equation*}
  \Tu{x} = \sup_{ \left\{z: \quad \Re(z) = x \, \& \, |\Im(z)| < \eta \right\}} \left| p_N(z) \right|
  \, \, \leq \, \, e^{\frac{\eta^2}{\sigma^2_n}} \, \, \sum_{i=1}^{n}\frac{\alpha_i}{\sqrt{2 \pi \sigma^2_i}}\, 
  e^{-\frac{(x-\mu_i)^2}{2\sigma^2_i}} = e^{\frac{\eta^2}{\sigma^2_n}} \, p_N(x),
\end{equation*}
which implies that strengthened C8 is valid of the finiteness of
\begin{equation*}
  - \int_{-\infty}^{+\infty} \Tu{x} \ln \Tl{x}\,dx,
\end{equation*}
as $h(N)$ is finite by virtue of the fact that $N$ has a finite
variance $\sigma^2 = \sum_{i=1}^{n}\alpha_i \sigma^2_i$.

Specializing the results to the channel at hand, we can state the
following:
\begin{theorem}
  \label{th0mg}
  Whenever $\cost{x} = o\left(f(x)^2\right)$, the support of the
  capacity-achieving input of channel~(\ref{eq:generic_ch1}) when $N$
  is a Gaussian mixture is unbounded.

  Whenever $\cost{x} = \omega\left(f(x)^2\right)$, the support of the
  capacity-achieving input of channel~(\ref{eq:generic_ch1}) when $N$
  is a Gaussian mixture is compact. Furthermore, the optimal input is
  discrete with finite number of mass points whenever
  $\mathcal{C}(\cdot)$ and $f(\cdot)$ satisfy the analyticity and
  connectedness conditions of Theorem~\ref{fmass}. 

  Under a compact support constraint, the optimal input is also
  discrete with finite number of mass points whenever $f(\cdot)$
  satisfies the conditions of Theorem~\ref{nfmass}.
\end{theorem}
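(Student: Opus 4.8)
The plan is to treat this theorem purely as a specialization of the four general results, Theorems~\ref{th0}, \ref{th00}, \ref{fmass} and \ref{nfmass}, once the cut-off rate $\ln\left[\frac{1}{p_N(f(x))}\right]$ has been pinned down for the Gaussian-mixture noise. The key input is the computation carried out just before the statement, which establishes that $\ln\left[\frac{1}{p_N(x)}\right] = \Theta(x^2)$: the tail of $p_N$ is governed by the widest component (recall $\sigma_1 \geq \cdots \geq \sigma_n$), so that the leading quadratic term $\frac{x^2}{2\sigma_1^2}$ dictates the growth rate. First I would record that, since $p_N'(x) = \sum_i \alpha_i \left(-\frac{x-\mu_i}{\sigma_i^2}\right) p_{N_i}(x)$ has every summand eventually of one sign, the mixture tail is monotone for large $|x|$; consequently $\Tl{x}$, $\Tu{x}$ and $p_N(x)$ coincide on the tails, and therefore $\ln\left[\frac{1}{\Tl{x}}\right] = \ln\left[\frac{1}{\Tu{x}}\right] = \Theta(x^2)$ as well.

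Next I would compose with $f$. By condition C2, $|f(x)| \rightarrow +\infty$ as $|x| \rightarrow +\infty$, so the estimate transports to $\ln\left[\frac{1}{\Tl{f(x)}}\right] = \ln\left[\frac{1}{\Tu{f(x)}}\right] = \Theta\left(f(x)^2\right)$. This $\Theta$ equivalence is the bridge that lets the two growth hypotheses on $\cost{x}$ be rewritten in the form required by the general theorems: since two $\Theta$-equivalent functions define the same $o(\cdot)$ and $\omega(\cdot)$ classes, the hypothesis $\cost{x} = o\left(f(x)^2\right)$ is equivalent to $\cost{x} = o\left(\ln\left[\frac{1}{\Tu{f(x)}}\right]\right)$, and $\cost{x} = \omega\left(f(x)^2\right)$ is equivalent to $\cost{x} = \omega\left(\ln\left[\frac{1}{\Tl{f(x)}}\right]\right)$.

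With these rewritings the conclusions follow. Applying Theorem~\ref{th00} to the first yields an unbounded optimal support, while applying Theorem~\ref{th0} to the second yields a compact support. For discreteness I would use that $p_N$, being a finite sum of Gaussians, is entire and hence analytically extendable to every strip $\mathcal{S}_{\eta}$; combined with the previously verified conditions C7 and strengthened-C8, Theorem~\ref{fmass} then gives a finite number of mass points whenever $\cost{x}$ and $f(\cdot)$ meet its analyticity and connectedness requirements, and Theorem~\ref{nfmass} gives the same conclusion under a compact-support constraint.

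I expect no genuinely hard step here: the only points requiring care are verifying that the mixture tail is monotone so that $\Tl{\cdot}$, $\Tu{\cdot}$ and $p_N$ may be identified on the tails, and the bookkeeping that the $\Theta\left(f(x)^2\right)$ relation preserves the $o$ and $\omega$ classes. Everything else is a direct invocation of the general machinery already established.
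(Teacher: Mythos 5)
Your proposal is correct and follows essentially the same route as the paper: the paper's own argument is precisely the ``Checking the conditions'' passage (C7 via the Gaussian case and the convexity of the C7 class, strengthened-C8 via $\Tu{x} \leq e^{\eta^2/\sigma_n^2}\,p_N(x)$ and the finiteness of $h(N)$) together with the observation that $\ln\left[\frac{1}{p_N(x)}\right] = \Theta(x^2)$, followed by direct invocation of Theorems~\ref{th0}, \ref{th00}, \ref{fmass} and~\ref{nfmass}. Your only additions --- the explicit derivative argument for eventual tail monotonicity, the identification of $\Tl{\cdot}$ and $\Tu{\cdot}$ with $p_N$ on the tails, and the remark that $\Theta$-equivalence preserves the $o$ and $\omega$ classes --- merely make explicit steps the paper leaves implicit.
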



\underline{Previous work:} To our knowledge, a formal analysis of
Gaussian mixtures channels has not been conducted before, and hence
Theorem~\ref{th0mg} states a new previously unknown result.  We note
that since the transitional rate of decay is $\theta
\left(f^2(x)\right)$, the capacity of the linear ($f(x) = x$) Gaussian
mixtures channel under an average power constraint ($\cost{x} = x^2$)
is not in the scope of this work. However, in~\cite{Fahs2} it was
shown that, except for Gaussian noise, the capacity of the linear
average power constrained channel is achieved by discrete statistics
for all noise distributions satisfying certain conditions, ones that
are indeed satisfied by Gaussian mixtures.

In Figure~{1}, we plot the numerically-computed capacity of a sample
Gaussian mixture channel. The results of~\cite{Fahs2} were used and an
optimal discrete input distribution that satisfies the necessary and
sufficient KKT condition was sought. The numerical computations were
conducted using Matlab. 


\begin{figure}[htb]
  \begin{center}
    \includegraphics[width=4in]{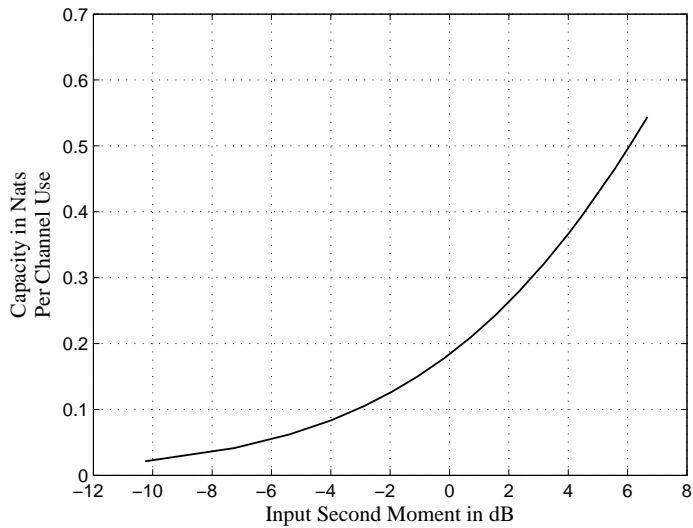}
    \caption{\small Capacity of the linear channel under the Gaussian mixture noise 
      $p_N(x) = 0.5 \,p_{N_1}(x) + 0.5\, p_{N_2}(x)$ where $N_1 \sim \mathcal{N}(0,1)$ 
      and $N_2 \sim \mathcal{N}(0,4)$. \label{1}}
  \end{center}
\end{figure}

\subsection{Generalized Gaussian}
\label{gengau}

Generalized Gaussians~\cite{mil} are viewed as a class of
distributions generalizing the well-known Laplacian and Gaussian
distributions.
Additive noise is often assumed to be a generalized Gaussian RV in
order to model the impulsive\footnote{By impulsive it is meant that
  extreme values of the noise signal are observed very frequently
  (i.e., with notable amount of probability).} nature of noise in
communication channels~\cite{shinde,bouvet,blackard,thgon}. In other
instances, these models were considered for the ultra-wide band
multiple access interference plus noise~\cite{fiorina2006,BSF08}.

Generalized Gaussians have exponentially decaying PDFs given by:
\begin{equation}
  p_N(x)  = \frac{a}{2b\Gamma\left(\frac{1}{a}\right)}e^{-\left(\frac{|x-\mu|}{b}\right)^{a}},
  \label{eq:GenGauss}
\end{equation}
where $\Gamma(\cdot)$ is the Gamma function, $a \in \Reals^{+*}$ is a
shape parameter, $b \in \Reals^{+*}$ is a scale parameter and $\mu \in
\Reals$ is a location parameter. In the remainder of this section, we
will assume WLOG that the location parameter $\mu$ is equal to zero.

\underline{Checking the conditions:} Condition C7 is
satisfied. Indeed,
\begin{eqnarray*}
  L(x+y) & = & \ln \left[ \frac{1}{p_N(x+y)} \right] = \ln \left[ \frac{2 b \Gamma\left(\frac{1}{a}\right)}{a} \right]
  +  \frac{|x+y|^a}{b} \\
  & \leq & \ln \left[ \frac{2 b \Gamma\left(\frac{1}{a}\right)}{a} \right] + \max\{1;2^{a-1}\} \, \frac{|x|^a+|y|^a}{b} \\
  & = & \max\{1;2^{a-1}\} \, \left[ L(x)+L(y) \right] + \min\{0;1-2^{a-1}\} \ln \left[\frac{2 b \Gamma\left(\frac{1}{a}\right)}{a} \right]\\
  & \leq & \kappa_{\text{l}}\left[ L(x)+L(y) \right],
\end{eqnarray*}
for some $\kappa_{\text{l}} > \max\{1;2^{a-1}\}$ for large-enough
values of $|x|$ and $|y|$.

One can therefore state the following theorem:
\begin{theorem}
  \label{th0gen}
  Whenever $\cost{x} = o\left(\left|f(x)\right|^a\right)$, the support
  of the capacity-achieving input of channel~(\ref{eq:generic_ch1})
  when $N$ is a generalized Gaussian RV~(\ref{eq:GenGauss}) is
  unbounded.

  Whenever $\cost{x} = \omega\left(\left|f(x)\right|^a\right)$, the
  support of the capacity-achieving input of
  channel~(\ref{eq:generic_ch1}) when $N$ is a generalized Gaussian
  RV~(\ref{eq:GenGauss}) is compact.
\end{theorem}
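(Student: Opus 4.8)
Theorem~\ref{th0gen} is a direct specialization of the two master theorems, Theorem~\ref{th0} and Theorem~\ref{th00}, to the generalized Gaussian noise model, so the plan is simply to compute the relevant ``cut-off'' rate $\ln\bigl[1/\Tl{f(x)}\bigr]$ explicitly and then invoke the general results. Since the generalized Gaussian PDF~(\ref{eq:GenGauss}) has a monotonic tail, the functions $\Tl{\cdot}$, $\Tu{\cdot}$ and $p_N(\cdot)$ coincide for large arguments, so I may work directly with $p_N$. The key computation is the logarithmic decay rate, which the excerpt has already performed in the course of verifying condition C7:
\begin{equation*}
  \ln\left[\frac{1}{p_N(x)}\right]
  = \ln\left[\frac{2b\,\Gamma\!\left(\tfrac{1}{a}\right)}{a}\right] + \frac{|x|^a}{b}
  = \Theta\!\left(|x|^a\right).
\end{equation*}
Composing with $f$, the cut-off rate is $\ln\bigl[1/\Tl{f(x)}\bigr] = \Theta\!\left(|f(x)|^a\right)$, with the constant factor $1/b$ and the additive constant being irrelevant to the $\omega$/$o$ comparison.

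The proof would then split into the two stated regimes. For the unbounded case, I would observe that $\cost{x} = o\!\left(|f(x)|^a\right)$ is equivalent to $\cost{x} = o\!\left(\ln\bigl[1/\Tu{f(x)}\bigr]\right)$, because $\Tu{f(x)} = \Theta(p_N(f(x)))$ by monotonicity of the tail and hence $\ln\bigl[1/\Tu{f(x)}\bigr] = \Theta\!\left(|f(x)|^a\right)$; Theorem~\ref{th00} then immediately yields that the support is unbounded. For the compact case, I would note symmetrically that $\cost{x} = \omega\!\left(|f(x)|^a\right)$ is equivalent to $\cost{x} = \omega\!\left(\ln\bigl[1/\Tl{f(x)}\bigr]\right)$, whence Theorem~\ref{th0} delivers compactness of the support.

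The only genuine thing to verify --- rather than an obstacle --- is that the hypotheses C5 through C8 of the master theorems are met by the generalized Gaussian, so that the theorems are applicable at all. Conditions C5 (positivity and continuity) and C6 (finite logarithmic moment, which holds since all absolute moments of a generalized Gaussian are finite) are immediate; C7 has just been verified explicitly in the excerpt; and C8 reduces, for a monotonically-tailed PDF, to the finiteness of the differential entropy $h(N)$, which holds here. Since all the analytic work is thereby inherited from the general framework, the proof is essentially a two-line invocation once the $\Theta\!\left(|f(x)|^a\right)$ identification is recorded; I would not expect any step to present real difficulty, the matching of the $\Theta$-rate to the $\omega$ and $o$ comparisons being the one place to state carefully.
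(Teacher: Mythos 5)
Your proposal is correct and follows essentially the same route as the paper: verify conditions C5--C8 (with C7 checked explicitly via the inequality $|x+y|^a \leq \max\{1;2^{a-1}\}(|x|^a+|y|^a)$, and C8 reducing to finiteness of $h(N)$ for the monotone tail), identify $\ln\bigl[1/p_N(f(x))\bigr] = \Theta\bigl(|f(x)|^a\bigr)$, and invoke Theorems~\ref{th00} and~\ref{th0} for the two regimes. No gaps.
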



\underline{Previous work:} To our knowledge, no previous information
theoretic work has appeared regarding this channel model. For the
linear channel under an average power constraint for instance, the
optimal input of channel~(\ref{eq:generic_ch1}) is bounded whenever
the noise is a generalized Gaussian with parameter $a < 2$.

\subsection{Polynomially-Tailed Distributions} 

Gaussian mixtures and generalized Gaussians are considered by many
researchers to fail to capture the ``impulsiveness'' of the
noise. This failure is due to several reasons, the most important of
which is that they do not possess the algebraic behavior of
heavy-tailed noise distributions encountered in typical communication
channels~\cite{Shaop}. One family of such distributions, the
``generalized Cauchy''~\cite{mil}, is found to be reasonable in modeling
the amplitude of atmospheric impulse noise~\cite{kassam}. In this
document, we refer by ``polynomially-tailed'' noise distributions to
all distributions satisfying
\begin{equation*}
  p_N(x) = \Theta\left(\frac{1}{|x|^{1+\alpha}}\right), \quad \text{for some } \alpha > 0,
\end{equation*}
which include among others: the Gamma, Pareto (one sided) and
alpha-stable distributions.

\underline{Checking the conditions:} In order to proceed, we use the
``obvious'' lower and upper bounds on $p_N(x)$ for large values of
$|x|$ instead of $p_N(x)$ itself and we state the corresponding
theorems accordingly. These bounds are of the form
$\frac{\zeta_{\text{l}}}{|x|^{1+\alpha}}$ and
$\frac{\zeta_{\text{u}}}{|x|^{1+\alpha}}$, for some $\zeta_{\text{l}}$
and $\zeta_{\text{u}} >0$. We prove now that condition C7 is
satisfied; Let
\begin{equation*}
  L(x) = \ln \left[ \frac{|x|^{1+\alpha}}{\zeta_{\text{l}}} \right] = (1+\alpha) \ln |x| -\ln \zeta_{\text{l}},
\end{equation*}
which implies that for large-enough $|x|$ and $|y|$,
\begin{eqnarray}
  L(x+y) = (1+\alpha)\ln|x+y| -\ln \zeta_{\text{l}} & \leq & (1+\alpha) \ln \left[ |x|+|y| \right] -\ln \zeta_{\text{l}} \nonumber\\
  & \leq & p(1+\alpha) \left[ \ln |x| + \ln |y| \right] - \ln \zeta_{\text{l}} \nonumber\\
  & = & p \left[ (1+\alpha) \ln |x| - \ln \zeta_{\text{l}} + (1+\alpha) \ln |y| - \ln \zeta_{\text{l}} \right] + (2 p-1) \ln\zeta_{\text{l}}\nonumber\\
  & \leq & 2 p \left[ (1+\alpha) \ln |x| -\ln \zeta_{\text{l}} + (1+\alpha) \ln |y| - \ln \zeta_{\text{l}} \right] \nonumber\\
  & = & 2 p \left[ L(x) + L(y) \right] \nonumber,
\end{eqnarray}
where $p > 1$. Consequently, the following holds:
\begin{theorem}
  \label{th0poly}
  Whenever $\cost{x} = \omega \left( \ln |f(x)| \right)$, the support
  of the capacity-achieving input of channel~(\ref{eq:generic_ch1})
  when $N$ is polynomially-tailed is compact.
\end{theorem}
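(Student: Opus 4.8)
The plan is to apply the general boundedness result of Theorem~\ref{th0} directly to the polynomially-tailed case, having already verified in the preamble that condition C7 holds for the lower envelope $\Tl{x}$. The key observation is that for a polynomially-tailed noise PDF, the quantity $\ln\left[\frac{1}{\Tl{f(x)}}\right]$ reduces to a logarithmic growth rate in $|f(x)|$, so that the hypothesis $\cost{x} = \omega\left(\ln|f(x)|\right)$ is \emph{equivalent} to the hypothesis $\cost{x} = \omega\left(\ln\left[\frac{1}{\Tl{f(x)}}\right]\right)$ required by Theorem~\ref{th0}. Once this equivalence is established, the conclusion is immediate.

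\begin{proof}
  For a polynomially-tailed noise distribution, we use the lower
  envelope provided by the ``obvious'' lower bound
  $\frac{\zeta_{\text{l}}}{|x|^{1+\alpha}}$ on $p_N(x)$ for large
  $|x|$, which satisfies condition C7 as shown above. For such a
  lower bound, and for large values of $|x|$,
  \begin{equation*}
    \ln \left[ \frac{1}{\Tl{x}} \right] = (1+\alpha) \ln |x| - \ln \zeta_{\text{l}} = \Theta\left( \ln |x| \right).
  \end{equation*}
  Consequently, since $|f(x)| \rightarrow +\infty$ as $|x| \rightarrow
  +\infty$ by condition C2,
  \begin{equation*}
    \ln \left[ \frac{1}{\Tl{f(x)}} \right] = \Theta\left( \ln |f(x)| \right).
  \end{equation*}
  Therefore the hypothesis $\cost{x} = \omega \left( \ln |f(x)|
  \right)$ is equivalent to $\cost{x} = \omega \left( \ln \left[
      \frac{1}{\Tl{f(x)}} \right] \right)$. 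The result now follows
  directly by Theorem~\ref{th0}.
\end{proof}

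The main obstacle is essentially bookkeeping rather than conceptual: one must be careful that the asymptotic $\Theta(\cdot)$ comparison between $\ln\left[\frac{1}{\Tl{f(x)}}\right]$ and $\ln|f(x)|$ correctly transfers the $\omega$-growth hypothesis across the two quantities. Since $\Tl{x}$ is defined as an infimum that coincides with $p_N(x)$ at large $|x|$ for monotonically-tailed distributions, and since the paper has already committed (in the ``Checking the conditions'' paragraph preceding the theorem) to using the envelope $\frac{\zeta_{\text{l}}}{|x|^{1+\alpha}}$ in place of $p_N(x)$, the reduction to Theorem~\ref{th0} is clean. The only point requiring the slightest care is invoking condition C2 to guarantee $|f(x)| \to +\infty$, which ensures the logarithm is eventually in its asymptotic regime and that the composition with $f$ preserves the $\Theta$-equivalence.
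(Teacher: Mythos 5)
Your proof is correct and follows essentially the same route as the paper: the paper verifies condition C7 for the polynomial lower envelope $\frac{\zeta_{\text{l}}}{|x|^{1+\alpha}}$, for which $\ln\bigl[\frac{1}{\Tl{x}}\bigr] = \Theta(\ln|x|)$, and then concludes by Theorem~\ref{th0}. Your explicit transfer of the $\omega$-hypothesis through the $\Theta$-equivalence (using C2 to push the asymptotics through $f$) is exactly the step the paper leaves implicit in its ``Consequently, the following holds.''
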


For example, for a linear channel subjected to an additive
polynomially-tailed noise, the optimal input has a bounded support for
any cost function that is super logarithmic (i.e., $\omega\left(\ln
  |x|\right)$) such as the average power constraint.

Note that the other ``range'' $\cost{x} = o \left( \ln |f(x)| \right)$
is outside the scope of this work as condition C4 will not be
satisfied. When it comes to discreteness and strengthened-C8, it
depends on the analyticity property of the specific $p_N(\cdot)$ under
consideration.

The remaining part of this Section is dedicated to two important types
of polynomially decaying distributions, for which we prove that the
discreteness results of Theorem~\ref{fmass} apply.

\subsubsection{Non-Totally Skewed Alpha-Stable and their Mixtures}
\label{asm}

The term ``stable'' is used because, under some constraints, these
distributions are closed under convolution. The stable distributions,
which are a subset of that of infinitely divisible distributions, are
the only laws that have the captivating property of being the
resultant of a limit of normalized sums of IID RVs. This result is
referred to as the Generalized Central Limit Theorem (GCLT), a
property that constitutes one of the main reasons behind the adoption
of Gaussian statistics for noise models in communication channels.

Though the Gaussian distribution is one of the stable laws, it
represents the exception: it is unique in the sense that it is the
only one that has a finite variance and an exponential tail; All
others have an infinite variance and a polynomial tail. A complete
literature on the theory of stable distributions can be found in
\cite{fel,kolmo,zolo,newzolo}. In this document we use the term
``alpha-stable'' to refer to stable variables {\em other than the
  Gaussian\/}. Although only few alpha-stable RVs have closed form
densities (namely the Cauchy and the L\'{e}vy laws), these
distributions are well characterized in the Fourier domain: The
characteristic function\footnote{The characteristic function
  $\phi(t)$ of a distribution function $F(x)$ is defined by:
  \begin{equation*}
    \phi(t) = \int_{\Reals} e^{it x}\,dF(x). 
  \end{equation*}}
of an alpha-stable RV is given by:
\begin{align*}
  \phi(t) = \exp \left[ i \delta t - \asexp{t} \right], \qquad
  & \biggl(0< \alpha < 2 \quad -1 \leq \beta \leq 1 \quad \gamma \in
  \Reals^{+*} \quad \delta \in \Reals \biggr),
\end{align*}
where $\sgn(t)$ is the sign of $t$, and the function
$\Phi(\cdot)$ is given by:
\begin{equation*}
  \Phi(t) = \left\{ \begin{array}{ll} 
      \displaystyle \tan \left(\frac{\pi \alpha}{2} \right) \quad & \alpha \neq 1 \\
      \displaystyle  -\frac{2}{\pi} \ln|t| \quad & \alpha = 1.
    \end{array} \right.
\end{equation*} 

The constant $\alpha$ is called the ``characteristic exponent'',
$\beta$ is the ``skewness'' parameter, $\gamma$ is the ``scale''
parameter ($\gamma^{\alpha}$ is often called the ``dispersion'') and
$\delta$ is the ``location'' parameter. Such a RV will be denoted $N
\sim \mathcal{S}(\alpha,\beta,\gamma,\delta)$.

In what follows, we limit our analysis to non-totally skewed
alpha-stable variables, i.e., ones for which $|\beta| \neq 1$.

\underline{Checking the conditions:} For non-totally skewed laws, both
the right and the left tails are polynomially decaying as
$\Theta\left(\frac{1}{|x|^{\alpha+1}}\right)$ (see~\cite[Th.1.12,
p.14]{nolan:2012}), and Theorem~\ref{th0poly} holds. Furthermore,
whenever $\alpha \geq 1$ the alpha-stable variables are analytically
extendable, to the whole complex plane when $\alpha > 1$ and to some
horizontal strip when $\alpha = 1$~\cite[theorem 2.3.1 p. 48 and
remark 1 p. 49]{ibrlin}. We check in what follows the
strengthened-C8. We derive in Appendix~\ref{ratedec} a novel bound on
the rate of decay of the complex extension of the alpha-stable PDF
when $\alpha \geq 1$: For small-enough $\eta > 0$, there exist $\kappa
> 0$ and $n_0 > 0$ such that
\begin{equation}
  |p_{N}(z)| \leq \frac{\kappa}{|\Re(z)|^{\alpha + 1}}, \quad  \forall \, z \in \mathcal{S}_{\eta}: |\Re(z)| \geq n_0.
  \label{extbd}
\end{equation} 

This bound insures the validity of Theorems~\ref{fmass}
and~\ref{nfmass} whenever the conditions on $\mathcal{C}(\cdot)$ and
$f(\cdot)$ are satisfied, and hence the following theorem is valid:
\begin{theorem}
  \label{stath}
  Whenever $\cost{x} = \omega \left( \ln |f(x)| \right)$, the support
  of the capacity-achieving input of channel~(\ref{eq:generic_ch1})
  when $N \sim \mathcal{S}(\alpha,\beta,\gamma,\delta)$ is a
  non-totally skewed alpha-stable variable is compact.

  Whenever $\alpha \geq 1$, the optimal input is discrete with finite
  number of mass points whenever $\mathcal{C}(\cdot)$ and $f(\cdot)$
  satisfy the analyticity and connectedness conditions of
  Theorem~\ref{fmass}. 

  Under a compact support constraint, the optimal input is also
  discrete with finite number of mass points whenever $f(\cdot)$
  satisfies the conditions of Theorem~\ref{nfmass}.
\end{theorem}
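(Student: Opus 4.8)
The plan is to obtain Theorem~\ref{stath} as a specialization of the general results already proven, so the task reduces to checking that the non-totally skewed alpha-stable law meets the hypotheses of Theorems~\ref{th0poly}, \ref{fmass} and~\ref{nfmass}. Since for $|\beta| \neq 1$ both tails decay as $\Theta(1/|x|^{\alpha+1})$, the density is polynomially-tailed in the sense of the preceding subsection; consequently the compactness assertion is immediate from Theorem~\ref{th0poly} applied with $\cost{x} = \omega(\ln|f(x)|)$, which is nothing more than condition C4. No new estimate is required here.

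For the two discreteness assertions, I would first record the analytic extendability of $p_N(\cdot)$: for $\alpha \geq 1$ the density continues analytically to a horizontal strip $\mathcal{S}_{\eta}$ (to all of $\Complex$ when $\alpha > 1$, to a strip of bounded width when $\alpha = 1$), placing us within the hypotheses of Theorems~\ref{fmass} and~\ref{nfmass}. The only genuine verification is the strengthened-C8 condition for the $\Tu{\cdot}$ redefined in~(\ref{strength}). Here I would invoke the decay bound~(\ref{extbd}) for the complex extension: for large $|x|$ the supremum of $|p_N(z)|$ over the vertical segment $\Re(z) = x$, $|\Im(z)| < \eta$, is at most $\kappa/|x|^{\alpha+1}$, so $\Tu{x} = O(1/|x|^{\alpha+1})$. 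Since $\Tl{x} = \Theta(1/|x|^{\alpha+1})$ gives $\ln[1/\Tl{x}] = \Theta(\ln|x|)$, the integrand $\Tu{x}\ln[1/\Tl{x}]$ is $O(\ln|x|/|x|^{\alpha+1})$, integrable at infinity because $\alpha + 1 \geq 2 > 1$; over any compact set it is continuous and bounded. Hence the strengthened-C8 integral is finite.

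With strengthened-C8 in hand, Theorem~\ref{fmass} delivers a discrete optimum with finitely many mass points whenever $\mathcal{C}(\cdot)$ and $f(\cdot)$ satisfy the stated analyticity and connectedness conditions, and Theorem~\ref{nfmass} delivers the same conclusion under a compact-support constraint, where the cost hypothesis is dispensed with.

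The hard part is the strengthened-C8 verification, and more precisely the decay bound~(\ref{extbd}) on which it rests: controlling $|p_N(z)|$ off the real axis is delicate because alpha-stable densities have no closed form, so the estimate must be built from the characteristic function. Since~(\ref{extbd}) is established separately in Appendix~\ref{ratedec}, within this proof the remaining subtlety is merely to ensure the strip width $\eta$ is taken small enough that~(\ref{extbd}) applies while the analytic continuation persists --- a nonissue for $\alpha > 1$ but a genuine constraint for $\alpha = 1$, where the extension lives only on a bounded strip.
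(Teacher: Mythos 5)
Your proposal is correct and follows essentially the same route as the paper: cite the two-sided polynomial tail $\Theta(1/|x|^{\alpha+1})$ for $|\beta|\neq 1$ to invoke Theorem~\ref{th0poly}, cite analytic extendability for $\alpha \geq 1$, and reduce the discreteness claims to the strengthened-C8 condition, which rests on the appendix bound~(\ref{extbd}). You in fact spell out the integrability of $\Tu{x}\ln\left[1/\Tl{x}\right]$ a bit more explicitly than the paper does, and correctly identify~(\ref{extbd}) as the only nontrivial ingredient.
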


Note that by virtue of the fact that condition C7 defines a convex
set, the results presented here for one alpha-stable variable are
valid for any convex combinations of them.

\underline{Previous work:} The capacity of the additive linear channel
was considered in~\cite{Fahs12}, where the noise is modeled as
symmetric alpha-stable ($\beta =0$) for the range $\alpha \geq 1$.
Subjected to a fractional $r$-th moment constraint, $\E{ \left| X
  \right|^r } \leq a$, $a >0$ and $r >1$, the optimal input was found
to be achieved by discrete statistics.  Theorem~\ref{stath}
generalizes this result to cover the non totally-skewed alpha-stable
family and generic input cost functions that are
``super-logarithmic''.  As a direct application of
Theorem~\ref{stath}, it can be seen that the result of~\cite{Fahs12}
also holds for $|\beta| \neq 1$ (not necessarily equal to $0$) and for
the range $r \leq 1$.

We use a specialized numerical MATLAB package~\cite{fahsj} to search
for the positions of the optimal points and their respective
probabilities whenever the optimal input is discrete. In Figure~{2},
we plot the capacity of channel~(\ref{eq:generic_ch1}) whenever $f(x)
= x$, $\mathcal{C}(|x|) = x^2$ and $N \sim \mathcal{S}(\alpha,0,1,0)$
for $\alpha = 1, 1.2, 1.5$ and $1.8$. The capacity curves clearly
shows that as $\alpha$ gets bigger the capacity is higher. This is in
accordance with the fact that the lower the value of $\alpha$, the
distribution becomes heavier.

\begin{figure}[htb]
  \begin{center}
    \includegraphics[width=4in]{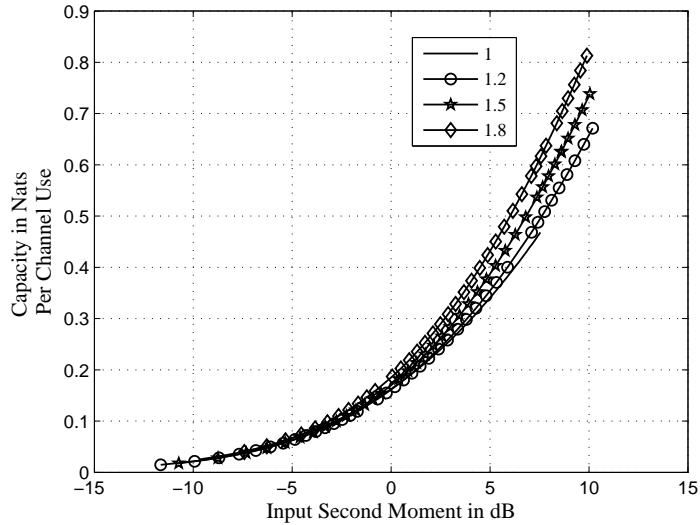}
    \caption{\small Capacity of the linear channel subject to
      symmetric ``standard'' alpha-stable noise $N \sim
      \mathcal{S}(\alpha, 0, 1, 0)$ for various values of the characteristic exponent
      $\alpha$.\label{2}}
  \end{center}
\end{figure}


\subsubsection{Composite noise: Gaussian + Alpha-Stable}

Recently, a compound noise model was adopted to capture potentially
different sources of noise: a Gaussian model for the thermal noise and
an alpha-stable model for the potential MAI, as is the case for ad-hoc
self configuring networks with applications in CDMA
networks~\cite{ElGhannoudi}, and in the general context of ultra
wideband technologies~\cite{Win}. Further information on the subject
can be found in~\cite{hughes,evans,chopra}. This noise model is widely
known as the Middleton class B model~\cite{Middle86,Kim98}.
We consider hence the following additive noise $N = N_1 + N_2$, where
\begin{itemize}
\item $N_1 \sim \mathcal{S}(\alpha,\beta,\gamma,\delta)$, which
  represents the effect of the MAI, assumed a non totally-skewed
  alpha-stable RV.
\item $N_2 \sim \Normal{\mu}{\sigma^2}$ is a Gaussian RV that models
  the effect of thermal noise.
\end{itemize} 

\underline{Checking the conditions:} It has been proved
in~\cite[Appendix I]{Fahs14} that $p_N(x)$ is polynomially-tailed
which implies that Theorem~\ref{th0poly} holds for the compound noise
model. In order to apply Theorems~\ref{fmass} and~\ref{nfmass} for the
channels impaired by the composite noise $N$, we use the fact that its
PDF is analytically extendable on $\Complex$ (for all values of $0 <
\alpha < 2$) and therefore on $\mathcal{S}_{\eta}$~\cite[Appendix
I]{Fahs14}, and check the strengthened-C8:
\begin{eqnarray*}
  \Tu{x} =  \sup_{|\Im(z)| < \eta}|p_N(z)| &\leq& \sup_{|\Im(z)| < \eta} \frac{1}{\sqrt{2 \pi \sigma^2}} \int \left| 
    e^{-\frac{(z-t)^2}{2\sigma^2}} \right|  \,p_{N_{1}}(t)\,dt \nonumber\\
  &\leq& \frac{1}{\sqrt{2 \pi \sigma^2}} e^{\frac{\eta^2}{2\sigma^2}} \int e^{-\frac{(x-t)^2}{2\sigma^2}} \,p_{N_{1}}(t)\,dt 
  = e^{\frac{\eta^2}{2\sigma^2}} p_N(x),
\end{eqnarray*}
which implies
\begin{equation*}
  - \int_{-\infty}^{+\infty} \Tu{x} \ln \Tl{x}\,dx   \leq  - e^{\frac{\eta^2}{\sigma^2}} \int_{-\infty}^{+\infty} p_N(x) 
  \ln p_N(x)\,dx = e^{\frac{\eta^2}{\sigma^2}} h(N) < \infty.
\end{equation*}

The following theorem therefore holds:
\begin{theorem}
  \label{comth}
  Whenever $\cost{x} = \omega \left( \ln |f(x)| \right)$, the support
  of the capacity-achieving input of channel~(\ref{eq:generic_ch1})
  when $N = N_1 + N_2$ is compact. The optimal input is discrete with
  finite number of mass points whenever $\mathcal{C}(\cdot)$ and
  $f(\cdot)$ satisfy the analyticity and connectedness conditions of
  Theorem~\ref{fmass}. 

  Under a compact support constraint, the optimal input is also
  discrete with finite number of mass points whenever $f(\cdot)$
  satisfies the conditions of Theorem~\ref{nfmass}.
\end{theorem}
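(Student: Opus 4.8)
The plan is to prove this theorem as a direct corollary of the three general results already established, namely Theorems~\ref{th0poly}, \ref{fmass} and~\ref{nfmass}, by confirming that the composite noise $N = N_1 + N_2$ meets each of their hypotheses. All of the required verifications were in fact carried out in the ``Checking the conditions'' discussion immediately preceding the statement, so the proof amounts to assembling those facts and invoking the general machinery. I would organize the argument in three short steps, one for each assertion in the theorem.

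First, for the compact-support claim, I would recall that $p_N(x)$ has been shown to be polynomially-tailed in~\cite[Appendix I]{Fahs14}, i.e.\ $p_N(x) = \Theta\!\left(\frac{1}{|x|^{1+\alpha}}\right)$. This places the composite channel squarely within the scope of Theorem~\ref{th0poly}. Since by hypothesis $\cost{x} = \omega\!\left(\ln|f(x)|\right)$, Theorem~\ref{th0poly} applies verbatim and yields at once that the capacity-achieving input has compact support. No new tail estimate is needed here beyond the cited polynomial decay.

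Second, for discreteness under the cost constraint, I would verify the two prerequisites of Theorem~\ref{fmass}: analytic extendability of $p_N$ on a strip $\mathcal{S}_{\eta}$, and the strengthened-C8 integrability condition. Analyticity follows because convolving the alpha-stable density with a Gaussian produces an entire function for \emph{all} $0 < \alpha < 2$ (in contrast to the pure alpha-stable case of Theorem~\ref{stath}, which required $\alpha \ge 1$), so the extension exists on every $\mathcal{S}_{\eta}$. Strengthened-C8 then reduces to the bound $\Tu{x} \le e^{\frac{\eta^2}{2\sigma^2}}\,p_N(x)$ obtained from passing the absolute value inside the Gaussian convolution, giving $-\int \Tu{x}\ln\Tl{x}\,dx \le e^{\frac{\eta^2}{\sigma^2}}\,h(N) < \infty$ since $h(N)$ is finite. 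With these two facts in hand, and given that $\cost{\cdot}$ and $f(\cdot)$ satisfy the analyticity and connectedness hypotheses, Theorem~\ref{fmass} delivers a discrete optimal input with finitely many mass points. The third assertion, discreteness under a compact-support constraint, follows identically by invoking Theorem~\ref{nfmass}, which uses the same analytic extendability of $p_N$ together with the stated conditions on $f(\cdot)$.

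The only genuinely delicate point is the strengthened-C8 verification, and it is precisely the Gaussian component that makes it clean: the convolution both guarantees entireness of $p_N$ and yields the uniform exponential control of $|p_N(z)|$ on the strip in terms of $p_N(\Re(z))$. I would therefore emphasize this convolution bound as the crux of the argument, noting that it simultaneously removes the $\alpha \ge 1$ restriction present for the pure alpha-stable family and collapses the integrability requirement to the already-known finiteness of the noise differential entropy.
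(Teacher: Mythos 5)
Your proposal is correct and follows essentially the same route as the paper: it invokes the polynomial tail of the composite density from~\cite[Appendix I]{Fahs14} to apply Theorem~\ref{th0poly}, and then verifies analytic extendability on $\mathcal{S}_{\eta}$ together with the strengthened-C8 condition via the same convolution bound $\Tu{x} \le e^{\frac{\eta^2}{2\sigma^2}}\,p_N(x)$ and the finiteness of $h(N)$, before citing Theorems~\ref{fmass} and~\ref{nfmass}. The emphasis you place on the Gaussian component simultaneously removing the $\alpha \ge 1$ restriction and collapsing strengthened-C8 to finiteness of the differential entropy is exactly the point the paper makes.
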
   

\underline{Previous work:} The capacity of the additive channel
subjected to the compound noise $N = N_1 + N_2$ was characterized
in~\cite{Fahs14} by the authors.

We plot in Figure~\ref{3} the capacity of the linear channel under an
input second-moment constraint whenever $N_1 \sim \mathcal{N}(0,1)$
and $N_2 \sim \mathcal{S}(\alpha,0,1,0)$ for the values of $\alpha =
1$ and $1.5$ where the optimal input at 7.27dB was found to have 16
and 18 mass points respectively.

We note that the composite noise channel cannot be approximated by a
Gaussian channel because the overall noise will be heavy tailed
whenever the stable noise is present. Indeed, the composite noise here
has infinite variance. If one where to ignore the presence of a
``mild'' stable noise component such as $N \sim
\mathcal{S}(1,0,0.1,0)$ and assume the additive noise to have only a
Gaussian component, $\frac{1}{2}\,\ln\left(1 +
  \frac{\E{X^2}}{\sigma^2}\right) = 0.356$ nats at $0.16$
dB. This is to be compared with the capacity of the composite channel
which is only $0.298$ nats/channel-use.


\begin{figure}[htb]
  \begin{center}
    \includegraphics[width=4in]{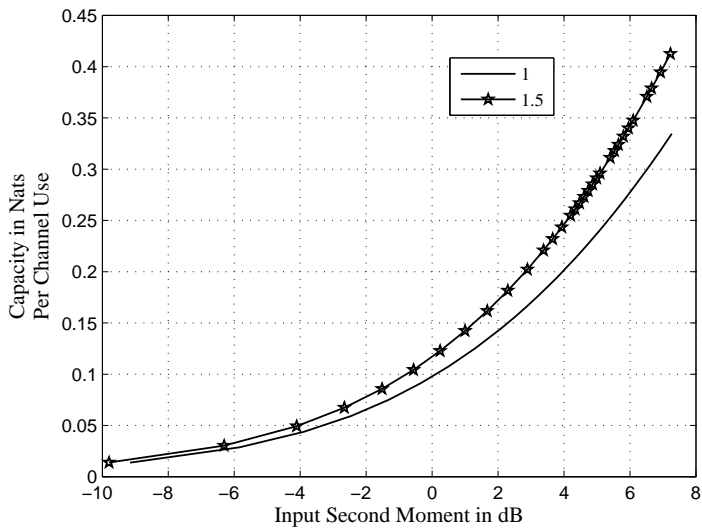}
    \caption{\small Capacity of the linear channel under the composite
      noise: a standard Gaussian \& a standard alpha-stable for
      $\alpha = 1 \, \& \, 1.5$.\label{3}}
  \end{center}
\end{figure}

\section{Conclusion}   
\label{conc} 

We studied the problem of characterizing the capacity and its
achieving distributions for additive noise channels of the form $Y =
f(X) + N$, where the input is subjected to an input cost constraint of
the form $\E{\mathcal{C} \left( |X| \right) } \leq A$, $A > 0$. We
proved that the type of the optimal input is intimately related to the
growth rate at infinity of the functions $f(x)$,
$\mathcal{C}\left(|x|\right)$ and $\frac{1}{p_N(x)}$ through a simple
relationship. Indeed, for monotonically tailed noise density functions
whenever $\mathcal{C}\left(|x|\right) =
\omega\left(\ln\left[\frac{1}{p_N\left(f(x)\right)}\right]\right)$,
the support of the optimal input is necessarily bounded. Conversely,
if $\mathcal{C}\left(|x|\right) =
o\left(\ln\left[\frac{1}{p_N\left(f(x)\right)}\right]\right)$, the
support is unbounded. Similar statements are true for
non-monotonically tailed PDFs with replacing $p_N(\cdot)$ by well
chosen lower and upper envelopes whose tails are
monotone. Furthermore, whenever some analyticity properties are
satisfied by the triplet, the discrete nature of the optimal
distribution is guaranteed. Discreteness holds also if additional
input compact support constraints are imposed.

These results are very broad; They are consistent with a multitude of
previously know capacity results, and provide solutions for a
multitude of new channel models. The generalization is one to many:
generic input-output functions, generic cost functions and generic
noise PDFs which include a large number of well-known noise models
such as the Gaussian, generalized-Gaussian, alpha-stable and their
mixtures. Interestingly, the results hold for all cost functions that
are $\omega\left(\ln |x|\right)$ where it is guaranteed that the
channel capacity exists and is finite.

The main idea behind the proofs of the theorems is the
characterization of the behavioral pattern of the KKT equation at
infinity after providing lower and upper bounds on some quantities of
interest. A key property is the subadditivity of the logarithm of the
inverse of a lower bound on the noise PDF at large values. This was
referred to as property C7 and is satisfied by all noise PDFs whose
tail has a dominant polynomial or exponential component.

A direct implication of the results concerns the question on what are
suitable power measures of the input signals of a communication
channel. Though the question seems to be absurd when dealing with
noise models with finite second moment where the natural power measure
would be the standard average power --which corresponds the cost
function $\mathcal{C}(|x|) = x^2$, defining a power measure when the
noise second moment is infinite is deemed crucial. This is due to the
fact that the natural signal-to-noise ratio would be equal to
zero. Based on our results, suitable average power measures should
correspond to cost functions which are ``at most''
$\Theta\left(\ln\left[\frac{1}{p_N\left(f(x)\right)}\right]\right)$
since otherwise capacity will be achieved by an input having a finite
power while the input space contains distributions having an infinite
one. Hence, each channel has its suitable average power measure
resulting from a suitable cost function. For example, a suitable cost
behaves ``at most'' like $\Theta(x^2)$ for the linear Gaussian channel
and has a logarithmic growth for channels with polynomially tailed
additive noise.

\appendices
\label{appen}

%

\section{Weak Differentiability of $I(\cdot)$ at $F^*$}
\label{ap:Prop_I}

\begin{theorem}
  Let $F^*$ be an optimal input distribution. Under a cost constraint
  $\int \cost{X}\,dF(x) \leq A$, $A > 0$, the mutual information
  $I(F)$ between the input and the output of
  channel~(\ref{eq:generic_ch1}) is weakly differentiable at $F^*$.
  \label{th:mutinf2}
\end{theorem}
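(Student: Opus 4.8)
The plan is to establish the one-sided Gateaux derivative of $I(\cdot)$ at $F^*$ along an arbitrary admissible direction and to exhibit its value explicitly. Fix $F \in \mathcal{P}_A$ and set $F_\theta = (1-\theta)F^* + \theta F$ for $\theta \in [0,1]$; by convexity of $\mathcal{P}_A$ each $F_\theta$ is admissible, and the induced output density is \emph{affine} in $\theta$, i.e. $p(y;F_\theta) = (1-\theta)\,p(y;F^*) + \theta\,p(y;F)$. Since $Y = f(X) + N$ gives $I(F) = h(Y;F) - h(N)$ with $h(N) < \infty$ by C5--C6, it suffices to differentiate the output entropy $h(Y;F_\theta) = -\int p(y;F_\theta)\ln p(y;F_\theta)\,dy$ at $\theta = 0$.

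Differentiating the integrand $-p\ln p$ formally and using $\int[p(y;F) - p(y;F^*)]\,dy = 0$ together with the definition of $i(\cdot;F^*)$ produces the candidate
\[
  \frac{d}{d\theta}I(F_\theta)\Big|_{\theta=0} = \int i(f(x);F^*)\,dF(x) - I(F^*) - h(N),
\]
which is exactly the expression that, specialized to point masses $F = \delta_{x}$, yields the KKT inequality~(\ref{eq:KKT_1}). Existence of the limit is free from concavity of $I$~\cite{FAF15}: the difference quotient $\theta \mapsto [I(F_\theta) - I(F^*)]/\theta$ is non-increasing, so the right limit exists in $(-\infty,+\infty]$, and optimality of $F^*$ (hence $I(F_\theta) \leq I(F^*)$) confines it to $[-\infty,0]$. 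The content of the proof is therefore to justify the interchange of limit and integral that validates the explicit formula.

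To apply dominated convergence I would write the difference quotient of $-p\ln p$ via the mean value theorem as $-[\ln p_\xi(y) + 1]\,[p(y;F) - p(y;F^*)]$, where $p_\xi(y)$ is a value between $p(y;F^*)$ and $p(y;F_\theta)$; for $\theta \leq \tfrac12$ one has $p_\xi(y) \geq \tfrac12\,p(y;F^*)$, whence $|\ln p_\xi(y) + 1| \leq 1 + \ln 2 + \ln[1/p(y;F^*)]$. The majorant is then $[p(y;F) + p(y;F^*)]\big(1 + \ln 2 + \ln[1/p(y;F^*)]\big)$, and proving it integrable is the crux. This is precisely what the preliminary machinery delivers: Lemma~\ref{lemlow} gives $p(y;F^*) \geq \tfrac12\,\Tl{y \mp y_0}$ for large $|y|$, so $\ln[1/p(y;F^*)] = O(L(y))$; the subadditivity C7 of $L$ together with C8 then yields $\int p_N(y - f(x))\,L(y)\,dy = O(L(f(x)))$, which bounds both $\int p(y;F^*)\ln[1/p(y;F^*)]\,dy = h(Y;F^*) = I(F^*) + h(N) < \infty$ and the cross term $\int p(y;F)\ln[1/p(y;F^*)]\,dy = \int i(f(x);F^*)\,dF(x)$.

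The main obstacle is the finiteness of this last cross term. Lemma~\ref{le:iUB} shows $i(f(x);F^*) = O(L(f(x)))$, which is finite for every fixed $x$ and in particular for the point-mass directions $F = \delta_{x}$ that generate the KKT conditions; for a general direction its finiteness rests on $\int L(f(x))\,dF(x) < \infty$, and where this fails concavity still furnishes a one-sided derivative in $[-\infty,0]$, which is all the optimality argument needs. Once the majorant is integrable, the dominated convergence theorem identifies the limit with the candidate formula, and the proof concludes.
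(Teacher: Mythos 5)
Your setup and the dominated-convergence machinery coincide with the paper's: the affine dependence of $p(y;F_\theta)$ on $\theta$, the mean-value-theorem form of the difference quotient, the minorant $p(y;F_{\theta+c(h)}) \geq \tfrac{1}{2}\,p(y;F^*)$, and the resulting majorant $\bigl(p(y;F)+p(y;F^*)\bigr)\ln\bigl[2/p(y;F^*)\bigr]$ are exactly the steps of Appendix~\ref{ap:Prop_I}. The gap is at the point you yourself flag as the crux: integrability of the cross term $\int i(f(x);F^*)\,dF(x)$ for an \emph{arbitrary} direction $F\in\mathcal{P}_A$. Lemma~\ref{le:iUB} only gives $i(f(x);F^*) = O\bigl(\ln\bigl[1/\Tl{f(x)}\bigr]\bigr)$, and the cost constraint $\int \cost{x}\,dF \leq A$ does not control $\int \ln\bigl[1/\Tl{f(x)}\bigr]\,dF$: the standing assumption is only $\cost{x}=\omega(\ln|f(x)|)$, so in the regime of Theorem~\ref{th00} (cost $o$ of the log-reciprocal noise tail) one can place small masses far out to build admissible $F$ for which your sufficient condition $\int L(f(x))\,dF<\infty$ fails. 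Your fallback --- that concavity still furnishes a one-sided derivative in $[-\infty,0]$ --- does not close this: since $i(f(x);F^*)\ge 0$ (because $p(y;F^*)\le 1$), if the cross term were $+\infty$ the derivative could not coincide with the candidate formula $\int i(f(x);F^*)\,dF - h_Y(F^*)$, and it is precisely that explicit identification, not the mere existence of a limit, that the KKT conditions~(\ref{eq:KKT_1}) require.

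The paper closes the gap with a bootstrap that your proposal lacks. It first verifies the derivative formula for a specific two-point direction $F_s$ placing mass $B_s/\mathcal{C}(x_s)$ at $x_s$, where integrability is immediate from Lemma~\ref{lemlow} and conditions C7--C8. Optimality of $F^*$ then forces $I'(F^*,F_s)\le 0$, which after substituting the expression of $F_s$ yields the pointwise, linear-in-cost bound
\begin{equation*}
 i(f(x_s);F^*) \;\le\; \frac{h_Y(F^*)-i(f(0);F^*)}{B_s}\,\mathcal{C}(x_s) + i(f(0);F^*),
\end{equation*}
valid for all sufficiently large $x_s$ and hence, by monotonicity of $\mathcal{C}$, for all $|x|\ge x_s$. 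Integrating this against any $F\in\mathcal{P}_A$ over $\{|x|>x_s\}$ and using $\int\mathcal{C}(|x|)\,dF\le A$ (together with a compactness argument on $\{|x|\le x_s\}$) gives $\int i(f(x);F^*)\,dF<\infty$ for every admissible $F$. Replacing the growth envelope $L(f(x))$ by the cost $\mathcal{C}(|x|)$ --- which \emph{is} integrable under the constraint --- via the optimality of $F^*$ itself is the essential missing idea; with it, your dominated-convergence step goes through verbatim.
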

Before proceeding to the proof, we note that the existence of an
optimal $F^*$ and the finiteness of the solution are insured as per
the discussion in Section~\ref{KKT}.

\begin{proof}
  Let $\theta$ be a number in $[0,1]$, $(F^*,F) \in \mathcal{P}_{A}
  \times \mathcal{P}_{A}$ and define $F_{\theta} = (1-\theta)F^* +
  \theta F$. The weak derivative of $I(.)$ at $F^*$ in the direction
  of $F$ is defined as,
  \begin{equation*}
    I'(F^*,F) \triangleq \lim_{\theta\to 0^+}\frac{I(F_{\theta})-I(F^*)}{\theta},
  \end{equation*}
  whenever the limit exists.
  For simplicity, we denote by
  \begin{equation*}
    t(x) = i(x;F^*),
  \end{equation*}
  where $i(x;F)$ is given by equation~(\ref{eq:i}),
  and we prove
  \begin{eqnarray*}
    I'(F^*,F) &=& -\int p(y;F)\ln p(y;F^*) \, dy \,\, - \,\, h_Y(F^*)\\
    &=& \int t(f(x)) \, dF(x) \,\,-\,\, h_Y(F^*),
  \end{eqnarray*}
  where by Tonelli, the interchange is valid as long as the integral
  term is finite which we prove next. 
  Using L'H\^{o}pital's rule,
  \begin{align}
    I'(F^*,F) = & \lim_{\theta\to 0^+}\frac{I(F_\theta)-I(F^*)}{\theta} = \lim_{\theta\to 0^+}\frac{h_Y(F_\theta)-h_Y(F^*)}{\theta} \nonumber\\
    = & \lim_{\theta \rightarrow{0^{+}}} -\left[\int p(y;F_\theta)
      \ln{p(y;F_\theta)}\,dy\right]^{'}, \label{eq:weakderiv}
  \end{align}
  where the derivative is with respect to $\theta$. In order to evaluate $\left[\int
    p(y;F_\theta)\ln{p(y;F_\theta)}\,dy\right]^{'}$ we use the
  definition of the derivative
  \begin{align}
    &\left[ \int p(y;F_\theta)\ln{p(y;F_\theta)}\,dy \right]^{'} \nonumber\\
    &= \lim_{h \rightarrow{0}} \left[ \frac{\int p(y;F_{\theta + h})\ln{p(y;F_{\theta + h})}\,dy}{h} - \frac{\int p(y;F_\theta)
        \ln{p(y;F_\theta)}\,dy}{h} \right], \nonumber
  \end{align}
  where by the limit we mean that both, the limit as $h$ goes to
  $0^{+}$ and the limit as $h$ goes to $0^{-}$ exist and are equal. In
  what follows, we only provide detailed evaluations as $h$ goes to
  $0^{+}$ since those when $h$ goes to $0^{-}$ are similar. Using the
  mean value theorem, for some $0 \leq{c(h)} \leq{h}$,
  \begin{align*}
    & \lim_{h \rightarrow{0+}} \left[ \frac{\int p(y;F_{\theta + h})\ln{p(y;F_{\theta + h})}\,dy}{h} - \frac{\int p(y;F_\theta)\ln{p(y;F_\theta)}\,dy}{h} \right] \\
    = & \lim_{h \rightarrow{0^{+}}} \int \left[ p(y;F_\theta)\ln{p(y;F_\theta)}\right]^{'} _{|_{\theta +c(h)}}\,dy.
  \end{align*}

  Now, since $p(y;F_\theta) = p(y;F^*) + \theta \left[p(y;F) -
    p(y;F^*)\right]$,
  \begin{align}
    &\lim_{h \rightarrow{0^{+}}} \int \left[ p(y;F_\theta)\ln{p(y;F_\theta)}
    \right]^{'}_{|_{\theta +c(h)}}\,dy \nonumber \\
    =& \lim_{h \rightarrow{0^{+}}} \int \left[p(y;F) - p(y;F^*)\right]
    \ln{p(y;F_{\theta + c(h)})} \,dy + \int\left[p(y;F) - p(y;F^*)\right]\,dy \nonumber \\
    =& \hspace{- 0.01cm} \int \lim_{h \rightarrow{0^{+}}} \left[p(y;F) - p(y;F^*)\right]
    \ln{p(y;F_{\theta + c(h)})} \, dy\label{eq:jusint}\\
    =& \int \left[p(y;F) - p(y;F^*)\right]\ln{p(y;F_\theta)} \,dy, \label{eq:contpc1} 
  \end{align}
  where~(\ref{eq:contpc1}) is due to the fact that $c(h)
  \rightarrow{0}$ as $h \rightarrow{0}$ and that $p(y;F_\theta)$ is
  continuous in $\theta$ by virtue of its linearity,
  and~(\ref{eq:jusint}) is due to DCT. Indeed,
  \begin{equation*}
    \left| \left[ p(y;F) - p(y;F^*) \right] \ln{p(y;F_{\theta + c(h)})} \right| \leq \bigl( p(y;F) + p(y;F^*) \bigr) \left|
      \ln{p(y;F_{\theta + c(h)})} \right|,
  \end{equation*}
  and
  \begin{align*}
    p(y;F_{\theta + c(h)}) & = \left[1 - \theta - c(h)\right]p(y;F^*) + \left[\theta + c(h)\right]p(y;F) \\
    & \geq{ \left[1 - \theta - c(h)\right]p(y;F^*) } \geq{ \frac{1}{2} \, p(y;F^*)},
  \end{align*}
  whenever $\theta + c(h) \leq{\frac{1}{2}}$, which is true since both
  $\theta$ and $c(h)$ are arbitrarily small.  Therefore, since $0 <
  p(y;F)< 1$ for all $F$
  \begin{equation*}
    \left| \left[ p(y;F) - p(y;F^*) \right] \ln{p(y;F_{\theta + c(h)})} \right| \leq - \bigl( p(y;F) + p(y;F^*) \bigr) 
    \ln{\left[\frac{1}{2} \, p(y;F^*)\right]}.
  \end{equation*}

  Since $h_{Y}(F) = - \int \,p(y;F)\ln{p(y;F)} \, dy$ is finite for
  all $F$ in $\mathcal{P}_{A}$~\cite[Theorem 2]{FAF15}, $-
  p(y;F^*)\ln{p(y;F^*)}$ is integrable. It remains to prove that $-
  p(y;F)\ln{p(y;F^*)}$ is integrable to justify~(\ref{eq:jusint}) and
  hence~(\ref{eq:contpc1}).  To this end, we will proceed by choosing
  first a specific $F(\cdot)$, namely
  \begin{equation*} 
    F_{s}(x) = \left( 1-\frac{B_s}{\mathcal{C}(x_s)} \right) u(x)\footnote[5]{where $u(x)$ 
      denotes the Heaviside unit step function.} + \frac{B_s}{\mathcal{C}(x_s)} u(x-x_{s}),
  \end{equation*} 
  for some $x_{s} > 0$ such that $\mathcal{C}(x_s) > 0$ and where $(0
  <) \,B_s < \min\,\{A;\mathcal{C}\left(x_s\right)\}$. We note that
  $F_{s} \in \mathcal{P}_{A}$ since $\mathcal{C}(0) = 0$ and hence
  $\int \mathcal{C}(|x|) dF_s = B_s \leq A$. If $F_s$ were the input
  distribution, it would induce the following output
  \begin{equation}
    p(y;F_{s}) =  \left( 1-\frac{B_s}{\mathcal{C}(x_s)} \right) p_{N}(y) 
    +\frac{B_s}{\mathcal{C}(x_s)} p_{N}(y - f(x_{s})).
    \label{eq:py_Fs}
  \end{equation}
  Equation~(\ref{eq:py_Fs}) along with lemma~\ref{lemlow} and
  properties C7 and C8 show that $-p(y;F_{s})\ln{p(y;F^*)}$ is
  integrable and~(\ref{eq:contpc1}) is justified for $F \equiv
  F_s$. Hence,
  \begin{eqnarray*}
    I'(F^*,F_{s}) &=& \lim_{\theta \rightarrow{0^{+}}} -\left[\int p(y;F^*_{\theta})
      \ln{p(y;F^*_{\theta})}\,dy\right]^{'}\\
    &=& \lim_{\theta \rightarrow{0^{+}}} \int \left[p(y;F_s) - p(y;F^*)\right]\ln{p(y;F^*_{\theta})} \,dy =  \int t(f(x))dF_{s}(x)\,\,-\,\,h_Y(F^*).
  \end{eqnarray*}
  where the interchange between the limit and integral sign is
  justified in an identical fashion as done to
  validate~(\ref{eq:contpc1}).

  Now, since $F^{*}$ is optimal, necessarily $I'(F^*,F_{s}) \leq{0}$
  (see Appendix C in~\cite{fahsj}), which implies that
  \begin{equation*}
    \int t(f(x)) \, dF_{s}(x) \leq h_Y(F^*).
  \end{equation*}
  
  Plugging in the expression of $F_{s}(x)$ yields,
  \begin{equation}
    \left(1-\frac{B_s}{\mathcal{C}(x_s)}\right)t(f(0)) + \frac{B_s}{\mathcal{C}(x_s)}t(f(x_{s})) \leq h_Y(F^*) \Leftrightarrow \quad  t(f(x_{s})) \leq \frac{h_Y(F^*) - t(f(0))}{B_s} \, \mathcal{C}(x_s) + t(f(0)). \label{uppco}
  \end{equation}
  The above equation is valid for any $x_s > 0$ (such that
  $\mathcal{C}(x_s) > 0$) and therefore for all $|x| \geq x_s$ since
  $\mathcal{C}(|x|)$ is non-decreasing in $|x|$. we proceed by writing
  \begin{equation*}
    \int t(f(x)) \, dF =  \int_{|x| \leq x_s} t\left(f(x)\right) \, dF + \int_{|x| > x_s} t(f(x)) \, dF.
  \end{equation*} 
  As for the first integral term, we have:
  \begin{align}
    &\int_{|x| \leq x_s} t\left(f(x)\right) \, dF\nonumber\\
    &= -\int_{|x| \leq x_s} \int  p_N(y-f(x)) \,\ln\,p(y;F^*)\,dy\, dF\nonumber\\
    &= -\int_{|x| \leq x_s} \int_{|y| \geq y_0} p_N(y-f(x)) \,\ln\,p(y;F^*)\,dy\, dF\nonumber\\
    &\quad -\int_{|x| \leq x_s} \int_{|y| \leq y_0} p_N(y-f(x)) \,\ln\,p(y;F^*)\,dy\, dF\label{lasteq0}
  \end{align}
  Using lemma~\ref{lemlow} and property C7, the first term of
  equation~(\ref{lasteq0}) is finite. As for the second term, it is
  finite by the fact that $p_(y;F^*)$ is positive and continuous hence
  achieves a positive minimum on compact subsets of $\Reals$. When it
  comes to the range $|x| > x_s$, we use the upper bound
  in~(\ref{uppco}) which gives:
  \begin{eqnarray*}
    \int_{|x| > x_s} t(f(x)) \, dF &\leq&  \int_{|x| > x_s}\left(\frac{h_Y(F^*) - t(f(0))}{B_u} \, \mathcal{C}(|x|) + t(0)\right)\,dF\\
    &\leq& \frac{h_Y(F^*) - t(f(0))}{B_u} A + t(f(0)),
  \end{eqnarray*}
  which is finite. 

  In conclusion,
  \begin{equation*}
    -\int p(y;F) \ln{p(y;F^*)} \,dy = \int t(f(x)) \, dF < \infty,
  \end{equation*}
  and $ I'(F^*,F) = \int t(f(x)) \,dF - h_Y(F^*)$, $\forall \, F \in
  \mathcal{P}_{A}$.
\end{proof}

\subsection*{Cost}
  The mapping from $\mathcal{F}$ to $\Reals$:
  \begin{equation*}
    \mathcal{T}(F) = \int\,\cost{x}\,dF - A
  \end{equation*}  
 is weakly differentiable on $\mathcal{P}_{A}$ as well. In fact,
  \begin{equation*}
    \mathcal{T}'(F^*,F) = \mathcal{T}(F) - \mathcal{T}(F^*),
  \end{equation*}          
  which is finite, since $-A < \mathcal{T}(F) \leq 0$ for all $F \in
  \mathcal{P}_{A}$.


\section{Rate of Decay of $S(\alpha,\beta,\gamma,\delta)$ on the
  Horizontal Strip}
\label{ratedec}

We study in this appendix the rate of decay of alpha-stable
distributions $S(\alpha,\beta,\gamma,\delta)$ on the horizontal strip
$\mathcal{S}_{\eta} = \{z \in \Complex : |\Im(z)| < \eta\}$ where
$\eta$ is a small-enough positive number.
 
We prove in this appendix that $|p_N(z)| = O
\left(\frac{1}{|\Re(z)+\delta|^{\alpha + 1}} \right)$ as $|\Re(z)|
\rightarrow \infty$, whenever $N \sim S(\alpha,\beta,\gamma,\delta)$
and $z \in \mathcal{S}_{\eta}$. The study is limited to the case:
$\alpha \in [1,2)$, $\beta \in ]-1,1[$, $\gamma \in \Reals^{+*}$ and
$\delta \in \Reals$.

Before we proceed, we first prove the following Lemma:
\begin{lemma}
  Whenever $N \sim S(\alpha, \beta, \gamma, \delta)$, where $\alpha
  \in [1,2)$, $\beta \in ]-1,1[$, $\gamma \in \Reals^{+*}$ and $\delta
  \in \Reals$, $p_N( \cdot )$ can be formally extended on
  $\mathcal{S}_{\eta} = \{z \in \Complex : |\Im(z)| < \eta\}$ as
  \begin{equation}
    p_N(z)  = \frac{1}{2 \pi}\int_{\Reals} e^{-izt} \phi(t) dt.
    \label{compext}
  \end{equation}
  \label{le:AS_Ext}
\end{lemma}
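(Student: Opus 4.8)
The plan is to define the candidate extension
\[
g(z) = \frac{1}{2\pi}\int_{\Reals} e^{-izt}\,\phi(t)\,dt, \qquad z \in \mathcal{S}_{\eta},
\]
and to establish three facts in turn: that the integral converges absolutely on the strip, that $g$ is holomorphic there, and that $g$ restricts to $p_N$ on the real axis. The whole argument hinges on a single preliminary computation, namely the exact modulus of the characteristic function $\phi$, so I would carry that out first.

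Writing $\phi(t) = \exp\bigl[i\delta t - \asexp{t}\bigr]$, I observe that $\Phi(t)$ is real-valued in both regimes ($\Phi(t) = \tan(\pi\alpha/2)$ for $\alpha \neq 1$ and $\Phi(t) = -\frac{2}{\pi}\ln|t|$ for $\alpha = 1$). Consequently the location term $i\delta t$ and the skewness term $i\beta\,\sgn(t)\Phi(t)\,\gamma^\alpha|t|^\alpha$ are purely imaginary, and taking real parts in the exponent leaves only $-\gamma^\alpha|t|^\alpha$, whence $|\phi(t)| = e^{-\gamma^\alpha |t|^\alpha}$. Now for $z = x + iy \in \mathcal{S}_{\eta}$ one has $|e^{-izt}| = e^{yt}$, so the integrand obeys $\bigl|e^{-izt}\phi(t)\bigr| = e^{yt}\,e^{-\gamma^\alpha|t|^\alpha} \leq e^{\eta|t|}\,e^{-\gamma^\alpha|t|^\alpha}$. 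When $\alpha \in (1,2)$ the superlinear exponent $|t|^\alpha$ dominates $\eta|t|$ for every $\eta$, so the bound is integrable on $\Reals$ and the strip may be taken arbitrarily wide; when $\alpha = 1$ the bound is $e^{(\eta-\gamma)|t|}$, integrable precisely when $\eta < \gamma$, which explains the restriction to a thin strip and is where the hypothesis ``$\eta$ small enough'' is used.

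With the domination in hand, analyticity follows from Morera's theorem along the lines already used for Lemma~\ref{lemanalytpro}. For any compact $K \subset \mathcal{S}_{\eta}$ the imaginary parts stay below some $\eta_K < \eta$, yielding a single integrable majorant $e^{\eta_K|t|}e^{-\gamma^\alpha|t|^\alpha}$ valid for all $z \in K$; continuity of $g$ then follows from the Dominated Convergence Theorem since $z \mapsto e^{-izt}$ is continuous. For a triangle $\Delta \subset \mathcal{S}_{\eta}$, the same majorant justifies invoking Fubini's theorem to write $\int_{\partial\Delta} g(z)\,dz = \frac{1}{2\pi}\int_{\Reals}\phi(t)\bigl(\int_{\partial\Delta}e^{-izt}\,dz\bigr)\,dt = 0$, the inner integral vanishing because $z \mapsto e^{-izt}$ is entire. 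Morera's theorem then gives holomorphy of $g$ on $\mathcal{S}_{\eta}$.

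Finally, since $|\phi(t)| = e^{-\gamma^\alpha|t|^\alpha}$ is integrable on $\Reals$, the Fourier inversion theorem guarantees that $N$ possesses a bounded continuous density given exactly by $p_N(x) = \frac{1}{2\pi}\int_{\Reals}e^{-ixt}\phi(t)\,dt = g(x)$ for real $x$. Thus $g$ is holomorphic on the strip and agrees with $p_N$ on $\Reals$, which is the asserted extension (and, the strip being connected, the unique one). I expect the only delicate point to be the modulus computation together with the $\alpha = 1$ width constraint; once $|\phi(t)|$ is identified, the convergence, Morera, and inversion steps are routine.
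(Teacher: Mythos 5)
Your proof is correct, and the analytic core --- the modulus computation $|\phi(t)| = e^{-\gamma^{\alpha}|t|^{\alpha}}$, the dominating bound $e^{\eta|t|}e^{-\gamma^{\alpha}|t|^{\alpha}}$ on the strip, and the Morera/DCT/Fubini argument for holomorphy of the integral --- coincides with what the paper does. The one genuine structural difference lies in how the integral is identified with $p_N$. The paper starts from the known fact (cited from Zolotarev) that the stable density admits an analytic extension $p_N(z)$ to the strip, proves separately that the Fourier integral $q(z)$ is analytic there, and then invokes the identity theorem to conclude $p_N(z) = q(z)$ since the two agree on $\Reals$. You instead bypass the external citation entirely: you construct the extension directly as the integral $g(z)$, prove its holomorphy, and use Fourier inversion (legitimate here since $\phi \in \mathbb{L}^1(\Reals)$) to show $g$ restricts to $p_N$ on the real axis, with uniqueness of the extension following from connectedness of the strip. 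Your route is more self-contained and arguably proves slightly more (it establishes, rather than assumes, that an analytic extension exists); the paper's route is shorter on the page because it outsources that existence to the literature. Your observation that the strip can be taken arbitrarily wide when $\alpha \in (1,2)$, with the width constraint $\eta < \gamma$ biting only at $\alpha = 1$, is a correct refinement that the paper's uniform condition $|y| < \eta \leq \gamma^{\alpha}$ does not make explicit.
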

  
\begin{proof}
  By definition, 
  \begin{equation*}
    p_N(x)  = \frac{1}{2 \pi} \int_{-\infty}^{\infty} e^{-ixt} \phi(t) dt,
  \end{equation*}
  where 
  \begin{eqnarray*}
    \phi(t) & = & \exp \left[i \delta t - \asexp{t} \right] \\
    \Phi(t) & = & \left\{ \begin{array}{ll} 
        \displaystyle \tan \left(\frac{\pi \alpha}{2} \right) \quad & \alpha \neq 1\\
        \displaystyle  -\frac{2}{\pi} \ln|t| \quad & \alpha = 1.
      \end{array} \right.
  \end{eqnarray*}
  
  Let $p_N(z)$ be the extension of $p_N(x)$ on $\Complex$. It is known
  that $p_N(z)$ is analytic on $\mathcal{S}_{\eta}$ (see~\cite{zolo}
  for example) . Now, define
  \begin{equation*}
    q(z)  = \frac{1}{2 \pi}\int_{-\infty}^{\infty} e^{-izt} \phi(t) dt,
  \end{equation*}
  for all $z = (x + i y) \in \Complex$. If we establish that $q(z)$ is
  analytic on $\mathcal{S}_{\eta}$ then by the identity theorem,
  $p_N(z) = q(z)$, for all $z \in \mathcal{S}_{\eta}$. We start by
  proving the continuity of $q(z)$:
  \begin{align}
    \lim_{z \rightarrow z_0} q(z) = & \lim_{z \rightarrow z_0} \frac{1}{2 \pi} \int_{-\infty}^{\infty} e^{-izt} \phi(t) dt \nonumber\\
    = & \frac{1}{2 \pi} \int \lim_{z \rightarrow z_0}  e^{-izt} \phi(t) dt  \label{intermilan}\\
    = &  \frac{1}{2 \pi} \int e^{-iz_0t} \phi(t) dt  = q(z_0).\nonumber
  \end{align}
  where the interchange in~(\ref{intermilan}) is justified by DCT
  since:
  \begin{equation*}
    \left|e^{-izt} \phi(t)\right| \leq e^{yt-|\gamma t|^{\alpha}},
  \end{equation*}
  which is integrable on $\mathcal{S}_{\eta}$ since $\eta$ is
  small-enough and chosen so that $|y| < \eta \leq
  \gamma^{\alpha}$. Now, let $\Delta \subset \mathcal{S}_{\eta}$ be a
  compact triangle and denote by $\partial \Delta$ its boundary and
  $|\Delta|$ its perimeter. We obtain
  \begin{align}
    \int_{\partial \Delta} \hspace{-0.2cm}q(z)dz & = \frac{1}{2 \pi} \int_{\partial \Delta} \int_{\Reals}  
    e^{-izt} \phi(t) \,dt\,dz \nonumber\\
    &=  \frac{1}{2 \pi} \int_{\Reals}\int_{\partial \Delta} \hspace{-0.2cm}e^{-izt} \phi(t) \,dz\,dt \label{fub0} \\
    &=  \int_{\Reals} \phi(t)\,\int_{\partial \Delta} \hspace{-0.2cm}e^{-izt}dz = 0, \nonumber
  \end{align}
  where the last equation is due to the fact that $e^{-izt}$ is
  entire. The interchange in~(\ref{fub0}) is valid by Fubini since
  \begin{equation*}
    \frac{1}{2 \pi} \int_{\partial \Delta} \int_{\Reals} \left|e^{-izt} \phi(t)\right| \,dt\,dz \, 
    \leq \, \frac{1}{2 \pi} \int_{\partial \Delta} \int_{\Reals}  e^{yt-|\gamma t|^{\alpha}} \,dt\,dz \,
    < \, \frac{|\Delta|}{2 \pi} \, \int_{\Reals}  e^{yt-|\gamma t|^{\alpha}}| \,dt \,
    < \, \infty.
  \end{equation*} 

  By applying Morera's Theorem~\cite[sec. 53]{church}, $q(z)$ is
  analytic on $\mathcal{S}_{\eta}$ and the result is established.
\end{proof}

Note that equation~(\ref{compext}) shows that $p_N (z) = p_{N'}
(z-\delta)$ where $N' \sim S(\alpha, \beta, \gamma, 0)$. Therefore, and
without loss of generality, we restrict our analysis in the remainder
of this section to $p_N(z)$, for $N \sim S(\alpha, \beta, \gamma, 0)$.

For $z = (x + iy)$,
\begin{align}
  p_N(z) & = \frac{1}{2 \pi}\int_{-\infty}^{\infty} e^{-izt - \asexp{t}} dt 
  = \frac{1}{2 \pi}\int_{-\infty}^{\infty} e^{-ixt + yt - \asexp{t}} dt \nonumber\\
  &= \frac{1}{2 \pi}\int_{-\infty}^{\infty} e^{-ixt - \asexp{t}} \sum_{n=0}^{\infty}\frac{y^n}{n!} \, t^n \,dt \nonumber\\
  &= \frac{1}{2 \pi} \sum_{n=0}^{\infty}\frac{y^n}{n!} \int_{-\infty}^{\infty} t^n e^{-ixt - \asexp{t}} \,dt. \label{ext}
  \end{align}
The interchange in~(\ref{ext}) is justified by DCT. Indeed,
\begin{equation*}
  \left| \sum_{n=0}^N \frac{y^n}{n!}\, t^n e^{-ixt - \asexp{t}} \right| \leq \sum_{n=0}^{\infty}
  \frac{|y|^n}{n!} \, |t|^n e^{-|\gamma t|^{\alpha}} = e^{|y||t|-|\gamma t|^{\alpha}},
\end{equation*}
which is integrable for $|y| < \eta \, \, (\leq \gamma^{\alpha})$ and
$\alpha \geq 1$.
Now we proceed to studying the rate of decay in two separate cases.

\subsection{Rate of Decay for $1 <\alpha <2$:}

In this case $\Phi(t)$ is a constant and it is equal to $\Phi(t) =
\Phi = \tan \left(\frac{\pi \alpha}{2} \right)$. Then, using equation~(\ref{ext}), we obtain by the change of
variable $u = \gamma t$
\begin{align}
 p_N(z) &= \frac{1}{2 \pi \gamma} \sum_{n=0}^{\infty}\frac{1}{n!} \left(\frac{y}{\gamma}\right)^n \int_{-\infty}^{\infty} t^n
  e^{-i\frac{x}{\gamma}t - \left[1-i\beta\sgn(t)\Phi\right]|t|^{\alpha}} \,dt \nonumber\\
  &= \frac{1}{2 \pi \gamma} \sum_{n=0}^{\infty}\frac{1}{n!} \left(\frac{y}{\gamma}\right)^n T_n \left( -\frac{x}{\gamma}; \beta \right), \label{ext1}
\end{align}
where $T_n(x;\beta)$ is a function defined as $\displaystyle
T_n(x;\beta) \eqdef \int_{-\infty}^{\infty} t^n e^{ixt -
  \left[1-i\beta\sgn(t)\Phi\right]|t|^{\alpha}} dt$.\footnote{Note
  that $T_n(-x;\beta) = (-1)^n T_n(x;-\beta)$ and that $p^{(n)}_N(x) =
  \frac{1}{2 \pi} \frac{(-i)^{n}}{
    \gamma^{n+1}}\,T_n(-\frac{x}{\gamma};\beta) = \frac{1}{2
    \pi}\frac{i^n}{\gamma^{n+1}}\,T_n(\frac{x}{\gamma};-\beta)$, $n
  \in \Naturals^*$.}
Define $k_1 = (1 - i\beta \Phi)$ and denote by $\overline{k}_1 = (1 + i\beta \Phi)$ its
conjugate. In what follows, we study the behavior of the function
$T_n(x;\beta)$.

For $n \geq 1$ and $x >0$, we have
\begin{align}
  x^{n+\alpha+1} T_n(x;\beta) = & \, x^{n+\alpha} \left[ \int_{0}^{\infty}  x t^n e^{ixt - k_1t^{\alpha}} dt 
    + (-1)^n  \int_{0}^{\infty}  x t^n e^{-ixt - \overline{k}_1t^{\alpha}} dt \right] \nonumber\\
  = & -i \, x^{n+\alpha} \left[ \int_{0}^{\infty} t^n\left(e^{ixt -k_1t^{\alpha}}\right)^{'}\,dt + k_1 \, \alpha \int_{0}^{\infty} t^{n + \alpha -1} 
    e^{ixt-k_1t^{\alpha}} dt \right.  \nonumber\\
  & \qquad  + (-1)^{n-1} \left. \int_{0}^{\infty} t^n\left(e^{-ixt -\overline{k}_1t^{\alpha}}\right)^{'}\,dt + (-1)^{n-1}\overline{k}_1 \, 
    \alpha \int_{0}^{\infty} t^{n + \alpha -1} e^{-ixt-\overline{k}_1t^{\alpha}} dt \right] \nonumber\\
  = &\, i n x^{n+\alpha} \left[ \int_{0}^{\infty} t^{n-1} e^{ixt-k_1t^{\alpha}} dt + (-1)^{n-1}\int_{0}^{\infty} t^{n-1} e^{-ixt-\overline{k}_1t^{\alpha}} dt\right] \nonumber\\
  & \quad - i \alpha x^{n+\alpha} \left[ k_1 \int_{0}^{\infty}  t^{n + \alpha -1} e^{ixt-k_1t^{\alpha}} dt + (-1)^{n-1}\overline{k}_1\int_{0}^{\infty} t^{n + \alpha -1} 
    e^{-ixt-\overline{k}_1t^{\alpha}} dt \right] \label{byparts} \\
  = & \, i \, n \, x^{n+\alpha} T_{n-1}(x;\beta) - i \alpha \biggl[ k_1S_{n}(x;k_1) 
  + (-1)^{n-1}\overline{k}_1\overline{S_{n}}(x;\overline{k}_1) \biggr], \label{init}
\end{align} 
where equation~(\ref{byparts}) is obtained by integration by parts and
regrouping, and where $\overline{S_{n}}(\cdot;\cdot)$ is the complex
conjugate of $S_{n}(\cdot;\cdot)$ defined as,
\begin{equation*}
  S_{n}(x; k_1) = x^{n+\alpha} \int_{0}^{\infty} t^{n + \alpha -1} e^{ixt-k_1t^{\alpha}} dt
  = c \int_{0}^{\infty} e^{iv^{c}-k_1\zeta v^{\alpha c}} dv,
\end{equation*}
where $c = \frac{1}{n+\alpha} \, (> 0)$, $\zeta = x^{-\alpha} \, (>
0)$ and the change of variable is $v=(xt)^{n+\alpha}$. As $x \to
\infty$, $\zeta \to 0^{+}$ and hence
\begin{align}
  \lim_{x \to +\infty} S_{n}(x; k_1) & = c \lim_{\zeta \rightarrow 0^+} \int_{0}^{\infty} e^{iv^{c}-k_1\zeta v^{\alpha c}} \,dv 
  = c \lim_{\zeta \rightarrow 0^+} \int_{0}^{\infty} \lim_{\theta \rightarrow 0} e^{iv^{c}e^{ic \theta}-k_1\zeta 
    v^{\alpha c}e^{i \alpha c \theta} + i \theta}\,dv \nonumber\\
  &= c \lim_{\zeta \rightarrow 0^+} \lim_{\theta \rightarrow 0} \int_{0}^{\infty}  e^{iv^{c}e^{ic \theta}-k_1\zeta 
    v^{\alpha c}e^{i \alpha c \theta} +i\theta}\,dv \label{interchange1}\\
  &= c  \lim_{\theta \rightarrow 0} \lim_{\zeta \rightarrow 0^+} \int_{0}^{\infty}  e^{iv^{c}e^{ic \theta}-k_1\zeta 
    v^{\alpha c}e^{i \alpha c \theta} +i\theta}\,dv \label{interchange2}\\
  &= c \lim_{\theta \rightarrow 0} \int_{0}^{\infty}  e^{iv^{c}e^{ic \theta} + i\theta} \,dv \label{interchange3}\\
  &= c \lim_{\theta \rightarrow 0} \, \, \lim_{R \rightarrow \infty, \rho \rightarrow 0} \int_{\text{L}_1} e^{iz^{c}} \,dz, \nonumber
\end{align}
where $z=ve^{i \theta}$ and $\text{L}_1 = \{z \in \Complex : z = ve^{i
  \theta}, \, 0 < \rho \leq v \leq R
\}$. Equation~(\ref{interchange1}) is justified by DCT since:
\begin{equation*}
  \left| e^{iv^{c}e^{ic \theta}-k_1\zeta v^{\alpha c}e^{i \alpha c \theta} +i\theta} \right| \leq e^{-v^{c}\sin(c \theta)-\zeta 
    v^{\alpha c}\left[\cos(\alpha c \theta) + \beta \Phi \sin (\alpha c \theta)\right]} \leq e^{-\frac{\zeta}{2} v^{\alpha c}},
\end{equation*}
for small-enough $\theta$, and the upper-bound is integrable since $c$
and $\zeta$ are positive. The last inequality is justified by virtue
that $\sin(c \theta) > 0$ and $\left[\cos(\alpha c \theta) + \beta
  \tan \frac{\alpha \pi}{2} \sin (\alpha c \theta)\right] >
\frac{1}{2}$ for small positive $\theta$.
Similarly,~(\ref{interchange3}) is justified because the integrand
in~(\ref{interchange2}) is $O(e^{-v^{c} \sin c \theta})$ as $\zeta
\rightarrow 0^+$ which is also integrable. The interchange between the
two limits in~(\ref{interchange2}) is valid by the preceding argument
as long as the result in~(\ref{interchange3}) is finite. To evaluate
the limit of $\int_{\text{L}_1} e^{iz^{c}} dz$ as $R \rightarrow
\infty$, $\rho \rightarrow 0$, we use contour integration over
$\mathcal{C}$ shown in Figure~\ref{fig:curve}.

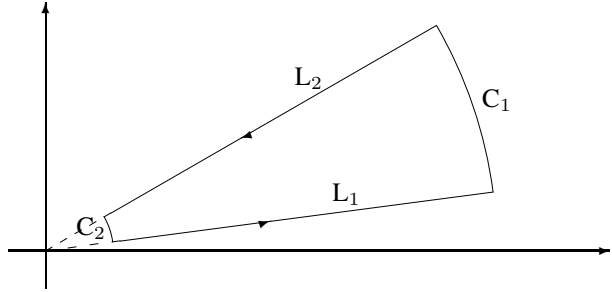
\begin{figure}[!htb]
  \begin{center}
    \setlength{\unitlength}{1cm}
    
    \begin{picture}(8,3.8)(-0.5,-0.5)
      \put(-0.5,0){\vector(1,0){8}}
      \put(0,-0.5){\vector(0,1){3.8}}
      
      \drawline(0.99144,0.13053)(5.9487,0.78316)
      \put(2.9743,0.39158){\vector(3,1){0}}
      \drawline(0.86603,0.50000)(5.1962,3.0000)
      \put(2.5981,1.5000){\vector(-3,-1){0}}
      \put(4,0.75){\makebox(0,0){$\text{L}_1$}}
      \put(3.5,2.3){\makebox(0,0){$\text{L}_2$}}
      
      \dashline{0.125}(0,0)(0.99144,0.13053)
      \dashline{0.125}(0,0)(0.86603,0.50000)
      
      \arc{12}{5.7596}{6.1523}
      \put(6,2){\makebox(0,0){$\text{C}_1$}}
      \arc{1.8}{5.7496}{6.1523}
      \put(0.8,0.3){\makebox(0,0)[r]{$\text{C}_2$}}
    \end{picture}
    \caption{The contour $\mathcal{C}$. \label{fig:curve}}
  \end{center}
\end{figure}

The arcs $\text{C}_1$ and $\text{C}_2$ are of radius $R$, and $\rho$
respectively and are between angles $\theta$ and $\varphi
\buildrel\triangle\over = \frac{\pi}{2 c} \bmod 2\pi$. Note that
since we are interested in the limit as $\theta$ goes to zero, we can
always choose it small enough in order to have the contour
counter-clockwise. Finally, $\text{L}_2$ is a line connecting the
extremities of the arcs.

Now since $f(z) = e^{iz^{c}}$ is analytic on
and inside $\mathcal{C}$ (by choosing an appropriate branch cut in the
plane), by Cauchy's Theorem~\cite[p.111 Sec.2.2]{bc},
\begin{equation*}
  0 = \oint_{\mathcal{C}} f(z) \,dz = \int_{\text{L}_1} \hspace{-3pt} f(z) + \int_{\text{C}_1} \hspace{-3pt} f(z)
  + \int_{\text{L}_2} \hspace{-3pt} f(z) + \int_{\text{C}_2} \hspace{-3pt} f(z).
\end{equation*}

On ${\text{C}_1}$, we have:
\begin{align*}
  &\lim_{R \rightarrow \infty} \left| \int_{\text{C}_1} f(z) dz \right|
  =\lim_{R \rightarrow \infty} \left| \int_{\theta}^{\varphi} i R e^{i \phi} e^{iR^{c} e^{i c \phi}} \, d\phi \right|
  \leq \lim_{R \rightarrow \infty}\int_{\theta}^{\varphi}  R e^{-R^{c} \sin(c \phi)} d\phi
  = \int_{\theta}^{\varphi}  \lim_{R \rightarrow \infty} R e^{-R^{c} \sin(c \phi)} d\phi = 0,
\end{align*}
where the interchange is valid because $R e^{-R^{c} \sin(c \phi)}$ is
decreasing as $0<c \theta \leq c \phi \leq \frac{\pi}{2}$. 
Similarly, on ${\text{C}_2}$,
\begin{align*}
  &\lim_{\rho \rightarrow 0} \left| \int_{\text{C}_2} f(z) dz \right|
  =\lim_{\rho \rightarrow 0} \left| \int_{\theta}^{\varphi} i \rho e^{i \phi} e^{i\rho^{c}e^{ic \phi}} \, d\phi \right|
  \leq \lim_{\rho \rightarrow 0}\int_{\theta}^{\varphi} \rho e^{-\rho^{c} \sin(c \phi)} d\phi
  = \int_{\theta}^{\varphi}  \lim_{\rho \rightarrow 0} \rho e^{-\rho^{c} \sin(c \phi)} d\phi = 0,
\end{align*}
where we justify the interchange by virtue of the fact that $\rho
e^{-\rho^{c} \sin(c \phi)}$ is bounded for small values of $\rho$. It
remains to evaluate the integral on ${\text{L}_2}$ where $z=te^{i
  \frac{\pi}{2 c}}$,
\begin{equation*}
  \lim_{R \rightarrow \infty, \rho \rightarrow 0}  \, \int_{\text{L}_2} f(z) dz  = -\int_{0}^{\infty} e^{i \frac{\pi}{2 c}} e^{it^{c}e^{i \frac{\pi}{2}}} \, dt
  = -e^{i \frac{\pi}{2 c}} \int_{0}^{\infty}  e^{-t^{c}} \, dt = -e^{i \frac{\pi}{2 c}} \,\frac{1}{c}  \, \Gamma \left(\frac{1}{c}\right).
\end{equation*}
In conclusion,
\begin{equation*}
  \lim_{R \rightarrow \infty, \rho \rightarrow 0} \int_{\text{L}_1} f(z) \,dz = e^{i \frac{\pi}{2 c}} \,\frac{1}{c}\, \Gamma \left(\frac{1}{c}\right),
\end{equation*}
which implies that
\begin{equation*}
  \lim_{x \to +\infty} S_{n}(x; k_1) =  e^{i \frac{\pi}{2}(n+\alpha)} \Gamma(n+\alpha),
\end{equation*}
and by~(\ref{init}), we can write for $n \geq 1$
\begin{multline*}
  \lim_{x \rightarrow +\infty} \left[ x^{n+\alpha +1} T_n(x;\beta) - i \,n x^{n+\alpha} T_{n-1}(x;\beta) \right] \\
  = W_n(\beta) \eqdef -i \alpha \Gamma(n+\alpha) \biggl[ k_1e^{i \frac{\pi}{2}(n+\alpha)}  + (-1)^{n-1} \overline{k}_1 
 e^{-i \frac{\pi}{2}(n+\alpha)} \biggr],
\end{multline*}
which implies that $\displaystyle U_n(\beta) \eqdef \lim_{x
  \rightarrow +\infty} x^{n+\alpha +1} T_n(x;\beta)$ is a well defined
quantity because
\begin{equation*}
  U_0(\beta) = \lim_{x \rightarrow +\infty} \left[x^{\alpha +1} T_0(x;\beta)\right] = 2 \pi \gamma \lim_{x \rightarrow +\infty} \left[x^{\alpha +1} p_N(-\gamma x) \right],
\end{equation*}
exists --and is non zero for $\beta \neq 1$ and $U_0(1) = 0$, and
\begin{equation*}
  U_n(\beta) = \, inU_{n-1}(\beta) + W_n(\beta) 
  = n! \left[ i^n \, U_0(\beta) + \sum_{k=0}^{n-1}\frac{i^k}{(n-k)!}W_{n-k}(\beta)\right].
\end{equation*}
Furthermore, for $n \geq 0$,
\begin{eqnarray}
  \left| U_n(\beta) \right| & \leq & n! \left[ \left| U_0(\beta) \right| + \sum_{k=0}^{n-1} \frac{ \left| W_{n-k}(\beta) \right| } {(n-k)!}\right]
  \, \leq \,  n! \left[ \left| U_0(\beta) \right| + 2\alpha|k_1| \sum_{k=0}^{n-1} \frac{\Gamma(n +\alpha -k)}{(n-k)!} \right] \nonumber\\
  & \leq & n! \left[ \left| U_0(\beta) \right| + 4|k_1| \sum_{k=0}^{n-1} \frac{\Gamma(n+2-k)}{(n-k)!} \right] \label{tttt}\\
  & = & n! \left[ \left| U_0(\beta) \right| + 4 |k_1|\sum_{k=0}^{n-1} (n+1-k) \right] = 2\,n! \, \left(|k_1|n^2 + 3|k_1|n 
    + \frac{\left| U_0(\beta) \right|}{2}   \right)\nonumber,
\end{eqnarray} 
where equation~(\ref{tttt}) is justified using the fact that
$0 < \alpha < 2$ and $\Gamma(\alpha + l)$ is increasing in $\alpha > 0$ for $l \in
\Naturals^*$.

Now using equation~(\ref{ext1}),
\begin{align}
  & \lim_{x \rightarrow \infty} x^{\alpha + 1}\, \left| p_N(z) \right| 
  = \frac{1}{2 \pi \gamma} \lim_{x \rightarrow \infty}x^{\alpha +1} \left|  \sum_{n=0}^{\infty}\frac{1}{n!} \left(\frac{y}{\gamma}\right)^n 
    T_n \left( -\frac{x}{\gamma}; \beta \right) \right| \nonumber\\
  &=\frac{1}{2 \pi \gamma} \left| \sum_{n=0}^{\infty} \frac{1}{n!}\left(\frac{y}{\gamma}\right)^n \lim_{x \rightarrow \infty} x^{\alpha +1}
    T_n\left(-\frac{x}{\gamma};\beta \right) \right| \label{interchange4}\\
  &\leq \frac{1}{2 \pi \gamma} \sum_{n=0}^{\infty}\frac{1}{n!}\left|\frac{y}{\gamma}\right|^n\lim_{x
    \rightarrow \infty}x^{\alpha +1}\left| T_n\left(\frac{x}{\gamma};-\beta\right) \right| \nonumber \\
  &\leq \frac{1}{2 \pi \gamma} \sum_{n=0}^{\infty}\frac{1}{n!}\left|\frac{y}{\gamma}\right|^n\lim_{x
    \rightarrow \infty}x^{n+\alpha +1}\left|T_n\left(\frac{x}{\gamma};-\beta\right)\right| 
  =\frac{1}{2 \pi \gamma} \sum_{n=0}^{\infty}\frac{1}{n!}\left|\frac{y}{\gamma}\right|^n \gamma^{n + \alpha +1} \left|U_n(-\beta)\right| \nonumber\\
  &\leq \frac{\gamma^{\alpha}}{ \pi } \sum_{n=0}^{\infty}|y|^n  \left(|k_1|n^2 + 3|k_1|n + \frac{\left|U_0(-\beta)\right|}{2} \right), \nonumber
\end{align} 
which is finite because $|y| < \eta$ which is small-enough (and
assumed to be less than one), and where we used the fact that $f(x) =
|x|$ is continuous. The interchange in~(\ref{interchange4}) is valid
because the end result is finite.

In conclusion, $\displaystyle \lim_{x \rightarrow _+\infty} x^{\alpha
  + 1}\,|p_N(z)| < \infty$ which concludes our proof. 


\subsection{Rate of Decay for $\alpha = 1$:}

In this case, $\Phi(t) = -\frac{2}{\pi} \log|t|$ is a function of
$t$. According to equation~(\ref{ext}) and for $z = x + iy$,
\begin{equation}    
  p_N(z) = \frac{1}{2 \pi} \sum_{n=0}^{\infty}\frac{y^n}{n!} \int_{-\infty}^{\infty} t^n e^{-ixt-\gamma \bigl[1 - i \beta \sgn(t) \Phi (t) \bigr] |t|} \,dt.
  \label{extalpha1}
\end{equation}

Once more, we study the behavior of the integral $I_n(x) =
\int_{-\infty}^{\infty} t^n e^{-ixt-\gamma \left(1 - i \beta \sgn(t)
    \Phi \right) |t|} \,dt$ as
$x \rightarrow \infty$ for $n \geq 0$. We note that $I_0(x)= 2 \pi
p_N(x;1,\beta,\gamma,0)$ which is $\Theta\left(\frac{1}{x^2}\right)$. 
For $n \geq 1$,
\begin{align}
  I_n(x) & = \int_{-\infty}^{+\infty} t^n e^{-ixt-\gamma \left(1 - i \beta \sgn(t) \Phi \right) |t|} \,dt \nonumber\\
  &= \int_{0}^{+\infty} t^n e^{-ixt- \gamma \left(1 + i \frac{2}{\pi} \beta \log(t) \right) t} \,dt + \int_{-\infty}^{0} t^n e^{-ixt- \gamma \left(1 - i \frac{2}{\pi} \beta
      \log(-t) \right) (-t)} \,dt \nonumber\\
  &= \int_{0}^{+\infty} e^{-ixt}\, t^n e^{- \gamma \left(1 + i \frac{2}{\pi} \beta \log(t) \right) t} \,dt + (-1)^n \int_{0}^{+\infty} e^{ixt}\,t^n e^{- \gamma 
    \left(1 - i \frac{2}{\pi} \beta  \log(t) \right) t} \,dt \nonumber\\
  &=\left[-\frac{1}{ix}e^{-ixt}\,t^n e^{- \gamma \left(1 + i \frac{2}{\pi} \beta \log(t) \right) t}\right]_{0}^{+\infty}  +(-1)^n \left[\frac{1}{ix}e^{ixt}\,
    t^n e^{- \gamma \left(1 - i \frac{2}{\pi} \beta \log(t) \right) t} \right]_{0}^{+\infty} \nonumber\\ 
  &\quad + \frac{1}{ix}\int_{0}^{+\infty}e^{-ixt}\, \bigl[ nt^{n-1}-\gamma t^n - i\frac{2}{\pi}\beta \gamma t^n - i\frac{2}{\pi}\beta \gamma t^n \log(t) \bigr] 
  \,e^{- \gamma \left(1 + i \frac{2}{\pi} \beta \log(t) \right) t}\,dt \nonumber\\
  &\quad + \frac{(-1)^{n+1}}{ix}\int_{0}^{+\infty}e^{ixt}\,\bigl[nt^{n-1}-\gamma t^n + i\frac{2}{\pi}\beta \gamma t^n + i\frac{2}{\pi}\beta \gamma 
  t^n \log(t) \bigr] \,e^{- \gamma \left(1 - i \frac{2}{\pi} \beta \log(t) \right) t}\,dt \label{bypartsE1}\\
  &= \frac{1}{ix}\int_{0}^{+\infty}e^{-ixt}\, \bigl[ nt^{n-1}-\gamma t^n - i\frac{2}{\pi}\beta \gamma t^n - i\frac{2}{\pi}\beta \gamma t^n \log(t) \bigr]\, 
  e^{- \gamma \left(1 + i \frac{2}{\pi} \beta \log(t) \right) t}\,dt \nonumber\\
  &\quad +\frac{(-1)^{n+1}}{ix}\int_{0}^{+\infty}e^{ixt}\,\bigl[nt^{n-1}-\gamma t^n + i\frac{2}{\pi}\beta \gamma t^n + i\frac{2}{\pi}\beta \gamma t^n 
  \log(t) \bigr] \,e^{- \gamma \left(1 - i \frac{2}{\pi} \beta \log(t) \right) t}\,dt \nonumber\\
  &=\left[\frac{1}{x^2}e^{-ixt}\,\left[ nt^{n-1}-\gamma t^n - i\frac{2}{\pi}\beta \gamma t^n - i\frac{2}{\pi}\beta \gamma t^n \log(t) \right] \, 
    e^{- \gamma \left(1 + i \frac{2}{\pi} \beta \log(t) \right) t}\right]_{0}^{+\infty}\nonumber\\
  &\qquad \qquad \qquad \qquad \qquad \qquad -\frac{1}{x^2}\int_{0}^{+\infty}e^{-ixt}g_n(t)\,dt\nonumber\\
  &\quad +(-1)^{n+1} \left[-\frac{1}{x^2}e^{ixt}\,\left[ nt^{n-1}-\gamma t^n + i\frac{2}{\pi}\beta \gamma t^n + i\frac{2}{\pi}\beta \gamma t^n \log(t) \right]
    \, e^{- \gamma \left(1 - i \frac{2}{\pi} \beta \log(t) \right) t}\right]_{0}^{+\infty}\nonumber\\ 
  &\qquad \qquad \qquad \qquad \qquad \qquad + \frac{(-1)^{n+1}}{x^2}\int_{0}^{+\infty}e^{ixt}h_n(t)\,dt \label{bypartsE2}\\
  &=\frac{1}{x^2}\left((-1)^{n+1}\int_{0}^{+\infty}e^{ixt}\,h_n(t)\,dt - \int_{0}^{+\infty}e^{-ixt}g_n(t)\,dt\right)\label{decEa1}
\end{align}
where equations~(\ref{bypartsE1}) and~(\ref{bypartsE2}) are due to
integration by parts. The functions $g_n(\cdot)$ and $h_n(\cdot)$, $n
\geq 1$ are defined on $\Reals^{+*}$ and are given by:

\begin{align}
  &g_n(t) = \left[n(n-1)t^{n-2} -2n \gamma t^{n-1} + (\gamma^2 - \frac{4}{\pi^2}\beta^2 \gamma^2 + i \frac{4}{\pi}\beta \gamma^2)t^n 
    + (-\frac{8}{\pi^2}\beta^2 \gamma^2 + i \frac{4}{\pi}\beta \gamma^2)t^n \log(t) \right. \nonumber\\
  &\quad \quad  \left. -i\frac{2}{\pi}(2n+1)\beta \gamma t^{n-1} - i\frac{4}{\pi}n\beta \gamma t^{n-1}\log(t) - \frac{4}{\pi^2}\beta^2 
    \gamma^2 t^n \log^2(t)\right] e^{- \gamma \left(1 + i \frac{2}{\pi} \beta \log(t) \right) t}.\label{gexp} 
\end{align}
The term $n(n-1)t^{n-2}$ is equal to zero when $n = 1$ and $h_n(t)$ is
deduced from $g_n(t)$ by replacing $\beta$ by $-\beta$. The functions
$g_n(t)$, $h_n(t)$ are $\mathbb{L}^{1}(\Reals^+)$ functions and hence
by Riemann-Lebesgue~\cite[p.3 sec.2 th.1]{bochner49} their
$\mathbb{L}^{1}(\Reals^+)$ Fourier transforms are $o(1)$. Therefore
equation~(\ref{decEa1}) is $o(\frac{1}{x^2})$. Equivalently, $I_n(x) =
o(\frac{1}{x^2})$ as $x \rightarrow \infty$ for all $n \geq 1$. Now
using equation~(\ref{extalpha1}) we obtain:
\begin{align}    
  &\lim_{x \rightarrow \infty} 2\pi x^2|p_N(z)|\nonumber\\
  &= \lim_{x \rightarrow \infty}  \left|\sum_{n=0}^{\infty}\frac{y^n}{n!}x^2I_n(x)\right| \nonumber\\
  &= \lim_{x \rightarrow \infty} \left|2 \pi x^2 p_N(x) + \sum_{n=1}^{\infty}\frac{y^n}{n!}\left((-1)^{n+1}\int_{0}^{+\infty}e^{ixt}h_n(t)\,dt 
      - \int_{0}^{+\infty}e^{-ixt}g_n(t)\,dt\right)\right| \nonumber\\
  &\leq \lim_{x \rightarrow \infty} 2 \pi x^2 p_N(x) + \lim_{x \rightarrow \infty}\sum_{n=1}^{\infty}\frac{|y|^n}{n!}\left(\int_{0}^{+\infty}|h_n(t)|\,dt 
    + \int_{0}^{+\infty}|g_n(t)|\,dt\right) \nonumber\\
  &=\lim_{x \rightarrow \infty} 2 \pi x^2 p_N(x) + \sum_{n=1}^{\infty}\frac{|y|^n}{n!} \int_{0}^{+\infty}(|h_n(t)|+|g_n(t)|) \, dt \nonumber\\  
  &= \lim_{x \rightarrow \infty} 2 \pi x^2 p_N(x) + \int_{0}^{+\infty} \sum_{n=1}^{\infty}\frac{|y|^n}{n!}(|h_n(t)|+|g_n(t)|) \, dt \label{oarg}
\end{align} 
The interchange in~(\ref{oarg}) is valid since:
\begin{align}
  & \sum_{n=1}^{\infty}\frac{|y|^n}{n!}(|h_n(t)|+|g_n(t)|)\nonumber\\
  &\leq \sum_{n=1}^{\infty}\frac{|y|^n}{n!}\left[A_1n(n-1)t^{n-2} + A_2t^n + A_3t^n |\log(t)| + A_4(2n+1)t^{n-1}\right.\nonumber\\
  &\qquad \qquad \qquad \qquad \qquad \qquad \left. + A_5 n t^{n-1}|\log(t)| + A_6 t^n \log^2(t)\right] e^{- \gamma t}\nonumber\\
  &\leq e^{- \gamma t} \sum_{n=1}^{\infty}\frac{|y|^n}{n!}\left[ A_1n(n-1)t^{n-2} + (A_2 + A_3 |\log(t)| + A_6\log^2(t))t^n \right.\nonumber\\
  &\qquad \qquad \qquad \qquad \qquad \qquad \quad \left. + n(3 A_4   + A_5  |\log(t)|)t^{n-1}\right] \label{simine}\\
  &\leq e^{- \gamma t}\left[A_1y^2e^{|y|t} + (A_2 + A_3|\log(t)|+A_6\log^2(t))(e^{|y|t}-1) +|y|(3A_4 + A_5)e^{|y|t} \right] \nonumber\\
  &\leq e^{- (\gamma - |y|) t}\left[A_2 + (3A_4 + A_5)|y| + A_1 y^2 + A_3|\log(t)|+A_6\log^2(t) \right] \nonumber
\end{align}
which is integrable on $[0,+\infty[$ since $|y| < \eta \, (<
\gamma)$. The $A_{i}$s, $1 \leq i \leq 6$ are positive constants
function of $\beta$, $\gamma$ and can be derived from the expression
of $g_n(t)$ (equation~\ref{gexp}) and from that of $h_n(t)$
accordingly after taking the norm of each term in those
expressions. To write equation~(\ref{simine}), we used the obvious
inequality $2n+1 \leq 3n$ whenever $n \geq 1$. Back to~(\ref{oarg}),
\begin{align}    
  \lim_{x \rightarrow \infty} 2\pi x^2|p_N(z)|
  &\leq \lim_{x \rightarrow \infty} 2 \pi x^2 p_N(x) + \int_{0}^{+\infty} \sum_{n=1}^{\infty}\frac{|y|^n}{n!}(|h_n(t)|+|g_n(t)|) \, dt \nonumber\\
  &\leq \lim_{x \rightarrow \infty} 2 \pi x^2 p_N(x) + \int_{0}^{+\infty}l(t)\,dt \nonumber
\end{align}
where $l(t) = e^{- (\gamma - |y|) t}\left[A_2 + (3A_4 + A_5)|y| + A_1
  y^2 + A_3|\log(t)|+A_6\log^2(t) \right]$. Since $\lim_{x \rightarrow
  \infty} 2 \pi x^2 p_N(x)$ and $\int_{0}^{+\infty}l(t)\,dt$ are both
finite and non zero when $|y| < \eta \, (< \gamma)$, then
$\displaystyle 0 \leq \lim_{x \rightarrow \infty} 2\pi x^2|p_N(z)| <
\infty$ and $|p_N(z)| = O \left( \frac{1}{|\Re(z)|^{2}} \right)$ as
$\Re(z) \rightarrow \infty$ whenever $z \in \mathcal{S}_{\eta}$.

\section*{Acknowledgments}
The authors would like to thank Professor Aslan Tchamkerten for suggesting the idea of the converse.


\bibliographystyle{IEEEtran}
\bibliography{paper-1}

\end{document}